\colorlet{MyBlue}{DodgerBlue!60!Black}
\colorlet{MyGreen}{DarkGreen!85!Black}
\newcommand{\debug}[1]{#1}
\DeclarePairedDelimiter{\braces}{\{}{\}}
\DeclarePairedDelimiter{\bracks}{[}{]}
\DeclarePairedDelimiter{\parens}{(}{)}
\DeclarePairedDelimiter{\abs}{\lvert}{\rvert}
\DeclarePairedDelimiter{\floor}{\lfloor}{\rfloor}
\DeclarePairedDelimiterX{\braket}[2]{\langle}{\rangle}{#1,#2}
\DeclarePairedDelimiterX{\inner}[2]{\langle}{\rangle}{#1,#2}
\DeclarePairedDelimiterX{\setdef}[2]{\{}{\}}{#1:#2}
\DeclarePairedDelimiterXPP{\probof}[1]{\prob}{(}{)}{}{%
	#1}
\DeclarePairedDelimiterXPP{\exof}[1]{\ex}{[}{]}{}{%
	#1}
\DeclareMathOperator{\bigoh}{\mathscr{O}}
\DeclareMathOperator{\cupdot}{\dot\cup}
\DeclareMathOperator{\ex}{\debug{\mathbb{E}}}
\DeclareMathOperator{\expect}{\debug{\mathbb{E}}}
\DeclareMathOperator{\prob}{\debug{\mathbb{P}}}
\newcommand{\naturals}{\mathbb{\debug N}}
\newcommand{\reals}{\mathbb{\debug R}}
\newcommand{\simplex}{\debug \Sigma}
\newcommand{\run}{\debug k}
\newcommand{\subrun}{\debug m}
\newcommand{\event}{\debug A}
\newcommand{\initial}{\debug \mu}
\newcommand{\transit}{\debug \pi}
\newcommand{\transitvec}{\boldsymbol{\transit}}
\newcommand{\transitvecalt}{\transitvec'}
\newcommand{\transitmatrix}{\transitvec}
\newcommand{\stationary}{\debug \rho}
\newcommand{\proba}{\debug p}
\newcommand{\horizon}{\debug T}
\newcommand{\per}{\debug t}
\newcommand{\graph}{\mathcal{\debug G}}
\newcommand{\vertices}{\mathcal{\debug V}}
\newcommand{\edge}{\debug e}
\newcommand{\edges}{\mathcal{\debug E}}
\newcommand{\hamil}{\mathcal{\debug H}}
\newcommand{\unicycles}{\mathcal{\debug U}}
\newcommand{\tree}{\debug \tau}
\newcommand{\trees}{\mathcal{\debug T}}
\newcommand{\neighborsout}{\mathcal{\debug N}^{+}}
\newcommand{\completegraph}{\debug K}
\newcommand{\cyclegraph}{\debug C}
\newcommand{\ringstargraph}{\debug W}
\newcommand{\treeweight}{\debug \Omega}
\newcommand{\game}{\debug \Gamma}
\newcommand{\bhgame}{\widehat{\game}}
\newcommand{\play}{\debug i}
\newcommand{\playalt}{\debug j}
\newcommand{\playaltalt}{\debug k}
\newcommand{\nPlayers}{\debug n}
\newcommand{\act}{\debug s}
\newcommand{\actalt}{\act'}
\newcommand{\actprof}{\boldsymbol{\act}}
\newcommand{\actprofalt}{\debug{\actprof'}}
\newcommand{\actions}{\mathcal{\debug {S}}}
\newcommand{\sactions}{\debug \Xi}
\newcommand{\cost}{\debug c}
\newcommand{\costs}{\boldsymbol{\cost}}
\newcommand{\potential}{\debug \Psi}
\DeclareMathOperator{\NE}{\mathsf{\debug {NE}}}
\DeclareMathOperator{\PoA}{\mathsf{\debug {PoA}}}
\DeclareMathOperator{\PoS}{\mathsf{\debug {PoS}}}
\DeclareMathOperator{\SC}{\mathsf{\debug {SC}}}
\newcommand{\absorb}{\mathsf{\debug{P}}}
\newcommand{\buck}{\debug \Theta}
\newcommand{\countrc}{\debug{\ell}} 
\newcommand{\countrcalt}{\debug h}
\newcommand{\dumping}{\debug \alpha}
\newcommand{\halfplay}{\debug{\widetilde{m}}}
\newcommand{\indunic}{\debug \delta}
\newcommand{\length}{\debug \Lambda}
\newcommand{\markov}{\debug X}
\newcommand{\numrc}{\debug M} 
\newcommand{\numroottrees}{\debug R}
\newcommand{\pex}{\debug p}
\newcommand{\qex}{\debug q}
\newcommand{\probalt}{\mathbb{\debug Q}} 
\newcommand{\PRact}{\debug b} 
\newcommand{\PRactprof}{\boldsymbol{\PRact}} 
\newcommand{\PRactions}{\mathcal{\debug{B}}} 
\newcommand{\PRstr}{\debug \nu}
\newcommand{\PRvec}{\boldsymbol{\PRstr}} 
\newcommand{\PRtrans}{\debug{Q}} 
\newcommand{\rc}{\mathcal{\debug{C}}} 
\newcommand{\residual}{\mathcal{\debug{R}}} 
\newcommand{\scc}{\mathcal{\debug{D}}} 
\newcommand{\skeleton}{\widehat{\tree}}
\newcommand{\tc}{\mathcal{\debug{A}}} 
\newcommand{\weight}{\debug \omega}
\newcommand{\xomega}{\debug x}
\newacro{eNE}[$\varepsilon$-NE]{$\varepsilon$-Nash equilibrium}
\newacro{NE}{Nash equilibrium}
\newacro{PNE}{pure Nash equilibrium}
\newacro{PFNE}{prior-free Nash equilibrium}
\newacro{WPFNE}{weakly prior-free Nash equilibrium}
\newacro{WE}{Wardrop equilibrium}
\newacro{SO}{socially optimum}
\newacro{KKT}{Karush\textendash Kuhn\textendash Tucker}
\newacro{OD}[O/D]{origin-destination}
\newacro{PoA}{price of anarchy}
\newacro{PoS}{price of stability}
\newacro{PoCS}{price of correlated stability}
\newacro{BPR}{bureau of public roads}
\newacro{FIP}{finite improvement property}
\newacro{eFIP}[$\varepsilon$-FIP]{$\varepsilon$-finite improvement property}
\newacro{BPG}{buck-passing game}
\newacro{SON}{self-organizing network}
\newacro{BHG}{buck-holding game}
\newacro{DBPG}{deterministic buck-passing game}
\newacro{DBHG}{deterministic buck-holding game}
\newacro{SBPG}{stochastic buck-passing game}
\newacro{CBPG}{constrained buck-passing game}
\newacro{CBHG}{constrained buck-holding game}
\newacro{MDBPG}{mixed extension of the deterministic buck-passing game}
\newacro{RWBG}{Random Walk Biasing Game}
\newacro{PR}{PageRank Game}
\begin{document}


 \RUNAUTHOR{Cominetti, Quattropani, and Scarsini}

 \RUNTITLE{The Buck-Passing Game}

 \TITLE{The Buck-Passing Game}

\ARTICLEAUTHORS{%
\AUTHOR{Roberto Cominetti}
\AFF{Facultad de Ingenier\'ia y Ciencias, Universidad Adolfo Ib\'a\~nez, 7941169 Santiago, Chile, \EMAIL{roberto.cominetti@uai.cl} 
\URL{}}
\AUTHOR{Matteo Quattropani}
\AFF{Dipartimento di Economia e Finanza, Luiss University,  00197 Roma, Italy, \EMAIL{mquattropani@luiss.it} 
\URL{}}

\AUTHOR{Marco Scarsini}
\AFF{Dipartimento di Economia e Finanza, Luiss University,  00197 Roma, Italy, \EMAIL{marco.scarsini@luiss.it} 
\URL{}}
} 

\ABSTRACT{%

We consider a game in which players are the vertices of a directed graph.
Initially, Nature chooses one player according to some fixed distribution and gives her a buck, which represents the request to perform a chore.
After completing the task, the player passes the buck to one of her out-neighbors in the graph. 
The procedure is repeated indefinitely and each player's cost is the asymptotic expected frequency of times that she receives the buck. 
We consider a deterministic and a stochastic version of the game depending on how players select the neighbor to pass the buck. 
In both cases we prove the existence of pure equilibria that do not depend on the initial distribution; this is achieved by showing the existence of a generalized ordinal potential. 
We then use the \acl{PoA} and \acl{PoS} to measure fairness of these equilibria.
We also study a buck-holding variant of the game in which players want to maximize the frequency of times they hold the buck, which includes the PageRank game as a special case.

}%


\KEYWORDS{prior-free equilibrium, generalized ordinal potential game, finite improvement property, fairness of equilibria, price of anarchy, price of stability, Markov chain tree theorem,  PageRank, PageRank game.}
\MSCCLASS{Primary: 91A43; secondary: 91A06, 60J10}
\ORMSCLASS{Primary: Games/group decisions: noncooperative; secondary: probability:
Markov processes}
\HISTORY{}
\maketitle
%
%
%
%

%
%

\section{Introduction.}
\label{se:intro}

An important chapter of social network analysis concerns reputation systems. 
In the World Wide Web, a hyperlink from one webpage to another is seen as a positive rating of the page that receives the link. 
One of the most successful systems to rank webpages in terms of their reputation is the PageRank algorithm of \citet{BrinPage:1998}, where the rank of each page is determined by the stationary measure of a Markov chain 
whose transition matrix depends on the hyperlinks of the various pages and on a small parameter that represents the probability of jumping to a random page that is not out-linked from the current page.  
The original idea of PageRank was, in the words of its authors, to give a webpage {\em ``an objective measure of its citation importance that corresponds well with people's subjective idea of importance''}.
As some authors point out \citep[see e.g.,][]{AvrLit:SM2006,deKNinvan:LAA2008}, the system is manipulable and reputation of a webpage can be enhanced by strategically choosing its out-links.
The paper by \citet{HopShe:mimeo2008} models this as a game where webpages are players whose strategies are the webpages to which they link, and whose goal is to maximize their reputation ranking.

The PageRank game is a particular instance of a larger class of games where players are vertices of a directed graph, the strategies of each player are its out-neighbors, and the payoff of each player is given by the stationary distribution of a Markov chain induced by the strategy profile. 
Each webpage is usually subjected to some constraints in terms of which other webpages it can link to. 
For instance, a university may not allow its faculty to link their webpages to commercial ones. 
This gives rise to a network of feasible links between webpages.
Each webpage then chooses out-neighbors in this feasible network, subject to other possible constraints such as a maximum number of outlinks.

The goal of this paper is to examine two classes of games with the above features. 
One class, which has the PageRank as its representative, consists of games where the goal of player $\play$ is to maximize the value $\stationary(\play)$, where $\stationary$ is the stationary measure induced by the players' strategy profile.
In the other class of games player $\play$ wants to minimize $\stationary(\play)$. 
An interpretation of the latter is in terms of division of chores among a number of agents, where the division is decentralized and is subject to some underlying network structure. 
Namely, a population of agents is connected by a directed social network. 
Every day one of the agents is in charge of doing a chore, which is onerous to her, but beneficial to the society.
One agent is chosen at random at the beginning of the game and is given a chore to perform.
The agent who is assigned the chore, after doing the job, passes it to one out-neighbor of her choice. 
The goal of each agent is to perform the chore as rarely as possible. 
In a picturesque language, each agent tries to pass the buck to one of her neighbors with the intent of seeing it coming back as seldom as possible.
Here too, some constraints are likely to exist.
Passing the buck could be restricted by geographical proximity, social norms, custom, hierarchy, etc., which determine a network of feasible connections.
For instance the network could be formed by groups of people (families, villages) where only some members of the group have a connection with other groups.
Alternatively, the population may be segregated by gender and the only contacts across  genders may happen within the same family.

%
%
\subsection{Our contribution.}

We first deal with the case where players minimize their costs (buck-passing game). 
We study a game on a finite directed graph, where agents are vertices of the graph and directed edges represent the possibility for an agent to transfer the buck to another agent.
We look at situations where the first agent to hold the buck is  drawn at random by Nature.
We first consider the deterministic case in which agents designate 
the neighbor to whom they will transfer the buck, once and for all at the beginning of the game. 
Their goal is to see the buck coming back to them as rarely as possible, and the cost for an agent is the expected asymptotic frequency of times that she gets the buck. 
Although the costs depend on the asymptotic behavior of a process over time, our game is static, since players choose their strategy at the beginning of the game and play the same action whenever their turn comes, without any updating based on the history of the game. 

To establish the existence of pure \aclp{NE}, we prove that the game has a generalized ordinal potential. A classical result by \citet{MonSha:GEB1996} then guarantees that its minimizers are pure \aclp{NE}. 
Moreover, we show there always exist equilibria that are prior-free, i.e., do not depend on the initial distribution according to which Nature makes its draw. 
In general the game may have multiple \aclp{NE} and some of them might be prior-sensitive.

We then look at a stochastic version of the game in which the agents choose the probability with which the buck is passed to each of their neighbors. 
This gives rise to a Markov chain. When this chain is irreducible, its unique stationary distribution is precisely the cost vector of the game, and does not depend on Nature's initial distribution. 
In general, we prove the existence of a generalized ordinal potential and we use it to establish the existence of prior-free \aclp{NE}.
To this end we exploit the Markov chain tree theorem to derive an explicit formula for the potential function.
We also provide an alternative characterization for the potential, showing that it can be written  as the expected length of unicycles in the graph with respect to a suitable probability measure. 
All of this establishes a new mathematically intriguing bridge between the \acl{BPG} and Markov chains.

We next investigate fairness of the equilibria in \aclp{BPG}, that is, we study how unevenly the total cost is spread across players in equilibrium in comparison to what could be achieved by a benevolent planner who wants to minimize disparity of treatment. 
In the spirit of \citet{Raw:HUP2009}, we define the social cost function of a strategy profile as the highest cost across all players.
Then we use the \acl{PoA}\acused{PoA} and the \acl{PoS}\acused{PoS} to measure fairness. 
Typically these quantities are used to measure efficiency of the worst and the best equilibrium, respectively: the social cost is usually taken to be the sum of the costs of all the players. 
Since our \acl{BPG} is a constant-sum game, efficiency is not an issue, but, using a Rawlsian social cost function the  \acl{PoA}\acused{PoA} and the \acl{PoS}\acused{PoS} can be used as a  measure of fairness. 

In the last section we turn to \aclp{BHG}, which have the same structure as \aclp{BPG}, except that the cost becomes a payoff and the goal of each player is to see the buck coming back as often as possible. 
Mathematically, the analysis of \aclp{BHG} follows the same line as that of \aclp{BPG}, showing that  the \acl{BHG} admits a generalized ordinal potential function that can be obtained by reversing the sign of the \acl{BPG} potential function.
We close our paper with a brief discussion of the PageRank game as a special case of a \acl{BHG}.

%
%
\subsection{Related literature.}

Various classes of games on networks have been considered in the literature. 
In some of them, which go under the name ``network games,'' the payoff of a player depends only on her own strategy and on the strategies of her neighbors \citep[see, e.g.,][among many]{KeaLitSin:PCUAI2001,GalGolJacVegYar:RES2010,ParOzd:GEB2019}.
This is not the case in our model, where the payoff of each player depends on the strategies of all the other players.

As already mentioned, our buck-passing game can be seen as a decentralized strategic procedure to split chores under network constraints. 
In this perspective, it is somehow linked to the literature on fair division. 
This literature is huge and \citet{Mou:ARE2019} provides a nice recent survey.
Measuring fairness in the framework of online fair division is studied in 
\citet{BogMouSam:arXiv2020}, where a measure called price of fairness is proposed.  
An interesting connection between potential games and fair allocation has been studied by 
\citet{GopMarWie:MOR2014}, who prove that existence of a pure equilibrium in distribution rules having a fixed local welfare function requires the game to be potential.
Our fairness criterion is inspired by the work of   \citet{Raw:HUP2009}.
We adapted to this criterion the typical measures of inefficiency, i.e.,   the \acl{PoA}\acused{PoA}  
\citep[][]{KouPap:STACS1999,KouPap:CSR2009,Pap:PACM2001}, and the \acl{PoS}\acused{PoS}
\citep[][]{SchSti:P14SIAM2003,AnsDasKleTarWexRou:SIAMJC2008}.
In most of the literature the social cost is the sum of the costs of all the players. 
Other social cost functions were considered, for instance, in 
\citet{KouPap:CSR2009,KouPap:STACS1999},
\citet{Vet:FOCS2002}, \citet{MavMonPap:Springer2008}, \citet{FouSca:MOR2019}.

The seminal paper by \citet{BrinPage:1998} introduced the PageRank dynamics as a tool to rank webpages, the ancestor of the algorithm used nowadays by Google to produce  an ordered list of pages as output of a query.
In the past decade, PageRank has been intensively studied in both the theoretical and applied literature,  \citep[see, e.g.,][]{
	JehWid:P12WWW2003,
	AndChuLan:IM2008,
	AvrLitSon:IM2008,
CheLitOlv:RSA2017,CapQua:arXiv2019,
	LeeOlv:SPA2020,GarHofLit:AAP2020}. 
The idea of looking at the PageRank in a strategic setting goes back to \citet{AvrLit:INRIA2004,AvrLit:SM2006,deKNinvan:LAA2008}. 
In \citep{AvrLit:SM2006} the effect of adding a new outgoing link to a webpage in PageRank is studied and an optimal strategy is proposed.
A similar problem is studied in \citep{deKNinvan:LAA2008} in the case where a webmaster controls a subset of webpages.
The first game theoretical formulation of the PageRank can be found in \citet{HopShe:mimeo2008}
who prove the existence of Nash equilibria and study their features \citep[see also][]{HopShe:IM2008}.
A similar model was studied shortly after by \citet{CheTenWanZho:FA2009}.
\citet{AviIwaPak:DAM2014} consider versions of the PageRank game on undirected networks, where players cannot unilaterally create links, but they can delete existing links.
\citet{KouMarPapRigSid:SAGT2015} deal with a game of the PageRank type where players choose both their outgoing links and their weights. 
Using quasi-concavity arguments they prove existence of pure Nash equilibria.
A variant studied by \citet{CasCatComFag:arXiv2020} consider a game where players can choose how to direct their $m$ links in order to maximize their Bonacich centrality.

The main tools that we use to prove the existence of pure \aclp{NE} are the existence of a generalized ordinal potential and the \acl{FIP}. 
The relationship between these concepts was studied by \citet{MonSha:GEB1996}.
We also rely on classical results for Markov chains, for which we refer to the books of \citet{Nor:CUP1998,AldFil:mono2002,LevPer:AMS2017}. 
In particular we exploit the celebrated Markov chain tree theorem, attributed to \citet{VenFre:UMN1970} \citep[see also][]{LeiRiv:IEEETIT1986,AnaTso:SPL1989}, which relates the stationary measure of the chain to the abundance of spanning trees in the underlying graph. 
Our model is also related to a research stream that connects Markov chains with the classical Hamiltonian cycle problem, 
\citep[see, e.g.,][]{FilKra:MOR1994,BorEjoFil:RSA2009,BorEjoFil:RSA2004,BorEjoFilNgu:Springer2012,EjoFilNgu:MOR2004,LitEjo:MOR2009,EjoLitNguTay:JAP2011,EjoFilMurNgu:SJDM2008}. 
We defer the discussion of these connections to \cref{suse:Hamiltonian}.

\subsection{Organization of the paper.}
\label{suse:organization}

The paper is organized as follows.
\cref{se:graphs} introduces the notation.
\cref{se:deterministic} analyzes the deterministic version of the buck-passing game. 
\cref{se:stochastic} introduces the stochastic model.
\cref{se:fairness} studies  fairness of the equilibria.
\cref{se:Markov-chains} introduces the probabilistic tools which are needed to show the existence of a generalized ordinal potential function.
\cref{se:gop} deals with the potential nature of the game.
\cref{se:BHG} examines the class of buck-holding games and explores their connection with the PageRank game.

%
%

\section{Graph terminology and notations.}
\label{se:graphs}

Throughout the paper we consider a directed graph $\graph=(\vertices,\edges)$, which represents a social network, with $\vertices=\{1,\dots,\nPlayers\}$ the set of vertices and  $\edges\subseteq\vertices\times\vertices$ the set of edges. 
We further assume that $\graph$ has no loops.  
The following standard terminology will be used hereafter:

\begin{enumerate}[(a)]
	\item\label{it:out-neighbors} The set of \emph{out-neighbors} of vertex $\play$ is denoted by $\neighborsout_{\play}=\{\playalt:(\play,\playalt)\in\edges\}$. 
	Its cardinality $|\neighborsout_{\play}|$ is called the \emph{out-degree} of the vertex.
	
	\item\label{it:path}
	A \emph{path} is a sequence of edges $\edge_{1},\dots,\edge_{\run}$ where, for all $\play\in\{1,\dots,\run-1\}$, the head of $\edge_{\play}$ coincides with the tail of $\edge_{\play+1}$.
	
	\item\label{it:strongly-connected} 
	The graph $\graph$  is \emph{strongly connected} if for every  $\play,\playalt\in\vertices$ there exists a path from $\play$ to $\playalt$.
	
	\item\label{it:subgraph}
	A \emph{subgraph} is a graph $\graph(\vertices',\edges')$ with $\vertices'\subseteq \vertices$ and $\edges'\subseteq\edges$.
	
	\item\label{it:spanning}
	A \emph{spanning subgraph} is a subgraph $\graph(\vertices',\edges')$ where $\vertices'=\vertices$.
	
	\item\label{i:simle-cycle}
	A \emph{cycle} is a strongly connected graph where each vertex has out-degree $1$.

	\item\label{it:unicycle}
	A \emph{unicycle} is a graph where each vertex has out-degree 1 and which contains exactly one cycle. 
	
	\item\label{it:rooted-tree}
	An \emph{$\play$-rooted tree} is a graph that contains no cycles and such that $\play$ has out-degree $0$ and the other vertices  have out-degree $1$.  
\end{enumerate}

\begin{figure}[h]
	\FIGURE
	{\includegraphics[width=4cm]{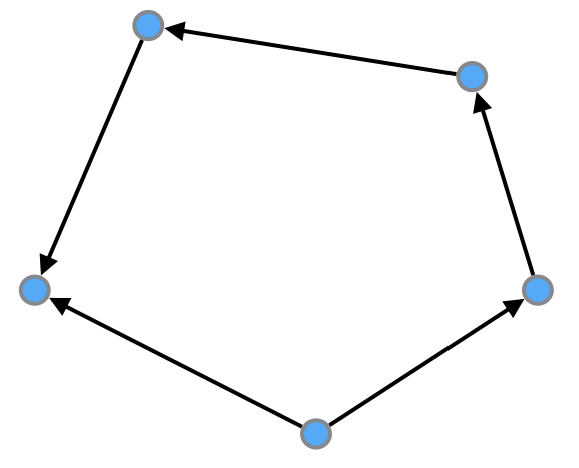}\qquad
		\includegraphics[width=4cm]{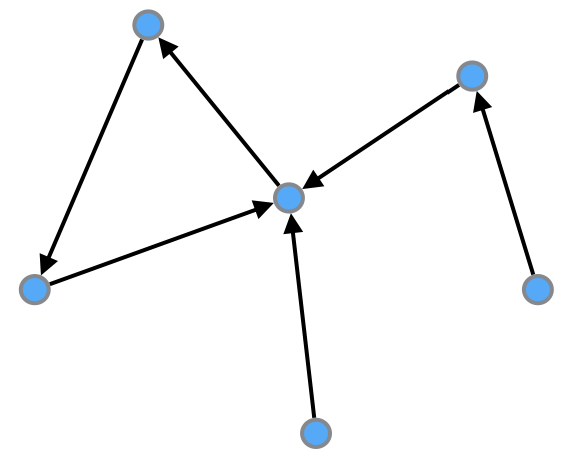}\qquad
		\includegraphics[width=4cm]{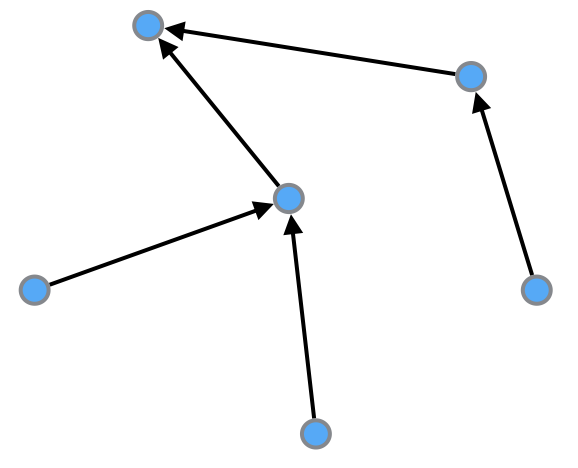}
	}
	{
		Left: a cycle. Middle: a unicycle. Right: a rooted tree. Note that removing  any edge from the cycle of a unicycle we get a rooted tree.
		\label{fig:example} }
	{}
\end{figure}

\cref{fig:example} shows examples of a cycle, a unicycle, and a rooted tree. 
As immediate from their definition, not every unicycle is a cycle, but it always contains one.

%
%
\section{The deterministic buck-passing game.}
\label{se:deterministic}

%
%

Let $\graph=\parens{\vertices,\edges}$ be a directed graph and $\initial=\parens{\initial_{\play}}_{\play\in\vertices}$ a probability distribution over the vertices.
We consider a finite game $\game(\graph,\initial,\actions)$ where each vertex $\play\in\vertices$ is a player with strategy set $\actions_{\play}=\neighborsout_{\play}$ (assumed nonempty with $\play\not\in \actions_{\play}$), and $\actions=\times_{\play\in\vertices}\actions_{\play}$ is the set of strategy profiles. 

Once each player has chosen an out-neighbor $\act_{\play}\in\actions_{\play}$, the cost for a player is the asymptotic frequency of times she has the buck, determined by the following process. 
At time $\per=0$ a buck is given to a vertex $\play_{0}\in\vertices$ drawn at random by Nature according to the initial distribution $\initial$. 
At time $\per=1$, the selected player $\play_{0}$ passes the buck to her designated neighbor $\play_{1}=\act_{\play_{0}}$, who in turn will pass it at time $\per=2$ to her chosen neighbor, and so on. Define the random variables
\begin{equation}
\label{eq:buck}
\buck_{\play,\per}(\actprof) =
\begin{cases}
1 & \text{if at time $\per$ player $\play$ has the buck},\\
0 & \text{otherwise}.
\end{cases}
\end{equation}
For a fixed profile $\actprof$, the value of  $\buck_{\play,\per}$ depends only on the initial draw, with
\begin{equation}\label{eq:mu-Theta}
\prob(\buck_{\play,0}(\actprof)=1)=\initial_{\play}.
\end{equation}
After this initial draw, the buck is passed among the players and eventually it will start cycling, so that we can define the cost function $\cost_{\play}:\actions\to\reals$ for player $\play$ as 
\begin{equation}
\label{eq:SBPG-cost}
\cost_{\play}(\actprof)=\expect\bracks*{\lim_{\horizon\to\infty}\frac{1}{\horizon}\sum_{\per=1}^{\horizon}\buck_{\play,\per}(\actprof)},
\end{equation}
where the expectation is taken with respect to the initial measure $\initial$.

This game is denoted $\game(\graph,\initial,\actions)$ and is called a \acfi{DBPG}\acused{DBPG}. 
The corresponding set of \aclp{NE} is denoted by $\NE(\actions)$.
We assume that the graph $\graph$, the initial measure $\initial$, and the buck-passing dynamics are common knowledge.
We stress that, despite the fact that the costs are defined through a dynamic process, the game is actually static, with strategies fixed once and for all at the beginning of the game. 
Note also that the costs of all players add up to $1$, so that this game is equivalent to a zero-sum game. 
To analyze it, it is convenient to have a more manageable expression for the costs in \eqref{eq:SBPG-cost},
for which we introduce some additional notation.

\begin{definition}
	\label{de:InducedGraph}
	We use the notation  $\graph_{\actprof}=\parens{\vertices,\edges_{\actprof}}$ for the subgraph induced by the strategy profile $\actprof$ with edge set $\edges_{\actprof}=\braces{(\play,\act_{\play}):\play\in\vertices}$. 
	Each vertex has out-degree $1$ so that $\graph_{\actprof}$ is a union of a finite number $\numrc(\actprof)$ of disjoint unicycles
	(see \cref{fig:notsc-unicycle}).
	For $\countrc=1,\ldots,\numrc(\actprof)$, the vertex set of  unicycle $\countrc$ is denoted by $\tc_{\actprof}^{\countrc}$, so that $\vertices=\tc_{\actprof}^{1}\cupdot\cdots\cupdot\tc_{\actprof}^{\numrc(\actprof)}$, and $\rc_{\actprof}^{\countrc}\subseteq \tc_{\actprof}^{\countrc}$ denotes the set of vertices in the corresponding cycle. 
\end{definition}

\begin{figure}[h]
	\FIGURE
	{\includegraphics[width=7cm]{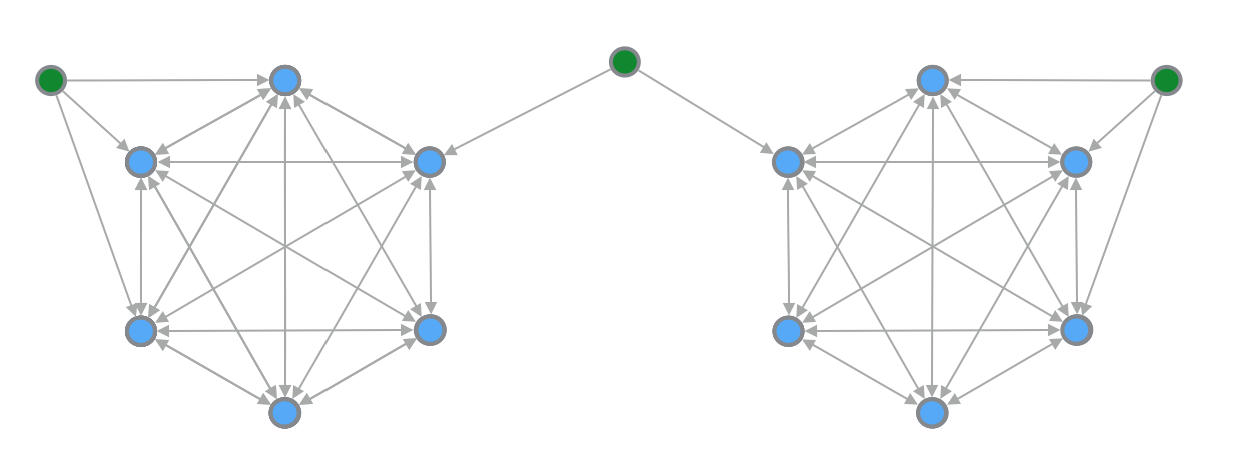}
	\includegraphics[width=7cm]{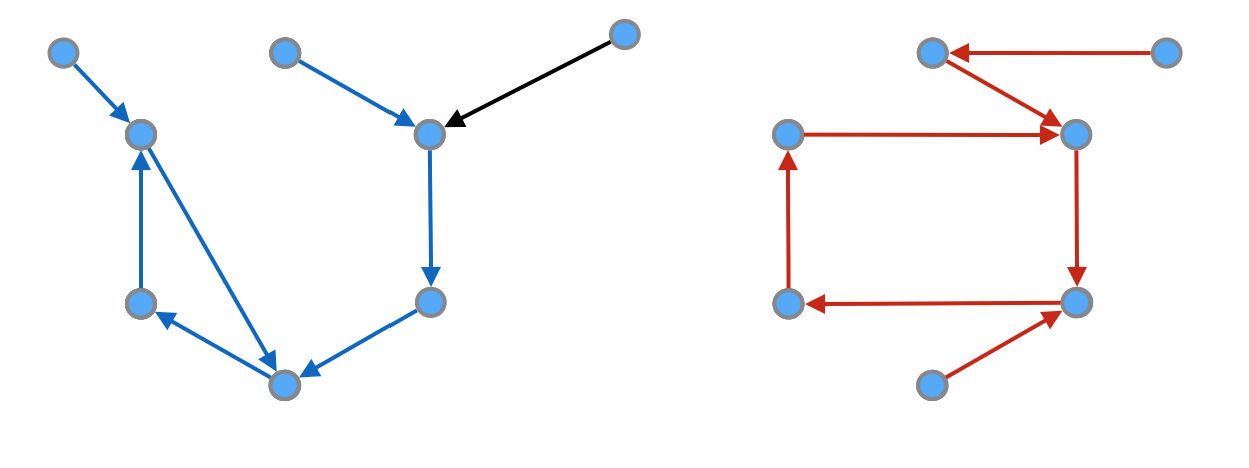}}
	{
		The  graph $\graph$ on the left has 2 strongly connected components and 3 transient vertices.
		The induced graph $\graph_{\actprof}$ on the right has two disjoint unicycles with cycles of length 3 and 4.
	\label{fig:notsc-unicycle} }
{}
\end{figure}

If the buck is assigned initially to a vertex in $\tc_{\actprof}^{\countrc}$, then, after finitely many steps, it will reach the cycle $\rc_{\actprof}^{\countrc}$ and turn around forever. 
In the long run each player $\play\in\rc_{\actprof}^{\countrc}$ gets the buck a fraction $1/\abs{\rc_{\actprof}^{\countrc}}$ of times, whereas the remaining players are free-riders with a cost of $0$. 
Now, the probability that the buck is  assigned initially to a vertex in $\tc_{\actprof}^{\countrc}$ is 
\begin{equation}
\label{eq:initial-ell}
\initial_{\actprof}^{\countrc} \coloneqq \sum_{\playalt\in\tc_{\actprof}^{\countrc}}\initial_{\playalt}.
\end{equation}
Hence, defining $\countrc(\play)$ the label of the unicycle that contains player $\play$ and setting 
\begin{equation}\label{eq:ind-unicycle}
\indunic_{\play}(\actprof) \coloneqq
\begin{cases}
1 & \text{if } \play\in\rc_{\actprof}^{\countrc(\play)},\\
0 & \text{otherwise},
\end{cases}
\end{equation}
the expected cost in \eqref{eq:SBPG-cost} can be written as
\begin{equation}\label{eq:DBPG-cost}
\cost_{\play}(\actprof)=\dfrac{\initial_{\actprof}^{\countrc(\play)}}{\abs{\rc_{\actprof}^{\countrc(\play)}}}\indunic_{\play}(\actprof).
\end{equation}

%
%
\subsection{Ordinal potentials and existence of prior-free equilibria.}
\label{suse:prior-free}

As every finite game, the buck-passing game admits equilibria in mixed strategies. 
However, our main interest here is the existence of \aclp{NE} in \emph{pure strategies}, which  for general games are not guaranteed to exist. 
Unless otherwise stated, we always refer to equilibria in pure strategies. 
We now recall the concepts of profitable deviations and  equilibria.
\begin{definition} \label{de:nash-enash} 
	Consider a cost game.
	\begin{enumerate}[(a)]
		\item
		Given a strategy profile $\actprof\in\actions$, a \emph{unilateral deviation} for player $\play$ is a strategy $\actprofalt\in\actions$ that 
		differs from $\actprof$ only in its $\play$-th coordinate. It is  a \emph{profitable deviation} if in addition
		$\cost_{\play}(\actprofalt)<\cost_{\play}(\actprof)$, in which case the difference $\cost_{\play}(\actprof)-\cost_{\play}(\actprofalt)$ 
		is called the \emph{improvement} of player $\play$.
		
		\item
		A  strategy profile $\actprof\in\actions$ is a \acfi{NE}\acused{NE} if no player has a profitable deviation.
		Similarly, it is an \acfi{eNE}\acused{eNE} if no player has a profitable deviation with an improvement larger than $\varepsilon$.
	\end{enumerate}
\end{definition}

In principle, a \acl{BPG}  $\game(\graph,\initial,\actions)$ may have multiple equilibria and they  may depend on the initial measure $\initial$. 
Of special interest are the so-called \emph{prior-free equilibria}, i.e., equilibria that are invariant with respect to the initial measure $\initial$.

\begin{definition}\label{de:prior-free}
A strategy profile $\actprof\in\actions$ in the \acl{BPG} is called a \acfi{PFNE}\acused{PFNE} if  $\actprof$ is a Nash equilibrium of $\game(\graph,\initial,\actions)$
for every initial distribution $\initial$.
\end{definition}

\begin{example}\label{ex:multiple-eq}
	Consider  the graph $\graph$  on the top of \cref{fig:prior-free-ce}, and suppose that  $\initial$ is a degenerate measure that puts all the mass on the vertex labeled by $v$.
		Then the strategy profile shown on the bottom left picture in \cref{fig:prior-free-ce} is a \ac{NE} which is not prior-free.  
		Indeed, if we take a different measure $\initial'$ which puts some positive mass on one of the vertices $i,j,k$, then $k$ has a profitable deviation, which gives rise to the \ac{PFNE} shown on the middle picture. Notice that a \emph{more fair} \ac{PFNE} is given by a cycle passing by every vertex, which is represented on the right of \cref{fig:prior-free-ce}. 	
	\begin{figure}[h]
		\centering
		\includegraphics[width=4.5cm]{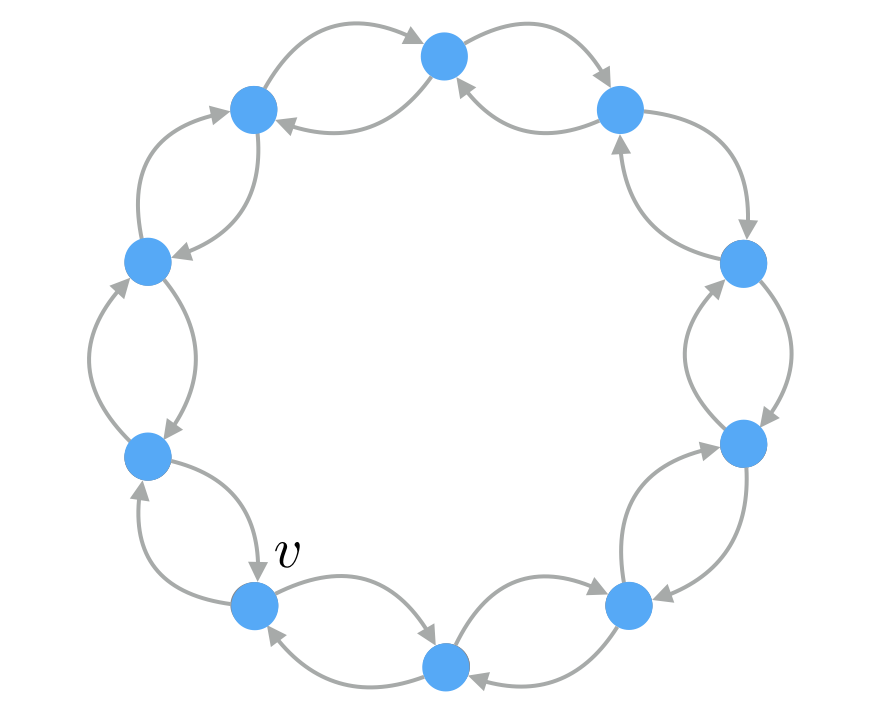}\\
		\includegraphics[width=4.5cm]{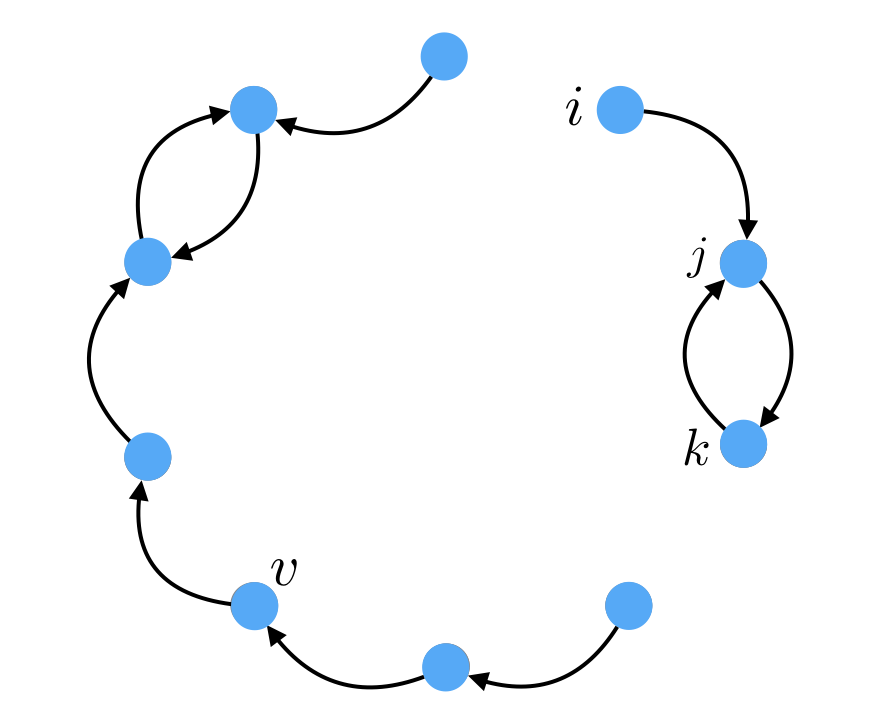}
		\includegraphics[width=4.5cm]{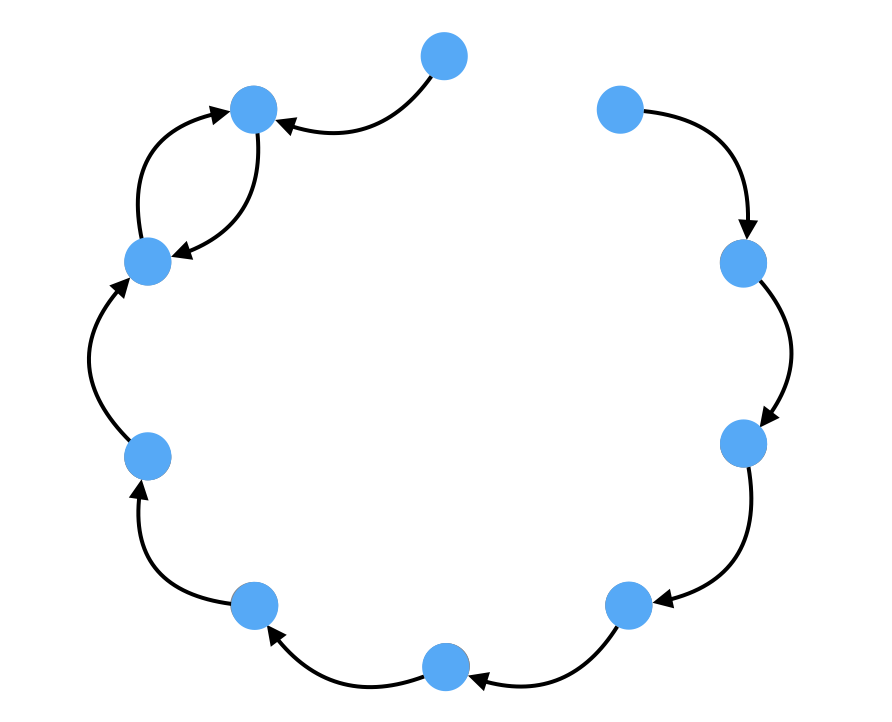}
		\includegraphics[width=4.5cm]{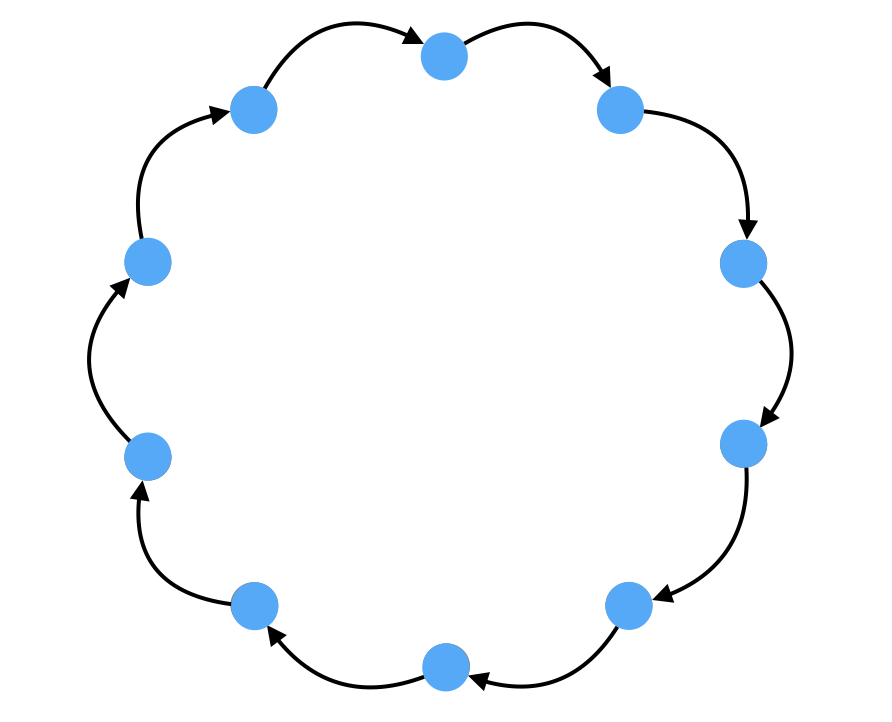}
		\caption{Top: The original graph. In this game each player can pass the buck only clockwise or counterclockwise. Left: a \ac{NE} that is not prior-free. Middle: a  \ac{PFNE}. Right: another  \ac{PFNE}.}
		\label{fig:prior-free-ce}
		
	\end{figure}
\end{example}

Our main result for the deterministic buck-passing game is the existence of   prior-free equilibria.
This will be proved by showing the existence of a generalized ordinal potential that does not depend on the initial measure $\initial$.
Potential games were introduced in the seminal paper by \citet{Ros:IJGT1973} and later studied extensively  by \citet{MonSha:GEB1996}. 
A recent account can be found in \citet{LaCheSoo2016}.
We recall these notions for a general cost game $\parens{\vertices,\actions,\costs}$, where $\vertices$ is a finite set of players, $\actions=\times_{\play\in\vertices}\actions_{\play}$ is the set of strategy profiles  with $\actions_{\play}$ the (possibly uncountable) set of strategies for player $\play\in\vertices$, and $\costs_{\play}:\actions\to\reals$ is the cost of player $\play\in\vertices$.

\begin{definition}
	\label{de:ordinal-potential} 
	A function $\potential:\actions\to\reals$ is called a \emph{potential} for $\parens{\vertices,\actions,\costs}$
	if for each strategy profile $\actprof$ and any unilateral deviation $\actprofalt=\parens{\actalt_{\play}, \actprof_{-\play}}$ 
	by a player $\play$, we have 
	\begin{equation}\label{eq:potential}
	\cost_{\play}(\actprofalt)-\cost_{\play}(\actprof)=\potential(\actprofalt)-\potential(\actprof).
	\end{equation}
	Similarly, $\potential$  is said to be an  \emph{ordinal potential} if
	\begin{equation}\label{eq:ord_potential}
	\cost_{\play}(\actprofalt)<\cost_{\play}(\actprof) \Leftrightarrow 
	\potential(\actprofalt)<\potential(\actprof)
	\end{equation}
	and  is called a \emph{generalized ordinal potential} if		
	\begin{equation}\label{eq:gen_ord_potential}
	\cost_{\play}(\actprofalt)<\cost_{\play}(\actprof) \Rightarrow 
	\potential(\actprofalt)<\potential(\actprof).
	\end{equation}
\end{definition}

Clearly the notion of generalized ordinal potential is the weakest. It is also easy to see that a component-wise minimizer of a generalized ordinal potential is a Nash equilibrium. 
If $\potential$ is in fact an ordinal potential, these minimizers coincide with the Nash equilibria. 
For later reference we record the following direct consequence.

\begin{proposition}\label{pr:GOP-pure_NE}
	Every generalized ordinal potential game, which is either finite or, more generally, which has compact strategy sets and a lower semi-continuous potential, admits a Nash equilibrium in  pure strategies. 
\end{proposition}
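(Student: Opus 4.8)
The plan is to produce the equilibrium as a global minimizer of the potential $\potential$, using the observation already recorded just before the statement: a component-wise minimizer of a generalized ordinal potential is a \acl{NE}. So the argument naturally splits into two steps. First, I would show that $\potential$ attains its minimum over the strategy profile set $\actions$. Second, I would show that \emph{any} global minimizer of $\potential$ is a \acl{NE} of $\parens{\vertices,\actions,\costs}$. Combining the two gives the claim.

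For the first step, in the finite case $\actions=\times_{\play\in\vertices}\actions_{\play}$ is a finite set, so $\potential$ trivially attains its minimum. In the general case each $\actions_{\play}$ is compact, hence the finite product $\actions$ is compact, and a lower semi-continuous real-valued function on a nonempty compact set attains its infimum by the classical Weierstrass argument: the sublevel sets $\braces{\actprof\in\actions:\potential(\actprof)\le a}$ are closed, hence compact, and a decreasing family of nonempty compacts has nonempty intersection, which yields a point realizing $\inf_{\actions}\potential$. Either way we obtain some $\eq{\actprof}\in\actions$ with $\potential(\eq{\actprof})\le\potential(\actprof)$ for every $\actprof\in\actions$.

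For the second step, suppose toward a contradiction that $\eq{\actprof}$ is not a \acl{NE}. Then by \cref{de:nash-enash} some player $\play$ has a profitable unilateral deviation $\actprofalt$, i.e.\ $\cost_{\play}(\actprofalt)<\cost_{\play}(\eq{\actprof})$. Since $\potential$ is a generalized ordinal potential, the defining implication \eqref{eq:gen_ord_potential} forces $\potential(\actprofalt)<\potential(\eq{\actprof})$, which contradicts the global minimality of $\eq{\actprof}$. Hence no player has a profitable deviation at $\eq{\actprof}$, so $\eq{\actprof}\in\NE(\actions)$.

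I do not expect a genuine obstacle: the only delicate point is the attainment of the minimum in the non-finite case, and this is precisely what the hypotheses of compact strategy sets and lower semi-continuous potential are designed to guarantee; with them in hand the standard lower-semicontinuity-on-a-compact-set argument applies verbatim, and the passage from ``global minimizer'' to ``\acl{NE}'' is just the one-line contrapositive of \eqref{eq:gen_ord_potential}.
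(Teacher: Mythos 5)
Your argument is correct and is exactly the one the paper has in mind: the proposition is recorded as a direct consequence of the preceding observation that a (component-wise, hence in particular global) minimizer of a generalized ordinal potential is a Nash equilibrium, with existence of the minimizer supplied by finiteness or by the Weierstrass argument for a lower semi-continuous function on a compact set. No gaps.
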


\begin{definition}\label{de:FIP} \
	\begin{enumerate}[(a)]
		\item
		An \emph{improvement path}  is a sequence of strategy profiles $\actprof_{0}, \actprof_{1}, \dots$ such that each $\actprof_{\run+1}$ is a profitable deviation of $\actprof_{\run}$ for some player $\play_{\run}$. 
		It is called an \emph{$\varepsilon$-improvement path} if the  improvement at each stage is at least $\varepsilon$.
		
		\item
		A game has the \acfi{FIP}\acused{FIP} if every improvement path is finite.
		Similarly, it has the \acfi{eFIP}\acused{eFIP} if every $\varepsilon$-improvement path is finite.
	\end{enumerate}
	
\end{definition}

\citet{MonSha:GEB1996} showed that a finite game has a generalized ordinal potential if and only if it satisfies the \ac{FIP}.
For a similar characterization of ordinal potentials see \citet{VooNor:GEB1997}.

\begin{remark}\label{re:FIP}
	The  \ac{FIP} can also be described by means of an auxiliary graph---which we call the \emph{meta-graph} of the game to avoid confusion with the other graphs mentioned in the paper---where a vertex represents a strategy profile and a directed edge  $(\actprof,\actprofalt)$ exists iff $\actprofalt$ is a profitable deviation from $\actprof$ for some player $\play$. 
	Then, the \ac{FIP} is equivalent to the acyclicity of the meta-graph. 
	A similar construction is used, for instance, in \citet{CanMenOzdPar:MOR2011}. 
	Note that \ac{NE} are precisely the sinks of the meta-graph, i.e., 
	meta-vertices with out-degree equal to zero. 
	Since every path on an acyclic directed graph is finite and ends in a sink, this shows again that a finite game satisfying the \ac{FIP} admits a \ac{NE}.
\end{remark}

With these preliminaries, we proceed to establish our main results for the deterministic buck-passing game. 
We will prove the existence of a generalized ordinal potential, from which we will deduce not only that its minima are Nash equilibria in pure strategies but also that they are prior-free. 
Furthermore, we establish a bound for the maximum length of an improvement path, which implies that any best response dynamics must reach a prior-free equilibrium in quadratic time.

Since a cost that is already at zero cannot be decreased, it follows from \eqref{eq:DBPG-cost}  that a profitable deviation may only concern a player $\play$ who is on a cycle $\rc_{\actprof}^{\countrcalt}$ with $\initial_{\actprof}^{\countrcalt}>0$.
Such a player has two options to reduce her cost: increase the length of the cycle she is currently in, or break the cycle by sending the buck to a different unicycle so that her cost drops to zero. More precisely these options are:
\begin{enumerate}[({\textsc{d}}$_1$)]
	\item
	\label{it:deviation-D1}
	shift to $\actalt_{\play}\in \tc_{\actprof}^{\countrcalt}\setminus \rc_{\actprof}^{\countrcalt}$
	so that the cycle becomes longer,  
	
	\item
	\label{it:deviation-D2}
	shift to $\actalt_{\play}\not\in \tc_{\actprof}^{\countrcalt}$ so that the cycle $\rc_{\actprof}^{\countrcalt}$  disappears.
\end{enumerate}
These deviations affect only the cycle of the deviating player, either by increasing its length, or 
by breaking it and merging $\tc_{\actprof}^{\countrcalt}$ altogether into some other unicycle. 
All the other cycles remain unaltered, even if their unicycles may change.
Our first result identifies a potential function based on the length of these cycles.

\begin{theorem}\label{th:GOP}
	The \acl{DBPG} $\game(\graph,\initial,\actions)$ is a generalized ordinal potential game with generalized ordinal potential function
	\begin{equation}\label{eq:potentialdef}
	\potential(\actprof)\coloneqq \sum_{\countrc=1}^{\numrc(\actprof)}\parens*{\nPlayers-\abs{\rc_{\actprof}^{\countrc}}},
	\end{equation}
	where $\rc_{\actprof}^{\countrc}$ are the cycles in the induced subgraph
	$\graph_{\actprof}=\parens{\vertices,\edges_{\actprof}}$ (see \cref{de:InducedGraph}).
	If the initial measure $\initial$ has full support, then $\potential$ is an ordinal potential.
\end{theorem}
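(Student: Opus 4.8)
The plan is to verify that the stated $\potential$ satisfies the generalized-ordinal-potential implication \eqref{eq:gen_ord_potential} along every unilateral deviation, and that when $\initial$ has full support it satisfies the ordinal-potential equivalence \eqref{eq:ord_potential}; \cref{pr:GOP-pure_NE} then yields pure Nash equilibria at the minimizers of $\potential$, and these are prior-free since $\potential$ does not involve $\initial$. As recorded before the statement, \eqref{eq:DBPG-cost} shows that a profitable deviation can only be made by a player $\play$ lying on the cycle $\rc$ of its unicycle with $\initial_\actprof^{\countrc(\play)}>0$, and such a deviation is one of the two moves (\textsc{d}$_1$), (\textsc{d}$_2$). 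For the ordinal-potential claim I will in addition have to handle deviations by a player lying off its cycle.

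The core will be a combinatorial description of how the functional graph $\graph_\actprof$ — a disjoint union of unicycles, see \cref{de:InducedGraph} — changes when player $\play$ reroutes its unique out-edge from $\act_\play$ to $\actalt_\play$. Writing $\tc$ and $\rc$ for the vertex set and the cycle of the unicycle of $\graph_\actprof$ containing $\play$, the unicycles disjoint from $\tc$ are untouched, and the one containing $\play$ behaves in exactly one of three ways. (a) If $\play\in\rc$ and $\actalt_\play\in\tc$, then the vertices of $\tc$ still form a single unicycle whose cycle is now the forward orbit $\rc'$ of $\play$ (follow $\play\to\actalt_\play$ and then the unchanged edges back to $\play$), so $\play\in\rc'$ and the number of unicycles is unchanged. (b) If $\play\in\rc$ and $\actalt_\play\notin\tc$, then $\rc$ is destroyed and $\tc$ merges with the unicycle $\tc'$ of $\actalt_\play$, keeping the cycle of $\tc'$, so the number of unicycles drops by one. (c) If $\play\notin\rc$, then either every cycle is unchanged, or — when the forward orbit of $\actalt_\play$ in $\graph_\actprof$ passes through $\play$ — the unicycle $\tc$ splits into two unicycles, one keeping $\rc$ and one acquiring a new cycle $\rc'\ni\play$ with $\abs{\rc'}<\nPlayers$. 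Feeding the cycle lengths into \eqref{eq:potentialdef} gives $\potential(\actprofalt)-\potential(\actprof)=\abs{\rc}-\abs{\rc'}$ in (a); $\potential(\actprofalt)-\potential(\actprof)=\abs{\rc}-\nPlayers\le-1$ in (b), since $\tc'\neq\emptyset$ forces $\abs{\rc}\le\abs{\tc}\le\nPlayers-1$; and $\potential(\actprofalt)-\potential(\actprof)\ge 0$ in (c), equal either to $0$ or to $\nPlayers-\abs{\rc'}>0$.

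Both claims will then drop out. A profitable deviation is (\textsc{d}$_1$), an instance of (a), or (\textsc{d}$_2$), which is (b). In case (a) the vertex set $\tc$, hence $\initial_\actprof^{\countrc(\play)}=\sum_{\playalt\in\tc}\initial_\playalt$, is preserved, so \eqref{eq:DBPG-cost} gives $\cost_\play(\actprofalt)-\cost_\play(\actprof)=\initial_\actprof^{\countrc(\play)}\bigl(\abs{\rc'}^{-1}-\abs{\rc}^{-1}\bigr)$, which is negative only if $\abs{\rc'}>\abs{\rc}$, i.e.\ only if $\potential$ strictly decreases; in case (b), $\potential$ strictly decreases unconditionally. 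Hence \eqref{eq:gen_ord_potential} holds. If in addition $\initial$ has full support, then $\play\in\rc$ forces $\initial_\actprof^{\countrc(\play)}=\sum_{\playalt\in\tc}\initial_\playalt>0$, so in (a) $\cost_\play$ strictly decreases iff $\abs{\rc'}>\abs{\rc}$ iff $\potential$ strictly decreases, in (b) both $\cost_\play$ and $\potential$ strictly decrease, and in (c) $\cost_\play(\actprof)=0$ and $\cost_\play$ is unchanged or strictly increases while $\potential$ is non-decreasing; collecting these cases gives $\cost_\play(\actprofalt)<\cost_\play(\actprof)\Leftrightarrow\potential(\actprofalt)<\potential(\actprof)$, i.e.\ \eqref{eq:ord_potential}.

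The real work, I expect, will be the combinatorial lemma of the second paragraph: one has to argue carefully that rerouting one out-edge of a disjoint union of unicycles produces exactly one of the three behaviours (cycle replacement, merge, split), that in the replacement case the affected unicycle keeps its vertex set — so that the invariance of $\initial_\actprof^{\countrc(\play)}$ and the membership $\play\in\rc'$ both come for free — and that the cycles created in the merge and split cases never exhaust $\vertices$, which is exactly what makes the corresponding differences of \eqref{eq:potentialdef} strictly signed. Everything downstream is routine manipulation of \eqref{eq:potentialdef} and \eqref{eq:DBPG-cost}.
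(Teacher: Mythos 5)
Your proposal is correct and follows essentially the same route as the paper: classify unilateral deviations by how they alter the unicycle structure (cycle replacement within the same unicycle, merge into another unicycle, or a deviation by an off-cycle player), observe that only the first two can be profitable, and read off the sign of the change in $\potential$ from the cycle lengths, invoking full support of $\initial$ only for the reverse implication. The one place you are more explicit than the paper is your case (c), where an off-cycle player's deviation can create a new cycle and strictly \emph{increase} $\potential$ while leaving her cost at zero — the paper compresses this into the remark that a decrease of $\potential$ forces the deviator to lie on a cycle — but this is a refinement of the same argument, not a different one.
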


\proof{Proof.} Consider a strategy profile $\actprof\in\actions$ and let $\countrcalt=\countrc(\play)$ be the unicycle containing player $\play$ in $\graph_{\actprof}$. Consider also a deviation by player $\play$ from $\actprof$ to $\actprofalt=(\actalt_{\play},\actprof_{-\play})$.
	
	If the deviation is profitable, that is,  $\cost_{\play}(\actprofalt)<\cost_{\play}(\actprof)$ then, as noted above, there are only two possible cases.
	After a deviation \ref{it:deviation-D1} the new graph $\graph_{\actprofalt}$ has the same cycles except for $\rc_{\actprof}^{\countrcalt}$ which becomes longer, so that only this term in the sum in \eqref{eq:potentialdef} is affected and $\potential(\actprofalt)<\potential(\actprof)$. 
	On the other hand, a deviation \ref{it:deviation-D2} removes the cycle $\rc_{\actprof}^{\countrcalt}$ keeping the other cycles unchanged, so that we lose
	one term in \eqref{eq:potentialdef} and again $\potential(\actprofalt)<\potential(\actprof)$.
	This proves that  $\potential$ is a generalized ordinal potential.
	
	We  next show that, when $\initial$ has full support, the reverse implication holds, that is, $\potential(\actprofalt)<\potential(\actprof)$ implies $\cost_{\play}(\actprofalt)<\cost_{\play}(\actprof)$. 
	The inequality $\potential(\actprofalt)<\potential(\actprof)$ 
	conveys a change in the structure of cycles, which can only occur if the deviating player $\play$ is on a cycle $\rc_{\actprof}^{\countrcalt}$. 
	The deviation can only affect this cycle by removing it or changing its length, so again we distinguish these two cases. 
	In the first case a reduction of the potential requires the length of the cycle to increase $\abs{\rc_{\actprofalt}^{\countrcalt}}>\abs{\rc_{\actprof}^{\countrcalt}}$, which imposes $\actalt_{\play}\in \tc_{\actprof}^{\countrcalt}\setminus \rc_{\actprof}^{\countrcalt}$. 
	In this case $\tc_{\actprofalt}^{\countrcalt}= \tc_{\actprof}^{\countrcalt}$ so that $\initial_{\actprofalt}^{\countrcalt}=\initial_{\actprof}^{\countrcalt}>0$
	and $\cost_{\play}(\actprofalt)<\cost_{\play}(\actprof)$ follows from \eqref{eq:DBPG-cost}.
	The second case occurs when $\actalt_{\play}\not\in \tc_{\actprof}^{\countrcalt}$, in which case $\cost_{\play}(\actprofalt)=0<\cost_{\play}(\actprof)$, where the strict inequality follows from \eqref{eq:DBPG-cost} by noting that $\indunic_{\play}(\actprof)=1$ and $\initial_{\actprof}^{\countrcalt}>0$.
\Halmos
\endproof

\begin{theorem}\label{th:prior-free-equilibrium}
	Every \acl{DBPG} has a \ac{PFNE}. 
\end{theorem}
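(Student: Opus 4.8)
The plan is to exploit the key feature of the potential $\potential$ from \cref{th:GOP}: although the game $\game(\graph,\initial,\actions)$ depends on the initial measure $\initial$ through the costs \eqref{eq:DBPG-cost}, the function $\potential$ in \eqref{eq:potentialdef} depends only on the cycle structure of the induced subgraph $\graph_{\actprof}$ and not at all on $\initial$. Moreover, the part of the proof of \cref{th:GOP} showing that a profitable deviation strictly decreases $\potential$ never uses any property of $\initial$; it relies only on the dichotomy \ref{it:deviation-D1}--\ref{it:deviation-D2} for the deviating player. Hence the same function $\potential$ is a generalized ordinal potential for $\game(\graph,\initial,\actions)$ simultaneously for every choice of $\initial$.

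First I would pick a global minimizer $\actprof^{\ast}\in\actions$ of $\potential$, which exists because $\actions$ is finite. Then, invoking the observation recorded before \cref{pr:GOP-pure_NE} that a minimizer of a generalized ordinal potential is a Nash equilibrium, I would note that $\actprof^{\ast}$ is a Nash equilibrium of $\game(\graph,\initial,\actions)$: if some player $\play$ had a profitable deviation $\actprofalt$, the defining implication \eqref{eq:gen_ord_potential} would yield $\potential(\actprofalt)<\potential(\actprof^{\ast})$, contradicting minimality. Since $\potential$ does not depend on $\initial$ and the implication \eqref{eq:gen_ord_potential} holds for every $\initial$, the single profile $\actprof^{\ast}$ is a Nash equilibrium whatever the initial distribution is, which is precisely the definition of a \ac{PFNE}.

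There is essentially no obstacle here once \cref{th:GOP} is available: the entire content is the remark that a $\initial$-independent generalized ordinal potential forces its global minimizers to be equilibria for all $\initial$ at once. The only point requiring (mild) care is to make explicit that the generalized-ordinal-potential half of \cref{th:GOP}---as opposed to the stronger full-support ordinal-potential half---holds for every $\initial$, so that one and the same $\actprof^{\ast}$ works uniformly; since the game is finite, no compactness or semicontinuity considerations enter.
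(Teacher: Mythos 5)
Your argument is correct and coincides with the paper's own proof: both take a global minimizer of the $\initial$-independent potential $\potential$ from \cref{th:GOP} (which exists by finiteness of $\actions$) and observe via \eqref{eq:gen_ord_potential} and \cref{pr:GOP-pure_NE} that it is a Nash equilibrium for every initial distribution, hence a \ac{PFNE}. Your extra remark that the generalized-ordinal-potential half of \cref{th:GOP} holds uniformly in $\initial$ is a correct and worthwhile clarification, but it does not change the route.
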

\proof{Proof.} Since for every initial distribution $\initial$ the game $\game(\graph,\initial,\actions)$ is a finite game, the existence of a pure Nash equilibrium follows directly from \cref{th:GOP} and \cref{pr:GOP-pure_NE}. To prove the existence of a prior-free equilibrium it suffices to note that the expression $\potential$ in \eqref{eq:potentialdef} does not depend on the initial distribution $\initial$, so that a global minimizer of $\potential$	is a \ac{PFNE}.
\Halmos
\endproof

We stress that, when $\initial$ is fully supported, $\potential(\actprof)$ provides an ordinal potential and \ac{NE} are exactly its component-wise minimizers. 
Since  $\potential(\actprof)$ does not depend on $\initial$ it follows that in this case all \ac{NE} are prior-free. 
In other words, prior-sensitive equilibria can  appear only when $\initial$ is not fully supported. 
We refer again to \cref{ex:multiple-eq} for an easy example.

\begin{remark}
\label{re:Hamilton}
The following immediate consequence of our analysis will turn useful later. 
Assume that $\graph$ admits a Hamiltonian cycle $\hamil$, that is, a directed cycle passing for every vertex in $\vertices$ exactly once. 
Call $\actprof_{\hamil}\in\actions$ the strategy profile that induces the Hamiltonian cycle. 
Then $\actprof_{\hamil}$ is a \ac{PFNE} for the \ac{DBPG}. 
In fact, no  player has an incentive in deviating, since the deviation will cause a decrease in the length of the cycle. 
We refer the reader to \cref{suse:Hamiltonian} for a more detailed analysis of the connections between the the \acl{BPG} and the existence of Hamiltonian cycles.  
\end{remark}

In view of \cref{th:GOP}, and according to \citet{MonSha:GEB1996},  it follows that every \acl{DBPG} has the \ac{FIP}. 
Furthermore, considering the two types of profitable deviations \ref{it:deviation-D1} and \ref{it:deviation-D2}, we see that, if in a given strategy profile $\actprof\in\actions$ two vertices $\play$ and $\playalt$ are in the same unicycle, the same  holds for every $\actprofalt\in\actions$ that can be reached by following an improvement path. 
To be picturesque, we could say that, once the destinies of two players meet, they are doomed to be entangled forever.
In the language of probability, the evolution along an improvement path is a \emph{coalescence process}. 
This observation leads to an explicit bound on the maximum length of an improvement path in terms of the number of 
players $\nPlayers$, and it implies that a best response dynamics will attain a \ac{PFNE} in quadratic time.

\begin{theorem}\label{th:DBPG-FIP}
	In every \acl{DBPG} $\game(\graph,\initial,\actions)$ the length of an improvement path is at most $\frac{1}{4}\nPlayers^2-1$,
	and this bound is tight.
\end{theorem}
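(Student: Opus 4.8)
The plan is to split an improvement path $\actprof_{0},\actprof_{1},\dots,\actprof_{L}$ according to the two move types isolated just before the statement — lengthening a cycle, \ref{it:deviation-D1}, or breaking it and merging two unicycles, \ref{it:deviation-D2} — and to bound the number of moves of each type separately, exploiting the coalescence already noted: along an improvement path the partition of $\vertices$ into unicycle vertex-sets only coarsens, staying unchanged under a \ref{it:deviation-D1} move and being strictly coarsened under a \ref{it:deviation-D2} move, with the absorbing unicycle keeping its cycle. I would record the merge history as a forest on the $p\coloneqq\numrc(\actprof_{0})$ initial unicycles (``lineages''): in each \ref{it:deviation-D2} move the broken lineage becomes a child of the absorbing one. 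Since $\graph$ is loopless, every unicycle contains a cycle of length $\geq2$, so $p\leq\floor{\nPlayers/2}$; and the number $L_{2}$ of \ref{it:deviation-D2} moves is exactly $p-r$, where $r\geq1$ is the number of unicycles in $\actprof_{L}$.

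For the \ref{it:deviation-D1} count $L_{1}$, the key monotonicity is that the cycle length of a lineage never decreases: a \ref{it:deviation-D1} move affecting it strictly increases it (profitability forces the new cycle to be strictly longer, since the deviating player stays on it and the unicycle vertex-set is unchanged, by \eqref{eq:DBPG-cost}), absorbing another lineage leaves it unchanged, and being absorbed ends the lineage; moreover it never exceeds the current size of that lineage's vertex-set, which is itself non-decreasing. Hence the number of \ref{it:deviation-D1} moves borne by a lineage $\lambda$ is at most $s_{\lambda}-2$, where $s_{\lambda}$ is the size of $\lambda$'s unicycle at the moment it is absorbed (or at time $L$ if it survives), so $L_{1}\leq\sum_{\lambda}(s_{\lambda}-2)=\bigl(\sum_{\lambda}s_{\lambda}\bigr)-2p$. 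Now $s_{\lambda}$ equals the total initial size of the sub-forest rooted at $\lambda$, whence $\sum_{\lambda}s_{\lambda}=\sum_{\mu}b_{\mu}\,(1+\operatorname{depth}(\mu))$, with $b_{\mu}\geq2$ the initial size of lineage $\mu$ and depths taken in the merge forest; a short extremal estimate — subject to $\sum_{\mu}b_{\mu}=\nPlayers$, $b_{\mu}\geq2$, and $p$ nodes, this sum is largest for a path-shaped forest carrying all the excess weight at its deepest node — gives $\sum_{\lambda}s_{\lambda}\leq p\nPlayers-p^{2}+p$.

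Putting the pieces together,
\[
L=L_{1}+L_{2}\leq\bigl(p\nPlayers-p^{2}+p-2p\bigr)+(p-r)=p\nPlayers-p^{2}-r\leq p(\nPlayers-p)-1\leq\tfrac14\nPlayers^{2}-1,
\]
using $r\geq1$ and the bound $p(\nPlayers-p)\leq\nPlayers^{2}/4$. For tightness with $\nPlayers=2p$ even I would exhibit a graph and a path saturating every inequality: start from $p$ disjoint $2$-cycles $C_{1},\dots,C_{p}$ and, for $j=1,\dots,p-1$ in turn, first perform one \ref{it:deviation-D2} move that breaks the current unicycle on $C_{p-j+1}\cup\dots\cup C_{p}$ (which at that point carries $2j$ vertices and a cycle of length $2j$) by redirecting one of its cycle players into $C_{p-j}$, and then perform $2j$ successive \ref{it:deviation-D1} moves lengthening the cycle of the resulting unicycle on $C_{p-j}\cup\dots\cup C_{p}$, one vertex at a time, from length $2$ up to its full size $2(j+1)$. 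This is an improvement path with $p-1$ moves of type \ref{it:deviation-D2} and $\sum_{j=1}^{p-1}2j=p(p-1)$ of type \ref{it:deviation-D1}, hence of length $p^{2}-1=\tfrac14\nPlayers^{2}-1$; the edges it uses can be taken pairwise distinct, so $\graph$ is simple, and for any $\initial$ of full support each of these moves is strictly profitable by \eqref{eq:DBPG-cost}. (For odd $\nPlayers$ the same reasoning yields the sharp value $\floor{\nPlayers^{2}/4}-1$.)

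The main obstacle is the bound on $L_{1}$: one has to see that lengthenings of cycles that are eventually destroyed still count towards the path length but are capped by the \emph{eventual} size of the lineage carrying them, and then to convert this, through the merge-forest identity $\sum_{\lambda}s_{\lambda}=\sum_{\mu}b_{\mu}(1+\operatorname{depth}(\mu))$ and its extremal estimate, into the clean quadratic bound. The tight construction is then mostly a matter of choosing the edge set so that all the inequalities become equalities (initial cycles of length $2$, each lengthening adding exactly one vertex, each cycle grown to full size before being broken, and a path-shaped merge forest), which is routine once this mechanism is understood.
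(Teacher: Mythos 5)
Your proposal is correct and rests on the same two facts as the paper's proof — profitable moves either strictly lengthen a cycle (bounded between $2$ and the unicycle's size) or merge two unicycles (at most $p-1$ times) — and it reaches the identical intermediate bound $p(\nPlayers-p)-1$ before maximizing over $p$; the only difference is bookkeeping, since the paper bounds the type-\ref{it:deviation-D1} moves phase-by-phase via the recursion $\phi_k\leq(\nPlayers-2k+1)+\phi_{k-1}$ whereas you account for them lineage-by-lineage through the merge forest. Your tightness construction (coalescing $2$-cycles one at a time, growing each merged cycle to full length before the next merge) is also essentially the one in the paper.
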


\proof{Proof.}
	Let $\phi_k$ be the maximum length of an improvement path when we start from a strategy profile with 
	$\numrc(\actprof)=k$ unicycles. As noted before, there are two types of profitable deviations, \ref{it:deviation-D1} and \ref{it:deviation-D2},
	in which either the length of a cycle increases or a unicycle merges into another. Since each unicycle contains at least 2 vertices,
	there can be at most $\nPlayers-2k$ deviations of type \ref{it:deviation-D1} before dropping in the next deviation to $k-1$ unicycles.
	This yields the bound
\[
\phi_k\leq (\nPlayers-2k+1)+\phi_{k-1}
\]
	and inductively we get
\[
\phi_k\leq \sum_{i=2}^k(\nPlayers-2i+1)+\phi_{1}=k\nPlayers-k^2+1-\nPlayers+\phi_1.
\]
	By the same argument as above we have $\phi_1\leq \nPlayers-2$ which yields
\[
\phi_k\leq k\nPlayers-k^2-1.
\]
	The maximum of the last expression is attained at $k=\lfloor\frac{\nPlayers}{2}\rfloor$. Ignoring the rounding and maximizing
	for $k\in\reals$ we get $\phi_k\leq \frac{1}{4}\nPlayers^2-1$.
	
	We next show that this bound can be reached.
	Consider a complete graph $\completegraph_{\nPlayers}$ with an even number of vertices and  a uniform initial measure $\initial$. 
	Take an initial strategy profile with 
	exactly $k=\nPlayers/2$ cycles $\rc_{1},\dots,\rc_{k}$ with $2$ players each, and
	consider the following sequence of unilateral profitable deviations:
	\begin{itemize}
		\item[(1)] Break the cycle $\rc_{1}$ by connecting it as a path to the cycle $\rc_{2}$. 
		This counts as a single step, after which the cycle $\rc_{2}$ has a ``tail'' composed by the vertices in $\rc_{1}$. 
		Call $\play$ the vertex in $\rc_{2}$ to which such tail is connected and let $\playalt$ be the predecessor of $\play$ in $\rc_{2}$. 
		Note that $\playalt$ can reduce her cost by connecting to any vertex of the tail:
		the farther the selected player in the tail, the lower the resulting cost for the deviating player $\playalt$. 
		Consider the scenario where $\playalt$ chooses the \emph{worst} profitable deviation, by selecting the closest player in the tail as new designated out-neighbor.
		Repeat this procedure until $\rc_{1}$ and $\rc_{2}$ are merged into a unique cycle $\rc_{1,2}$ with 4 vertices.
		This requires 2 steps. 
		
		\item[(2)] Break the cycle $\rc_{1,2}$ by connecting it as a path to $\rc_{3}$. This requires $1$ step.
		Enlarge the cycle $\rc_{3}$ by collecting one by one the elements of $\rc_{1,2}$ as in (1) until merging into a unique cycle $\rc_{1,2,3}$
		with 6 vertices. This requires $4$ steps. 
		
		\item[]\hspace{1cm}$\vdots$
		
		\item[($k$)] Break the cycle $\rc_{1,\dots,k-1}$ and connect it to $\rc_{k}$. This requires $1$ step.
		Enlarge $\rc_{k}$ by collecting the elements of $\rc_{1,\dots,k-1}$ as before. 
		This requires $2(k-1)$ steps. 
		This final cycle is Hamiltonian, hence a \ac{NE} (see \cref{re:Hamilton}).
	\end{itemize}
	In total we have $\sum_{i=1}^{k-1} 2i=k(k-1)$ steps of type \ref{it:deviation-D1} and $k-1$ steps of type \ref{it:deviation-D2},
	which altogether give exactly $(k+1)(k-1)=k^2-1=\frac{1}{4}\nPlayers^2-1$ steps.
\Halmos
\endproof

\begin{remark}
	Notice that the trivial upper bound for the  length of a path in the meta-graph of the game is $\abs{\actions}$. For a directed graph with minimum out-degree 2, that is,  when players can really act strategically, the size of $\actions$ is  exponentially large with $\abs{\actions}\geq 2^{\nPlayers}$,  which is far worse than the quadratic bound established above.
\end{remark}

%
%

\section{The stochastic buck-passing game.}
\label{se:stochastic}
We now extend the deterministic model of \cref{se:deterministic} by allowing players to pass the buck at random to some neighbor. 
Specifically,  the strategy set of  player $\play\in\vertices$  is now the simplex of probabilities over $\neighborsout_{\play}$. 
This is isomorphic to the set $\simplex_{\play}$  of probability vectors $\transitvec_{\play}$ on $\vertices$ such that $\transit_{\play\playalt}=0$ for all $\playalt \not\in \neighborsout_{\play}$. 
Moreover,  $\simplex=\times_{\play\in\vertices}\simplex_{\play}$ denotes the set of strategy profiles.
With a slight abuse of notation, the same symbol $\transitmatrix$ will denote the strategy profile $\parens{\transitvec_{1},\dots,\transitvec_{\nPlayers}}\in\simplex$ and the stochastic matrix $[\transit_{\play\playalt}]_{\play,\playalt\in\vertices}$ whose rows are $\transitvec_{1},\dots,\transitvec_{\nPlayers}$. 
The symbol $\graph_{\transitmatrix}$ denotes the induced \emph{weighted} directed graph with vertices in  $\vertices$, edges in $\edges_{\transitmatrix}\coloneqq\braces{(\play,\playalt):\transit_{\play\playalt}>0}$, and weights $\transit_{\play\playalt}$.

At the start of the game each player chooses a strategy $\transitvec_{\play}\in\simplex_{\play}$. 
At time $\per=0$  the buck is given  to a player $\play$ drawn at random according to the measure $\initial$. 
This player then passes the buck to a random neighbor sampled according 
to $\transitvec_{\play}$,  and so on. 
Rigorously, the process is a time-homogeneous Markov chain $(\markov_{\per})_{\per\ge0}$ with initial measure $\initial$ and transition matrix $\transitmatrix$, that is,
\begin{equation}
\label{eq:Markov-chain}
\prob_{\transitvec}(\markov_{0}=\play)=\initial_{\play} \quad\text{and}\quad\prob_{\transitvec}(\markov_{\per+1}=\playalt\mid\markov_{\per} = \play)=\transit_{\play\playalt},
\end{equation}
where $\prob_{\transitvec}$ is the probability measure induced by the strategy profile $\transitvec$.
The cost $\cost_{\play}\colon\simplex\to\reals$ of player $\play$ is again defined by \cref{eq:SBPG-cost}, but this time the expectation is taken  with respect to both the initial measure $\initial$ and the transition matrix $\transitmatrix$.
We call $\game(\graph,\initial,\simplex)$ a \acfi{SBPG}\acused{SBPG}, and we write $\NE(\simplex)$ for the  set of its \aclp{NE}.

The analysis of \acp{SBPG} requires some  standard concepts in the theory of Markov chains.
We recall that a \emph{recurrent class} in $\graph_{\transitmatrix}$ is a strongly connected 
component  $\rc$ which is maximal by inclusion.
For each $\transitmatrix$ the vertex set can be partitioned into a finite disjoint union of sets, that is,
\begin{equation}\label{eq:partition}
\vertices=\vertices_{\transitmatrix}^{0}\cupdot\rc_{\transitmatrix}^{1}\cupdot\dots\cupdot\rc_{\transitmatrix}^{\numrc(\transitmatrix)},
\end{equation}
where, for all $\countrc\in\{1,\dots\numrc(\transitvec) \}$, $\rc_{\transitmatrix}^\countrc$ is a recurrent class in $\graph_{\transitmatrix}$,  $\vertices_{\transitmatrix}^{0}$ is the set of \emph{transient vertices} that do not belong to any recurrent class, and $\cupdot$ indicates the disjoint union. 

The restriction of the chain to each $\rc_{\transitmatrix}^{\countrc}$ is itself an \emph{irreducible chain} which supports a 
unique stationary measure $\stationary_{\transitmatrix}^{\countrc}$. For convenience we extend this measure to the 
full vertex set by setting  $\stationary_{\transitmatrix}^{\countrc}(\play)=0$ for $\play\not\in \rc_{\transitmatrix}^{\countrc}$.
These measures are characterized by the classical ergodic theorem.

\begin{theorem}[Ergodic Theorem]
	\label{th:ergodic}
	Consider an irreducible Markov chain $(\markov_{\per})_{\per\ge0}$  on a finite state space $\vertices$, with transition matrix $\transitvec$ and initial distribution $\initial$. Let $\stationary_{\transitvec}$ be the unique stationary measure, and $\horizon_{\play}=\inf\braces{\per >0\vert \markov_{\per}=\play}$ the hitting time of state $\play$. Then
	\begin{equation}\label{eq:stdis}
	\expect
	\bracks*{\lim_{\horizon\to\infty}\frac{1}{\horizon}\sum_{\per=1}^{\horizon}\mathds{1}_{\braces{\markov_{\per}=\play}}}=\stationary_{\transitvec}(\play)
	=\frac{1}{\expect
		[\horizon_{\play}\mid \markov_{0}=\play]}.
	\end{equation}
\end{theorem}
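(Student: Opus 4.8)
The plan is to prove the two equalities in \eqref{eq:stdis} by reducing everything to a single reference vertex together with the strong Markov property, using throughout that a finite irreducible chain is automatically positive recurrent, so that every state is visited infinitely often almost surely and $\expect[\horizon_{\play}\mid\markov_{0}=\play]<\infty$. I would first recall why $\stationary_{\transitvec}$ is well defined: existence of a stationary probability vector follows from compactness of the simplex applied to the Cesàro averages $\frac1N\sum_{\per=0}^{N-1}\initial\transitmatrix^{\per}$, any accumulation point of which is $\transitmatrix$-invariant; uniqueness is classical for irreducible chains, since the only bounded $\transitmatrix$-harmonic functions are the constants (maximum principle plus irreducibility), so the space of invariant measures is one-dimensional.

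For the identity $\stationary_{\transitvec}(\play)=1/\expect[\horizon_{\play}\mid\markov_{0}=\play]$, I would fix a reference vertex $\play$ and consider the expected occupation measure of one excursion away from $\play$,
\[
\PRmeasure(\playalt)\coloneqq\expect\bracks*{\sum_{\per=0}^{\horizon_{\play}-1}\mathds{1}_{\braces{\markov_{\per}=\playalt}}\mid\markov_{0}=\play}.
\]
By positive recurrence this is a finite measure with $\PRmeasure(\play)=1$ and total mass $\sum_{\playalt}\PRmeasure(\playalt)=\expect[\horizon_{\play}\mid\markov_{0}=\play]$. A first-step decomposition — write $\PRmeasure(\playalt)=\sum_{\per\ge0}\prob(\markov_{\per}=\playalt,\horizon_{\play}>\per\mid\markov_{0}=\play)$, apply the Markov property to shift the time index by one, and treat the cases $\playalt=\play$ and $\playalt\ne\play$ separately — shows that $\PRmeasure$ is $\transitmatrix$-invariant. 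By the uniqueness recalled above, $\stationary_{\transitvec}=\PRmeasure/\sum_{\playalt}\PRmeasure(\playalt)$, and evaluating at the reference vertex gives $\stationary_{\transitvec}(\play)=1/\expect[\horizon_{\play}\mid\markov_{0}=\play]$.

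For the ergodic average I would use a renewal/strong-law argument. Let $0<\horizon^{(1)}<\horizon^{(2)}<\cdots$ be the successive times at which the chain visits $\play$; by the strong Markov property at these (a.s.\ finite) hitting times, the gaps $\horizon^{(r+1)}-\horizon^{(r)}$ are i.i.d.\ with mean $\expect[\horizon_{\play}\mid\markov_{0}=\play]<\infty$, while $\horizon^{(1)}$ is an independent initial delay that washes out when the chain does not start at $\play$. The strong law of large numbers then gives $\horizon^{(r)}/r\to\expect[\horizon_{\play}\mid\markov_{0}=\play]$ almost surely. Writing $N_{\horizon}=\sum_{\per=1}^{\horizon}\mathds{1}_{\braces{\markov_{\per}=\play}}$ for the number of visits to $\play$ up to time $\horizon$, one has $\horizon^{(N_{\horizon})}\le\horizon<\horizon^{(N_{\horizon}+1)}$ with $N_{\horizon}\to\infty$ a.s., so sandwiching $N_{\horizon}/\horizon$ between $N_{\horizon}/\horizon^{(N_{\horizon}+1)}$ and $N_{\horizon}/\horizon^{(N_{\horizon})}$ and letting $\horizon\to\infty$ yields
\[
\lim_{\horizon\to\infty}\frac{1}{\horizon}\sum_{\per=1}^{\horizon}\mathds{1}_{\braces{\markov_{\per}=\play}}=\frac{1}{\expect[\horizon_{\play}\mid\markov_{0}=\play]}=\stationary_{\transitvec}(\play)\qquad\text{a.s.}
\]
This limit is the same constant whatever the starting vertex, hence under any initial distribution $\initial$; since the left-hand random variable is bounded in $[0,1]$ and a.s.\ equal to $\stationary_{\transitvec}(\play)$, taking expectations gives the first equality in \eqref{eq:stdis}.

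The step I expect to be most delicate is the renewal argument: one must be careful that the successive inter-visit gaps are genuinely i.i.d.\ — this is exactly where the strong Markov property enters, with the successive hitting times of $\play$ as stopping times — that the incomplete last excursion is harmless (controlled by the sandwich $\horizon^{(N_{\horizon})}\le\horizon<\horizon^{(N_{\horizon}+1)}$), and that positive recurrence, which underlies both the finiteness of the mean return time and the a.s.\ existence of the limit, comes for free from finiteness of $\vertices$ and irreducibility of $\transitmatrix$. The remaining ingredients — existence and uniqueness of $\stationary_{\transitvec}$ and the $\transitmatrix$-invariance of $\PRmeasure$ — are routine linear algebra together with a first-step analysis.
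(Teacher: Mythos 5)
Your proof is correct. The paper does not prove this statement itself---it is the classical ergodic theorem for finite irreducible chains, cited to \citet[Theorem~C.1 and Proposition~1.19]{LevPer:AMS2017}---and your argument (excursion occupation measure made invariant by a first-step decomposition, plus the strong-Markov/renewal sandwich with the SLLN and bounded convergence) is essentially the standard proof given in that reference, so there is nothing to add.
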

The proof of \cref{th:ergodic} can be found, for instance, in \cite[Theorem~C.1 and Proposition~1.19]{LevPer:AMS2017}.

For a general Markov chain (not necessarily irreducible), starting from any state $\playalt\in\vertices$ the chain $\markov$ will  eventually be absorbed in a recurrent class $\rc_{\transitvec}^\countrc$ with probability
\begin{equation}
\label{eq:absorb}
\absorb_{\transitmatrix}^{\playalt\to \countrc}\coloneqq \prob\parens{\text{there exists }\horizon\in\naturals\text{ such that  }\markov_{\horizon}\in \rc_{\transitvec}^\countrc \mid \markov_{0}=\playalt},
\end{equation}
so that the total probability with which the buck is absorbed in $\rc_{\transitvec}^{\countrc}$ is 
\begin{equation}
\label{eq:total-mass}
\initial_{\transitvec}^{\countrc}=\sum_{\playalt\in\vertices}\initial_{\playalt}\, \absorb^{\playalt\to \countrc}_{\transitvec}.
\end{equation}
Thus, applying \cref{th:ergodic} on each recurrent class  $\rc_{\transitvec}^{\countrc}$, the cost  for player $\play$  in \cref{eq:SBPG-cost} can be finally expressed as
\begin{equation}
\label{eq:cost-general}
\cost_{\play}(\transitvec)=\sum_{\countrc=1}^{\numrc(\transitvec)}\initial_{\transitvec}^{\countrc}\, \stationary_{\transitvec}^{\countrc}(\play).
\end{equation}
Note that this is  equivalent to \cref{eq:DBPG-cost} when $\transitvec$ is a deterministic strategy profile.

\begin{remark}
\label{re:mixed}
	A \ac{SBPG} is \emph{not} the mixed extension of the deterministic game \ac{DBPG}. 
	Given  the cost function  in \cref{eq:SBPG-cost}, a probability vector $\transitvec_{\play}\in\simplex_{\play}$ is not a mixed strategy for \ac{DBPG}. 
	The difference is that in a \ac{SBPG}, each time a player receives the buck, a new neighbor is drawn at random, whereas in the mixed extension of \ac{DBPG} this random neighbor is drawn at the beginning of the game and kept fixed thereafter. 
	To illustrate the difference, let $\graph$ be  the complete graph  on three vertices $\vertices=\braces{1,2,3}$ and consider the strategy profile 
	(see \cref{fig:mixed-strategy}):
	\begin{equation}\label{eq:strategySBPG}
	\transitvec_{1}=\parens*{0,\frac{1}{2},\frac{1}{2}}, \quad
	\transitvec_{2}=\parens{1,0,0}, \quad 
	\transitvec _{3}=\parens{0,1,0}.
	\end{equation}
	
	\begin{figure}[h]
		\FIGURE
		{\includegraphics[width=4cm]{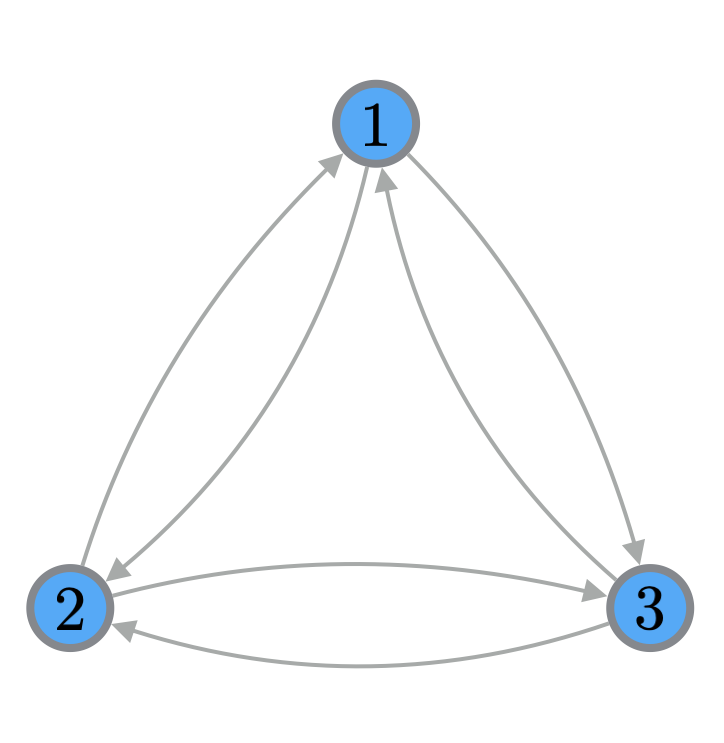}
		\qquad\qquad
		\includegraphics[width=4cm]{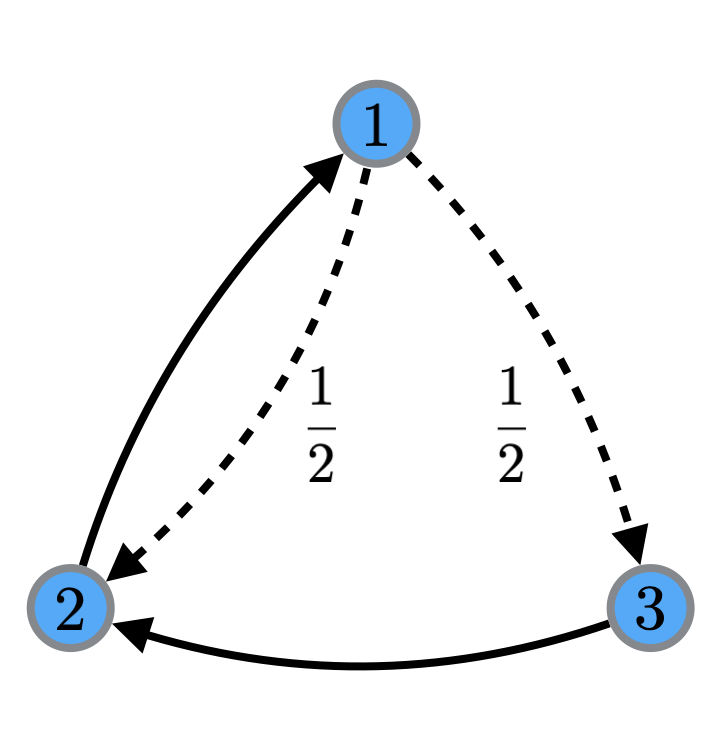}}
		{Left: A complete graph with $3$ vertices. Right: The strategy \eqref{eq:strategySBPG}.
		\label{fig:mixed-strategy}}
	{}
	\end{figure}
	
	\noindent
	Using \eqref{eq:cost-general}, we can see that in the \ac{SBPG} the profile  $\transitmatrix$ has a cost vector equal to the stationary distribution it induces, that is,
	\begin{equation*}\label{eq:stationaryMDBP}
	\parens*{\frac{2}{5},\frac{2}{5},\frac{1}{5}}.
	\end{equation*}
	If instead we take  \eqref{eq:strategySBPG}  as a mixed strategy in the \ac{DBPG}, then the cost for player $1$ is
	\begin{equation*}\label{eq:cost1SBPG}
	\begin{aligned}
	\cost_{1}(\transitvec)
	&=
	\frac{1}{2}\,\expect\bracks*{\lim_{\horizon\to\infty}\frac{1}{\horizon}\sum_{\per=1}^{\horizon}\buck_{1,\per}(2,1,2)}+
	\frac{1}{2}\,\expect\bracks*{\lim_{\horizon\to\infty}\frac{1}{\horizon}\sum_{\per=1}^{\horizon}\buck_{1,\per}(3,1,2)}\\
	&=
	\frac{1}{2}\frac{1}{2}+\frac{1}{2}\frac{1}{3}=\frac{5}{12}\neq\frac{2}{5}.
	\end{aligned}
	\end{equation*}
\end{remark}

\subsection{Relation between the deterministic and stochastic games.}
Even if  \ac{SBPG} is not the mixed extension of \ac{DBPG}, we next show that---as it happens for mixed extensions---equilibria for the deterministic game are preserved in the stochastic setting.
In line with the notation introduced in \cref{se:deterministic},  $\actions_{\play}$ denotes the set of extreme points of $\simplex_{\play}$, so a strategy in $\actions_{\play}$ is a degenerate measure and corresponds to choosing an out-neighbor with probability $1$.

\begin{proposition}\label{pr:MDP}
	For each  \acl{BPG} we have $\NE(\actions)\subseteq\NE(\simplex)$.
\end{proposition}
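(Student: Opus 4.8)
The plan is to fix a pure equilibrium $\actprof\in\NE(\actions)$, a player $\play$, and an arbitrary deviation $\transitvec_{\play}\in\simplex_{\play}$, and to check that the profile $\transitmatrixalt\coloneqq(\transitvec_{\play},\actprof_{-\play})$ satisfies $\cost_{\play}(\transitmatrixalt)\ge\cost_{\play}(\actprof)$. The key structural remark is that the other players still use pure strategies, so in the induced graph $\graph_{\transitmatrixalt}$ only $\play$ has out-degree larger than one; hence, starting from any out-neighbor $\playalt\in\neighborsout_{\play}$, the buck travels along a \emph{deterministic} path that either reaches $\play$ after some number $d(\playalt)\ge1$ of steps, or is absorbed forever in a cycle that avoids $\play$. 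If $\cost_{\play}(\actprof)=0$ there is nothing to prove, since costs are nonnegative, so I would assume $\cost_{\play}(\actprof)>0$; by \eqref{eq:DBPG-cost} this means that $\play$ lies on the cycle $\rc_{\actprof}^{\countrcalt}$ of its unicycle, where $\countrcalt=\countrc(\play)$, with $\initial_{\actprof}^{\countrcalt}>0$ and $\cost_{\play}(\actprof)=\initial_{\actprof}^{\countrcalt}/L$ for $L\coloneqq\abs{\rc_{\actprof}^{\countrcalt}}$.

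Next I would turn the equilibrium condition into combinatorial constraints. Since $\cost_{\play}(\actprof)>0$, a deviation of type \ref{it:deviation-D2} that sends the buck outside $\tc_{\actprof}^{\countrcalt}$ would reduce $\play$'s cost to $0$, so optimality of $\actprof$ forces \emph{every} out-neighbor of $\play$ to belong to $\tc_{\actprof}^{\countrcalt}$. Consequently the deterministic path from any $\playalt\in\neighborsout_{\play}$ flows down the trees of $\tc_{\actprof}^{\countrcalt}$, into the cycle, and back to $\play$, so $d(\playalt)$ is finite; moreover, committing to $\playalt$ once and for all leaves $\play$ in a unicycle with the same vertex set $\tc_{\actprof}^{\countrcalt}$ but a cycle of length $1+d(\playalt)$, hence with cost $\initial_{\actprof}^{\countrcalt}/(1+d(\playalt))$. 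Optimality of $\act_{\play}$ among all these pure options then gives the crucial inequality $1+d(\playalt)\le L$ for every out-neighbor $\playalt$ of $\play$.

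Finally I would compute $\cost_{\play}(\transitmatrixalt)$ via \eqref{eq:cost-general}. Because every out-neighbor of $\play$ sits in $\tc_{\actprof}^{\countrcalt}$ and its deterministic path returns to $\play$, the vertex $\play$ is recurrent in $\graph_{\transitmatrixalt}$; let $\rc'$ be its recurrent class, namely $\play$ together with the vertices traversed along those return paths. Since every vertex of $\tc_{\actprof}^{\countrcalt}$ flows deterministically into $\play$ and is therefore absorbed in $\rc'$ with probability one, the mass entering $\rc'$ satisfies $\initial_{\transitmatrixalt}^{\rc'}\ge\initial_{\actprof}^{\countrcalt}$. On the other hand the return time of the buck to $\play$ is $1+d(\playalt)$ with probability $\transit_{\play\playalt}$, so by the Ergodic Theorem~\eqref{eq:stdis} one has $\stationary_{\transitmatrixalt}^{\rc'}(\play)=\bigl(\sum_{\playalt}\transit_{\play\playalt}\,(1+d(\playalt))\bigr)^{-1}\ge 1/L$, using $1+d(\playalt)\le L$. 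As $\play$ lies in no other recurrent class, \eqref{eq:cost-general} gives $\cost_{\play}(\transitmatrixalt)=\initial_{\transitmatrixalt}^{\rc'}\,\stationary_{\transitmatrixalt}^{\rc'}(\play)\ge\initial_{\actprof}^{\countrcalt}/L=\cost_{\play}(\actprof)$, as required. The step I expect to be most delicate is the bookkeeping for $\graph_{\transitmatrixalt}$ — correctly identifying the recurrent class $\rc'$ of $\play$ and the total initial mass absorbed into it — together with the passage from the deterministic deviations \ref{it:deviation-D1} and \ref{it:deviation-D2} to the single return-time bound $1+d(\playalt)\le L$.
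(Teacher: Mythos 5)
Your argument is correct and is essentially the paper's own proof: both rest on the fact that, since the other players stay pure, the expected return time to $\play$ under a mixed deviation is the convex combination $\sum_{\playalt}\transit_{\play\playalt}\,(1+d(\playalt))$ of the return times through the pure out-neighbors, each of which is at most $L$ by the pure equilibrium condition, and both invoke the Ergodic Theorem \eqref{eq:stdis} together with the observation that positivity of $\cost_{\play}(\actprof)$ forces all of $\play$'s out-neighbors into $\tc_{\actprof}^{\countrcalt}$. Your version merely phrases the paper's contradiction argument directly and makes the bookkeeping of the recurrent class and absorbed mass explicit; the content is the same.
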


\proof{Proof.}
	By contradiction, suppose there exists some $\actprof\in\NE(\actions)$ such that $\actprof\not\in\NE(\simplex)$. 
	Notice that
	\begin{equation}\label{eq:cost-exp}
	\begin{split}
	\expect_{\transitvec}[\horizon_{\play} \mid \markov_{0}=\play]
	&=1+\sum_{\playalt\in\vertices}\transit_{\play\playalt}\expect_{\transitvec}[\horizon_{\play} \mid \markov_{0}=\playalt]\\
	&=1+\sum_{\playalt\in\vertices}\transit_{\play\playalt}\expect_{\transitvec_{-\play}}[\horizon_{\play} \mid \markov_{0}=\playalt],
	\end{split}
	\end{equation}
	with the obvious meaning of the symbols.
	In particular, if we consider a deterministic profile $\actprof\in\actions$, then 
	\begin{equation}\label{eq:cost-exp-det}
	\expect_{\actprof}[\horizon_{\play} \mid \markov_{0}=\play]=1+\expect_{\actprof_{-\play}}[\horizon_{\play} \mid \markov_{0}=\act_{\play}].
	\end{equation}
	Consider first the case where $\numrc(\actprof)=1$. 
	In this setting, if $\actprof\in\NE(\actions)$, then, thanks to \eqref{eq:stdis}, for every $\play\in\vertices$, 
	\begin{equation}\label{eq:pure-Nash-arg-max}
	\act_{\play}\in\argmax_{\playaltalt\in\neighborsout_{\play}}\,\expect_{\actprof_{-\play}}[\horizon_{\play}\mid \markov_{0}=\playaltalt].
	\end{equation}
	Hence, by \eqref{eq:cost-exp}, if $\actprof\not\in\NE(\simplex)$, there exists  $\play\in\vertices$ and some $\transitvec_{\play}\in\simplex_{\play}$ such that
	\begin{equation}
	\sum_{\playalt\in\vertices}\transit_{\play\playalt}\expect_{\actprof_{-\play}}[\horizon_{\play} \mid \markov_{0}=\playalt]>\expect_{\actprof_{-\play}}[\horizon_{\play} \mid \markov_{0}=\act_{\play}].
	\end{equation}
	In particular, there exists some $\playalt\neq\act_{\play}$ such that
	\begin{equation}
	\expect_{\actprof_{-\play}}[\horizon_{\play} \mid \markov_{0}=\playalt]>\expect_{\actprof_{-\play}}[\horizon_{\play} \mid \markov_{0}=\act_{\play}].
	\end{equation}
	In this case, $\actalt_{\play}=\playalt$ is a profitable deviation for player $\play$ for the \acl{DBPG}. 
	This contradicts the assumption that $\actprof\in\NE(\actions)$.	
	
	If $\actprof$ induces more than a single recurrent class and $\actprof\in\NE(\actions)$, then, whenever $\play$ lies in a cycle $\rc_\countrcalt$ of $\graph_{\actprof}$ having $\initial_{\actprof}^{\countrcalt}>0$, all her out-neighbors are in the same unicycle. 
	Hence, by the same argument as above, there is no randomized profile $\transitvec=(\transitvec_\play,\actprof_{-\play})$ such that  $\stationary^{\countrcalt}_{\transitvec}(\play)<\stationary^{\countrcalt}_{\actprof}(\play)$. On the other hand, if $\initial^{\countrcalt}_{\actprof}=0$ then, clearly, player $\play$ has no improving deviation.  Therefore, also in the case of multiple recurrent classes we have $\actprof\in \NE(\actions)\Rightarrow\actprof\in\NE(\simplex)$.
\Halmos
\endproof

The \acl{SBPG} gives each player a richer set of strategies, so it is not surprising that we get a larger set of equilibria. 
\cref{ex:di-wheel} below shows that the set of equilibria may be strictly larger in the stochastic setting, and 
illustrates that some players may be favored by the larger strategy sets, whereas others will be affected negatively.

\begin{example}\label{ex:di-wheel}
	Consider a graph consisting of a uni-directional cycle with vertices $\play=1,\ldots,\nPlayers-1$ 	plus a central player $\nPlayers$ connected bi-directionally to all the other vertices (see \cref{poa-cex2}). 
	For each player $\play$ on the outer cycle, $\play_{-}$ and $\play_{+}$ denote, respectively, the predecessor and successor along the cycle.
	
\begin{figure}[h]
	\FIGURE
	{\includegraphics[width=5cm]{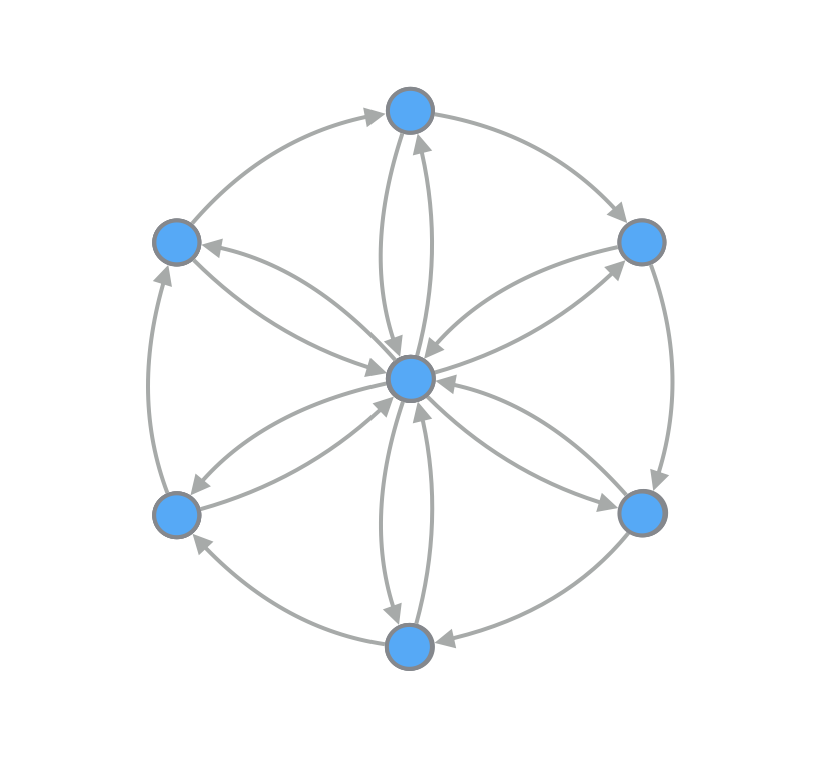}}
	{A clockwise wheel, with $6$ outer vertices plus a central player.\label{poa-cex2}}
	{}
	\end{figure}
	
	Consider first the \ac{DBPG}. 
	Once the central player has designated an out-neighbor $\play$, it is a dominant strategy for the latter to forward the buck  along the cycle  to her out-neighbor $\play_+$. 
	The same holds for all subsequent players along the cycle, except for player $\play_-$, for whom it is a dominant strategy to return the buck to the central player. 
	This yields a Hamiltonian cycle. 
	There are exactly $\nPlayers-1$ such cycles, one for each $\play$, and these are exactly the equilibria in $\NE(\actions)$ (see \cref{re:Hamilton}). 
	In each of these equilibria each player pays $1/\nPlayers$.
	
	Now consider the strategy profile $\transitvec\in\simplex$ in the \ac{SBPG} in which the central player sends the buck uniformly at random with probability $1/(\nPlayers-1)$ to each vertex in the outer ring, whereas any other player $\play$ plays a deterministic strategy  sending the buck to the next player $\play_{+}$ along the cycle. 
	We claim that  $\transitvec\in\NE(\simplex)$. The central player is transient and pays $0$, so that she has no profitable deviation. 
	Each player $\play$ on the outer ring pays $1/(\nPlayers-1)$. 
	If she deviates to a stochastic strategy by sending to player $\play_+$ with probability $\proba$ and to the central player with probability $(1-\proba)$, then her expected return time is
	\begin{align*}
	\expect_{\transitvec}[\horizon_{\play} \mid \markov_{0}=\play]
	&=1+\proba\,\expect_{\transitvec}[\horizon_{\play} \mid \markov_{0}=\play_+]
	+(1-\proba)\,\expect_{\transitvec}[\horizon_{\play} \mid \markov_{0}=\nPlayers]\\
	&=1+\proba\,(\nPlayers-2) +(1-\proba)\frac{1}{\nPlayers-1}(1+2+\cdots+(\nPlayers-1))\\
	&=1+\proba\,(\nPlayers-2) +(1-\proba)\frac{\nPlayers}{2}.
	\end{align*}
	For $\nPlayers\geq 5$ this expression is strictly increasing with $\proba$ so that, according to \eqref{eq:stdis}, the minimum cost is attained for $\proba=1$, which proves that $\transitvec$ is indeed a \ac{NE}.
	In this equilibrium the central player pays 0 and is better off than in the \ac{DBPG}, whereas
	all the other players are worse off, since their cost is now $1/(\nPlayers-1)$ rather than $1/\nPlayers$.
	
\end{example}

%
%

\section{Fairness of equilibria.}
\label{se:fairness}


As seen in \cref{ex:multiple-eq}, a \acl{BPG} may have several equilibria and in some of them the total cost is very unevenly spread among players. 
We want to compare---in terms of fairness---the equilibrium cost vectors with the optimum cost vectors that could be achieved by a benevolent social planner, whose goal is to minimize disparity in the way players are treated. 
To this end we adopt a Rawlsian criterion, \citep[see][]{Raw:HUP2009}, and we define the social cost of a strategy profile as the cost incurred by the player who pays the most
\begin{equation}
\label{eq:SC}
\SC(\actprof)=\max_{\play\in\vertices}\cost_{\play}(\actprof),
\end{equation}
so that minimizing this social cost corresponds somehow to maximizing fairness among the players of the game. 

Equilibrium and optimum costs are usually compared in terms of efficiency.
Typically the social cost function  is taken as  the sum of the costs incurred by all players.  
The standard measures of efficiency in games are the \acfi{PoA}\acused{PoA}, i.e., the ratio between the social cost of the worst equilibrium and the minimum social cost, 
and the \acfi{PoS}\acused{PoS}, i.e., the ratio between the social 
cost of the best equilibrium and the minimum social cost. 
Explicitly, if $\SC\colon\actions\to\reals$ is the social cost function for a \acl{DBPG}, we define
\begin{align}
\label{eq:PoA-D}
\PoA(\actions)&\coloneqq\frac{\max_{\actprof\in\NE(\actions)}\SC(\actprof)}{\min_{\actprof\in\actions}\SC(\actprof)},\\
\label{eq:PoS-D}
\PoS(\actions)&\coloneqq\frac{\min_{\actprof\in\NE(\actions)}\SC(\actprof)}{\min_{\actprof\in\actions}\SC(\actprof)}.
\end{align}
For \aclp{SBPG},  $\SC\colon\simplex\to\reals$  and 
\begin{align}
\label{eq:PoA-S}
\PoA(\simplex)&\coloneqq\frac{\max_{\transitvec\in\NE(\simplex)}\SC(\transitvec)}{\min_{\transitvec\in\simplex}\SC(\transitvec)},\\
\label{eq:PoS-S}
\PoS(\simplex)&\coloneqq\frac{\min_{\transitvec\in\NE(\simplex)}\SC(\transitvec)}{\min_{\transitvec\in\simplex}\SC(\transitvec)}.
\end{align}

Since the \acl{BPG} is a constant sum game, efficiency is not an issue: if social cost is the sum of the player costs, all strategy profiles are equally efficient. 
On the other hand, by using the social cost function in \eqref{eq:SC}, the 
\ac{PoA} and \ac{PoS} can be used to measure fairness. 
The smaller the \ac{PoA} (\ac{PoS}), the fairer the worst (best) equilibrium.
We are not the first to use social cost functions that are not the sum of individual costs \citep[see for instance][]{KouPap:CSR2009,KouPap:STACS1999,Vet:FOCS2002,MavMonPap:Springer2008,FouSca:MOR2019}.

The next proposition establishes tight  bounds for both the \ac{PoA} and the \ac{PoS}. 

\begin{proposition}\label{pr:PoA-PoS-mod}
	For any \acl{BPG} $\game(\graph,\initial)$ we have
	\begin{enumerate}[\upshape(a\upshape)]
		\item\label{it:pr:PoS-mod}
		$\PoS(\actions)\leq\PoA(\actions)\leq \nPlayers/2$.
		\item\label{it:pr:PoA-mod}
		$\PoS(\simplex)\leq\PoA(\simplex)\le \nPlayers/2$. 
	\end{enumerate}
	Moreover, there are instances where $\PoS(\actions)=\PoS(\simplex)=\nPlayers/2$ and all these inequalities are satisfied as equalities.
\end{proposition}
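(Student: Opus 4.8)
The plan is to establish the three parts separately, using the constant-sum structure of the game as the common thread. The key observation is that since $\sum_{\play\in\vertices}\cost_\play(\actprof)=1$ for every strategy profile (deterministic or stochastic), the social cost $\SC(\actprof)=\max_\play\cost_\play(\actprof)$ always satisfies $\SC(\actprof)\ge 1/\nPlayers$, with equality exactly when the cost is spread perfectly evenly. Hence the denominator in all four ratios is bounded below; in fact I expect that the benevolent planner can often achieve $\min_{\actprof}\SC(\actprof)=1/\nPlayers$ by routing the buck along a Hamiltonian-type cycle when one exists, but for the \emph{upper} bounds on $\PoA$ this precise value is not needed — only $\min_{\actprof}\SC(\actprof)\ge 1/\nPlayers$ is needed from below. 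For the numerator, I will bound $\SC(\actprof)\le 1/2$ at any equilibrium: a player on a recurrent cycle $\rc_{\actprof}^{\countrc}$ of length $1$ would be forwarding the buck to herself, which is forbidden since $\play\notin\actions_\play$; and more to the point, in any equilibrium supporting positive mass, no cycle of length $1$ or a ``cycle'' through which the whole mass funnels into a single player can persist, because such a player has a profitable deviation (lengthen the cycle, as in deviation (\textsc{d}$_1$), or break it). The clean statement is: at any \ac{NE} every recurrent class carrying positive initial mass has at least two vertices sharing its stationary mass, so no single player can absorb more than $1/2$ of the total. Combining, $\PoA(\actions)\le (1/2)/(1/\nPlayers)=\nPlayers/2$, and since $\PoS\le\PoA$ always, part~(\ref{it:pr:PoS-mod}) follows. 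Part~(\ref{it:pr:PoA-mod}) is identical using \eqref{eq:cost-general}: at a stochastic \ac{NE}, each recurrent class $\rc_{\transitvec}^{\countrc}$ with $\initial_{\transitvec}^{\countrc}>0$ must have at least two vertices, and its stationary measure cannot put mass arbitrarily close to $1$ on any single vertex without that vertex having a profitable deviation (it can redirect probability mass to increase its own return time, exactly as computed in the proof of \cref{pr:MDP} via \eqref{eq:cost-exp}).

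For the tightness claim, I would exhibit a single graph on $\nPlayers$ vertices (with $\nPlayers$ even, say) where the unique structure forced at equilibrium is a $2$-cycle $\rc=\{\play,\playalt\}$ carrying all the initial mass, so that $\cost_\play=\cost_\playalt=1/2$ and everyone else free-rides at cost $0$; then $\SC$ at this equilibrium equals $1/2$. The natural candidate is a graph built from a directed $2$-cycle on two ``core'' vertices together with $\nPlayers-2$ peripheral vertices, each peripheral vertex pointing only into the $2$-cycle and receiving an edge so that a benevolent planner can route everything through a long cycle achieving $\SC=1/\nPlayers$, while the only out-neighbor choices available to $\play,\playalt$ keep them trapped in the $2$-cycle (e.g. $\neighborsout_\play=\{\playalt\}$ and $\neighborsout_\playalt=\{\play\}$, singletons, so $\play$ and $\playalt$ have no strategic choice at all). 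With $\initial$ concentrated on $\{\play,\playalt\}$ this forces a unique equilibrium with social cost $1/2$, and if moreover the planner's optimum is $1/\nPlayers$ — which I would verify by displaying an explicit profile making the graph's unique recurrent class a Hamiltonian cycle of length $\nPlayers$ carrying all the mass, giving $\cost_\play\equiv 1/\nPlayers$ — then $\PoS(\actions)=\PoA(\actions)=\nPlayers/2$, and the same profile is also a stochastic equilibrium (by \cref{pr:MDP}) with the same costs, so $\PoS(\simplex)=\nPlayers/2$ as well; then all inequalities in the display collapse to equalities.

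I would carry this out in the order: (1) the lower bound $\min\SC\ge 1/\nPlayers$ from the constant-sum identity; (2) the equilibrium upper bound $\SC\le 1/2$ at any \ac{NE} via the no-short-cycle / profitable-deviation argument, handled uniformly for $\actions$ and $\simplex$ using \eqref{eq:DBPG-cost} and \eqref{eq:cost-general} respectively; (3) assembling (1) and (2) into the two chains of inequalities using $\PoS\le\PoA$; (4) the tightness example with its two explicit profiles (the trapped $2$-cycle equilibrium and the Hamiltonian-cycle optimum). The main obstacle will be step~(2): making precise, in the stochastic case, why a recurrent class cannot concentrate nearly all its stationary mass on one vertex at equilibrium — one must show that the heaviest vertex $\play$ in $\rc_{\transitvec}^{\countrc}$ always has an out-neighbor $\playalt$ in the same class with $\expect_{\transitvec_{-\play}}[\horizon_\play\mid\markov_0=\playalt]$ large enough that shifting mass toward $\playalt$ strictly increases $\expect_{\transitvec}[\horizon_\play\mid\markov_0=\play]$, hence strictly decreases $\cost_\play$. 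I expect this follows from a return-time / mean-recurrence inequality: if $\stationary_{\transitvec}^{\countrc}(\play)>1/2$ then $\expect[\horizon_\play\mid\markov_0=\play]<2$, forcing $\play$ to return in expected one step, i.e. $\transit_{\play\play}$ or a two-step loop dominates — which is impossible once we rule out loops and note $\play$ can always defer to a neighbor with longer return time. Pinning down this last inequality cleanly is the delicate part; everything else is bookkeeping on the constant-sum identity.
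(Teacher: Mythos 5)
Your upper bound is headed in the right direction but is mis-anchored. You frame the inequality $\SC\le 1/2$ as an \emph{equilibrium} property, to be extracted from profitable-deviation arguments (``at any NE every recurrent class carrying positive initial mass has at least two vertices sharing its stationary mass''), and you flag the stochastic case as the delicate step. In fact the bound holds for \emph{every} strategy profile, deterministic or stochastic, and needs no equilibrium condition at all: since $\graph$ has no self-loops, a player who holds the buck at time $\per$ cannot hold it at time $\per+1$, so $\horizon_\play\ge 2$ almost surely, hence $\expect_{\transitvec}[\horizon_\play\mid\markov_0=\play]\ge 2$ and by \eqref{eq:stdis} every stationary probability is at most $1/2$; with \eqref{eq:cost-general} this gives $\cost_\play(\transitvec)\le 1/2$ for all $\transitvec$. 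This is exactly the closing remark you defer as ``delicate'' --- it is the whole argument, and it is immediate. Note also that your stated justification is not by itself a proof: a recurrent class having two or more vertices does not prevent one of them from carrying more than half the stationary mass; only the return-time bound does. Combined with $\min\SC\ge 1/\nPlayers$ from the constant-sum identity (which you have correctly), this yields $\PoA\le\nPlayers/2$ in both settings.

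The genuine gap is in the tightness construction, which is internally inconsistent. If $\neighborsout_\play=\{\playalt\}$ and $\neighborsout_\playalt=\{\play\}$ are singletons and every peripheral vertex points only into the $2$-cycle, then the \emph{only} cycle in the graph is $\play\to\playalt\to\play$: there is no Hamiltonian cycle for the planner to route along, so $\min_{\actprof\in\actions}\SC(\actprof)=1/2$ (with $\initial$ concentrated on $\{\play,\playalt\}$ the buck is trapped there under every profile), and your example gives $\PoS=\PoA=1$, not $\nPlayers/2$. You cannot simultaneously make the core players strategically trapped under every profile and let the planner achieve $1/\nPlayers$. The paper's construction resolves this tension differently: take two \emph{disjoint} directed cycles of lengths $\nPlayers-2$ and $2$, give a single pivot player on the long cycle one extra edge into the $2$-cycle, and take $\initial$ uniform. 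The profile keeping the cycles separate gives every player cost exactly $1/\nPlayers$, so the optimum is $1/\nPlayers$; but it is not an equilibrium, because the pivot strictly profits by dumping the buck into the $2$-cycle (her cost drops to $0$), and at every equilibrium --- deterministic or stochastic, since any positive deviation probability suffices for absorption --- the $2$-cycle absorbs all the mass and its two players each pay $1/2$. Hence $\PoS(\actions)=\PoS(\simplex)=\PoA(\actions)=\PoA(\simplex)=\nPlayers/2$.
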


\proof{Proof.}
	By definition, $\PoS$ is always smaller than $\PoA$, so that it suffices to establish the upper bound of $\nPlayers/2$ in both \ac{DBPG} and \ac{SBPG}.
	In both cases the worse that can happen to a player is to receive the buck every other period, so that, for any possible strategy profile, no player pays more than $1/2$. On the other hand, since the sum of costs over all players is $1$, the minimum social cost in both settings is at least $1/\nPlayers$.
	This implies a bound of $\nPlayers/2$ in both the deterministic and stochastic cases.
	
	To show that these bounds can be reached consider a graph consisting of two disjoint directed cycles with $\nPlayers-2$ and $2$ players, respectively, and only one pivot player in the longest cycle who has an additional link connecting to the 2-cycle. See \cref{fig:poa-cex1}.
	
	\begin{figure}[h]
		\FIGURE
		{\includegraphics[width=6.5cm]{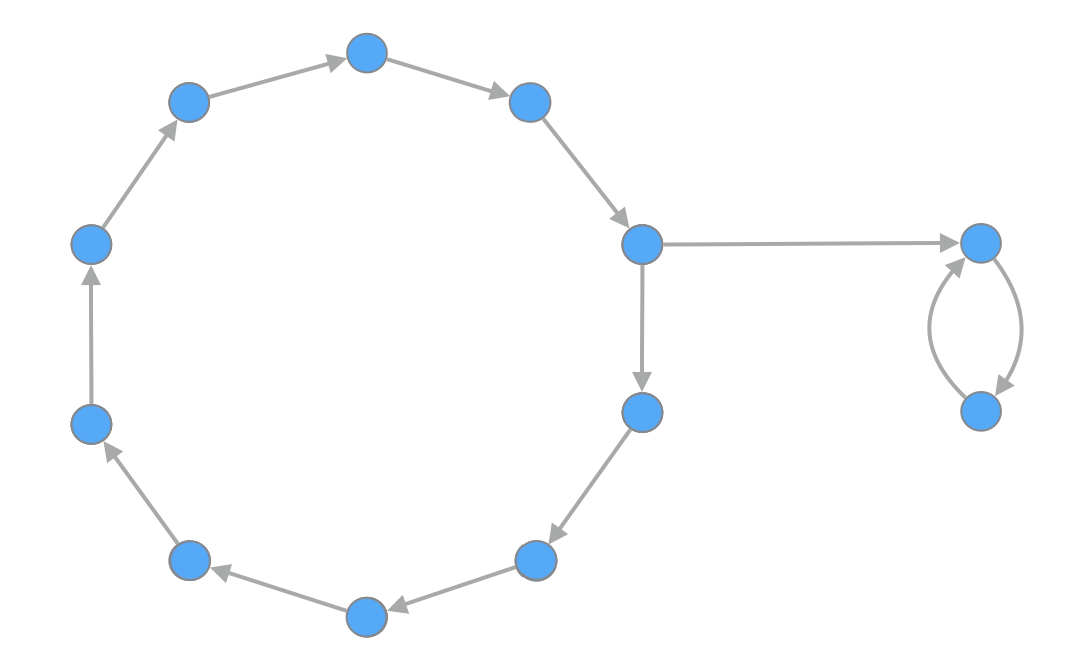}}
		{A graph with $\PoS=\PoA=\nPlayers/2$.\label{fig:poa-cex1}}
		{}
	\end{figure}
	
	This pivot player is the only one who can randomize by sending the buck to the 2-cycle with probability $\proba$ and following the long cycle with probability
	$(1-\proba)$. 
	For a uniform initial measure $\initial$,  the unique strategy that minimizes
	the social cost is the deterministic one which sets $\proba=0$ inducing a social cost of $1/\nPlayers$ (perfectly fair).
	This is not an equilibrium though, since the pivot player may deviate to $\proba\in(0,1]$ in the SBPG and to $\proba=1$ in the \ac{DBPG}, and in all these equilibria the buck is absorbed in the $2$-cycle, whose players pay $1/2$ each. 
	Hence in both cases we get $\PoS(\actions)=\PoS(\simplex)=\nPlayers/2$.
\Halmos
\endproof

A natural question is how $\PoA$ and $\PoS$ change from the deterministic to the stochastic settings. 
The comparison is not straightforward  since the  reference baseline set by the minimal social cost may be different
in both cases. Moreover, as seen in \cref{ex:di-wheel}, even if the graph is Hamiltonian with the same baseline
\[
\min_{\transitvec\in\simplex}\SC(\transitvec)= \min_{\actprof\in\actions}\SC(\actprof)=1/\nPlayers,
\]
we may still have
\begin{equation*}\label{eq:PoA-ineq}
\PoA(\simplex)\geq\frac{\nPlayers}{\nPlayers-1}>\PoA(\actions)=1.
\end{equation*}

\begin{proposition}\label{pr:mod12} For any \acl{BPG} $\game(\graph,\initial)$ we have
	\begin{enumerate}[\upshape(a\upshape)]
		\item\label{it:pr:PoS-mod1} $\min_{\transitvec\in\simplex}\SC(\transitvec)\leq \min_{\actprof\in\actions}\SC(\actprof)$
		\item\label{it:pr:PoS-mod2} $\PoA(\actions)\leq\PoA(\simplex)$, 	possibly with strict inequalities.  
	\end{enumerate}
\end{proposition}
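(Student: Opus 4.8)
The plan is to derive both inequalities purely formally from material already in place: the inclusion $\actions\subseteq\simplex$, the coincidence of the \ac{DBPG} and \ac{SBPG} costs on deterministic profiles (the remark following \eqref{eq:cost-general}), the inclusion $\NE(\actions)\subseteq\NE(\simplex)$ from \cref{pr:MDP}, and the trivial bound $\SC(\cdot)\ge 1/\nPlayers>0$, which holds everywhere because the player costs always sum to $1$.

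For part~\ref{it:pr:PoS-mod1} I would argue as follows. Since $\actions_{\play}$ is the set of extreme points of $\simplex_{\play}$, every deterministic profile lies in $\simplex$, and by the remark after \eqref{eq:cost-general} the two cost vectors, hence the two social cost functions, coincide on $\actions$. Thus $\SC\colon\simplex\to\reals$ restricts on $\actions$ to the \ac{DBPG} social cost, and minimising one and the same function over the larger domain cannot increase its value, giving $\min_{\transitvec\in\simplex}\SC(\transitvec)\le\min_{\actprof\in\actions}\SC(\actprof)$.

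For part~\ref{it:pr:PoS-mod2} I would write $\PoA(\actions)=N(\actions)/D(\actions)$ and $\PoA(\simplex)=N(\simplex)/D(\simplex)$, with $N$ the numerator (largest social cost over equilibria) and $D$ the denominator (smallest social cost over all profiles). By \cref{pr:MDP} and the coincidence of the social costs on $\actions$, the maximum defining $N(\simplex)$ ranges over a set containing the one defining $N(\actions)$, so $N(\simplex)\ge N(\actions)$; by part~\ref{it:pr:PoS-mod1}, $0<D(\simplex)\le D(\actions)$; and $N(\actions)\ge 1/\nPlayers>0$, since $\NE(\actions)\neq\varnothing$ by \cref{th:prior-free-equilibrium} and every profile has social cost at least $1/\nPlayers$. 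Chaining these,
\[
\PoA(\simplex)=\frac{N(\simplex)}{D(\simplex)}\ge\frac{N(\actions)}{D(\simplex)}\ge\frac{N(\actions)}{D(\actions)}=\PoA(\actions).
\]

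Finally I would exhibit the strictness. For part~\ref{it:pr:PoS-mod2} the cleanest witness is \cref{ex:di-wheel}: that graph is Hamiltonian, so $D(\actions)=D(\simplex)=1/\nPlayers$ and $\PoA(\actions)=1$, while the ``uniform hub'' profile is a \ac{NE} of the \ac{SBPG} in which each outer player pays $1/(\nPlayers-1)$, giving $\PoA(\simplex)\ge\nPlayers/(\nPlayers-1)>1$. For part~\ref{it:pr:PoS-mod1} I expect the only genuine work: one must produce a \emph{non-Hamiltonian} instance where randomisation actually helps the planner. A clean choice is two disjoint $2$-cycles together with one extra vertex that has an arc into each cycle, under the uniform initial measure: deterministically the extra vertex must send all of its mass into a single cycle, forcing social cost $3/10$ when $\nPlayers=5$, whereas splitting that mass evenly between the two cycles yields social cost $1/4<3/10$. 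Verifying this last computation through \eqref{eq:cost-general} (and the absorption masses \eqref{eq:total-mass}) is the one step that is not purely formal, and is the main thing to get right.
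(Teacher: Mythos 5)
Your proof is correct and follows essentially the same route as the paper: part (a) by observing that $\actions$ embeds into $\simplex$ with matching social costs, and part (b) by combining this with the inclusion $\NE(\actions)\subseteq\NE(\simplex)$ from \cref{pr:MDP}. Your strictness witness for (a) is the paper's own graph from \cref{fig:Bi-Cycle} (with a uniform rather than degenerate initial measure, which works equally well and your computation $1/4<3/10$ checks out), and your use of \cref{ex:di-wheel} for (b) matches the discussion immediately preceding the proposition.
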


\proof{Proof.}
	Part \ref{it:pr:PoS-mod1} follows directly by noting that each $\actprof\in \actions$ is equivalent to a 
	deterministic strategy in $\simplex$, and then \ref{it:pr:PoS-mod2} follows from the inclusion
	$\NE(\actions)\subseteq\NE(\simplex)$ in \cref{pr:MDP}.
	To see that the inequalities may be strict consider the graph in \cref{fig:Bi-Cycle}
	composed of two disjoint cycles with 2 vertices each, plus a transient vertex
	connected to both cycles. 
	\begin{figure}[h]
		\FIGURE
		{\includegraphics[width=9.5cm]{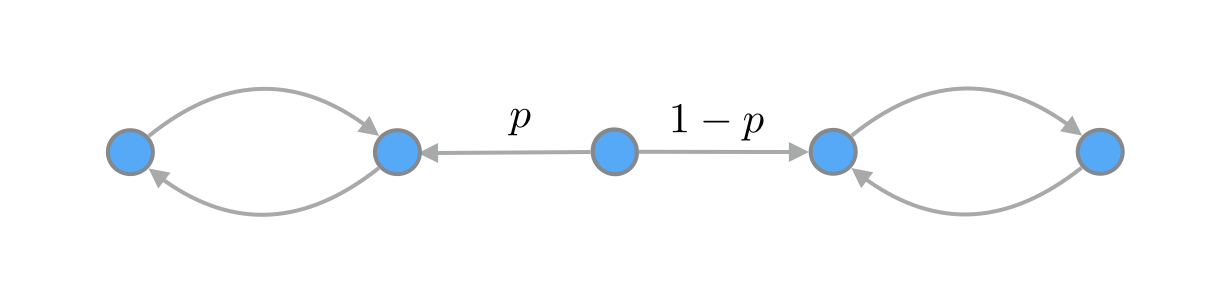}}
		{A graph with $\displaystyle\min_{\transitvec\in\simplex}\SC(\transitvec)<\min_{\actprof\in\actions}\SC(\actprof)$. The initial measure $\initial$ is concentrated on the central vertex.\label{fig:Bi-Cycle}}
		{}
	\end{figure}
	\noindent
	If we initially assign the buck to this transient vertex with probability one, then
	the minimum social cost over $\simplex$ attained with $p=\frac{1}{2}$, whereas over 
	$\actions$ the optimum is attained for $p=0$ and $p=1$. Hence
	\begin{align*}
	\min_{\transitvec\in\simplex}\SC(\transitvec)=\frac{1}{4}&<\min_{\actprof\in\actions}\SC(\actprof)=\frac{1}{2},
	\end{align*}
	\begin{align*}
	\PoA(\actions)=1&<\PoA(\simplex)=2. \Halmos
	\end{align*}
\endproof

If \cref{pr:mod12}\ref{it:pr:PoS-mod1} holds with equality, the inclusion $\NE(\actions)\subseteq\NE(\simplex)$
implies $\PoS(\simplex)\leq\PoS(\actions)$. However, as shown by \cref{ex:PoAPoS_order} in the next section,
in general there is no order between $\PoS(\simplex)$ and $\PoS(\actions)$. 
The following special case presents a situation where the deterministic and stochastic \acl{BPG} have
\acl{PoS} equal to 1.
\begin{proposition}
	If $\graph$ is a disjoint union of strongly connected components, then for every \acl{BPG} $\game(\graph,\initial)$ we have $\PoS(\actions)=\PoS(\simplex)=1$.
\end{proposition}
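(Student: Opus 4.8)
The plan is to localize everything to the connected components of $\graph$ and then combine a mass‑conservation lower bound on the social cost with the construction of one explicit equilibrium that meets it. Write $\graph=\graph_{1}\cupdot\dots\cupdot\graph_{m}$, with vertex sets $\vertices_{1},\dots,\vertices_{m}$ of sizes $n_{1},\dots,n_{m}$; each $\graph_{j}$ is strongly connected and, by hypothesis, has no edge leaving $\vertices_{j}$. The first step is a mass‑conservation bound: whatever profile is played (in $\actions$ or in $\simplex$), the buck started by Nature inside $\vertices_{j}$ stays in $\vertices_{j}$ forever, so $\lim_{\horizon\to\infty}\frac1\horizon\sum_{\per=1}^{\horizon}\sum_{\play\in\vertices_{j}}\buck_{\play,\per}=\mathds{1}\{\text{Nature's draw lies in }\vertices_{j}\}$ almost surely; taking expectations gives $\sum_{\play\in\vertices_{j}}\cost_{\play}=\initial(\vertices_{j})=:\mu_{j}$ identically. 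Averaging over the $n_{j}$ players of the component yields $\max_{\play\in\vertices_{j}}\cost_{\play}\ge\mu_{j}/n_{j}$, hence $\SC\ge\max_{j}\mu_{j}/n_{j}$ for every profile; in particular both baselines $\min_{\actprof\in\actions}\SC(\actprof)$ and $\min_{\transitvec\in\simplex}\SC(\transitvec)$ are at least $\max_{j}\mu_{j}/n_{j}$.

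Next I would exhibit a profile attaining this value. Realize a cycle through all the vertices of each component $\graph_{j}$, and let $\actprof^{*}$ be the deterministic profile in which every $\play\in\vertices_{j}$ passes the buck to its successor along that cycle. Then $\graph_{\actprof^{*}}$ is, on each component, a single unicycle whose cycle is all of $\vertices_{j}$, so by the cost formula \eqref{eq:DBPG-cost} every $\play\in\vertices_{j}$ pays $\mu_{j}/n_{j}$, and therefore $\SC(\actprof^{*})=\max_{j}\mu_{j}/n_{j}$ meets the lower bound; thus $\actprof^{*}$ minimizes the deterministic social cost. It remains to check that $\actprof^{*}\in\NE(\actions)$. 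A profitable deviation of a player $\play\in\vertices_{j}$ must be of one of the two types identified before \cref{th:GOP}: either lengthen the cycle she lies on, or break it by joining another unicycle. The first is impossible because that cycle already spans all of $\vertices_{j}$ and $\play$ has no out‑neighbor outside $\vertices_{j}$; the second is impossible because $\graph_{\actprof^{*}}$ contains no other unicycle meeting $\vertices_{j}$ and, again, no edge leaves $\vertices_{j}$. Hence no player can deviate profitably, so $\actprof^{*}\in\NE(\actions)$, and since $\actprof^{*}$ also attains $\min_{\actprof\in\actions}\SC(\actprof)$ we conclude $\PoS(\actions)=1$.

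For the stochastic game it suffices to transport $\actprof^{*}$. Viewed as a (degenerate) element of $\simplex$, it belongs to $\NE(\simplex)$ by \cref{pr:MDP}, and on a deterministic profile the general cost formula \eqref{eq:cost-general} reduces to \eqref{eq:DBPG-cost}, so $\SC(\actprof^{*})=\max_{j}\mu_{j}/n_{j}$ in the \ac{SBPG} as well. Together with the bound $\min_{\transitvec\in\simplex}\SC(\transitvec)\ge\max_{j}\mu_{j}/n_{j}$ of the first step, $\actprof^{*}$ is a stochastic equilibrium realizing the minimal stochastic social cost, whence $\PoS(\simplex)=1$.

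The main obstacle is the middle step: one wants a profile that equalizes the costs inside each component and is at the same time an equilibrium, and the construction above works precisely because realizing a cycle through all vertices of a component produces a single spanning unicycle, leaving a would‑be deviator with neither a longer cycle to reach nor a second unicycle to escape into. The existence of such a spanning cycle in each component is where the structural assumption on the components is used, and once it is in hand the remaining ingredients — the mass‑conservation bound and the passage to $\simplex$ through \cref{pr:MDP} — are routine.
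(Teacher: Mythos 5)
Your mass--conservation lower bound is fine (and in fact cleaner than what one might first try): the buck never leaves the component it starts in, so the costs inside component $j$ sum to $\mu_j:=\initial(\vertices_j)$ and hence $\SC\ge\max_j \mu_j/n_j$ for every profile in $\actions$ or $\simplex$. The proof breaks at the next step, where you ``realize a cycle through all the vertices of each component.'' This requires every strongly connected component to admit a \emph{Hamiltonian} cycle, and strong connectivity does not imply that. Take $\vertices_1=\{1,2,3,4\}$ with edges $1\leftrightarrow 2$, $1\leftrightarrow 3$, $1\leftrightarrow 4$: this digraph is strongly connected, yet every cycle has length $2$, so no spanning cycle exists; moreover vertex $1$ returns to itself in exactly two steps under \emph{every} profile, so it always pays $\mu_1/2$ and the value $\mu_1/4$ promised by your lower bound is simply not attainable. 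Thus the profile $\actprof^{*}$ you need does not exist in general, and your lower bound, while valid, is not tight, so the argument cannot close.

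The paper's route avoids both problems: in each component it fixes a \emph{longest} (not necessarily spanning) cycle $\rc_j$ and lets all remaining players free-ride into it. This profile is a \ac{NE} because a player on the cycle can neither extend it (any deviation inside the component creates a cycle through her of length at most $\abs{\rc_j}$, by maximality) nor escape it (the component is closed, so there is no second unicycle to defect to), and it is prior-free. The optimal social cost is then $\max_j \mu_j/\abs{\rc_j}$, which in general exceeds $\max_j\mu_j/n_j$; proving that \emph{no} profile in $\simplex$ can beat $\max_j\mu_j/\abs{\rc_j}$ requires an argument about return times or stationary masses tied to the longest-cycle length, not the simple averaging over all $n_j$ vertices that you use. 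So the missing ingredients are (i) replacing the spanning cycle by a maximal cycle plus free riders, and (ii) a matching lower bound at the level $\mu_j/\abs{\rc_j}$ rather than $\mu_j/n_j$.
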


\proof{Proof.}
	It suffices to show that the optimal social cost over $\simplex$ is attained at a deterministic strategy $\actprof\in\actions$ which is also an equilibrium,
	namely $\actprof\in\NE(\actions)\subseteq\NE(\simplex)$.
	
	Let $\scc_1,\dots\scc_\numrc$ be the strongly connected components in $\graph$ and choose  a collection of longest cycles $\rc_1,\dots,\rc_\numrc$, one in each component.
	Consider the strategy profile $\actprof$ induced by  these cycles, where all the other players are free riders. 
	By construction, none of these cycles can be destroyed nor extended, so that $\actprof$ is a prior-free \ac{NE} for the \ac{DBPG}, hence also for the \ac{SBPG} by virtue of \cref{pr:MDP}. 
	It remains to show that, for every initial measure $\initial$, this deterministic strategy profile $\actprof$ minimizes the social cost over $\simplex$ (hence also over $\actions$). 
	Indeed, let $\initial_\countrc$ denote the initial mass of the component $\scc_\countrc$. Being $|\rc_\countrc|$ the length of the maximal cycle in $\scc_\countrc$, for every $\transitvec\in\simplex$ and each player $\play\in \rc_\countrc$, we have $\expect_{\transitvec}[\horizon_{\play} \mid \markov_{0}=\play]\leq |\rc_\countrc|$, so that \eqref{eq:stdis}  yields $\cost_\play(\transitvec)\geq \initial_\countrc/\abs{\rc_\countrc}=\cost_\play(\actprof)$.
	It follows that $\SC(\transitvec)\geq\SC(\actprof)$ completing the proof.
\Halmos
\endproof

%
%
\subsection{Some examples.}
\label{suse:examples}

The following examples show that, depending on the structure of the graph $\graph$, the inequalities
in \cref{pr:PoA-PoS-mod} can be tight or not. The first two examples concern respectively the cases of 
complete graphs and cycles, with a uniform initial measure $\initial$. These graphs are symmetric in 
the sense that $(\play,\playalt)\in\edges$ iff $(\playalt,\play)\in\edges$, and they have a Hamiltonian
cycle which is both an optimal profile and a Nash equilibrium so that the optimal social cost is 
$1/\nPlayers$ and $\PoS(\actions)=\PoS(\simplex)=1$.

\begin{figure}[h]
	\FIGURE
	{\includegraphics[width=8.5cm]{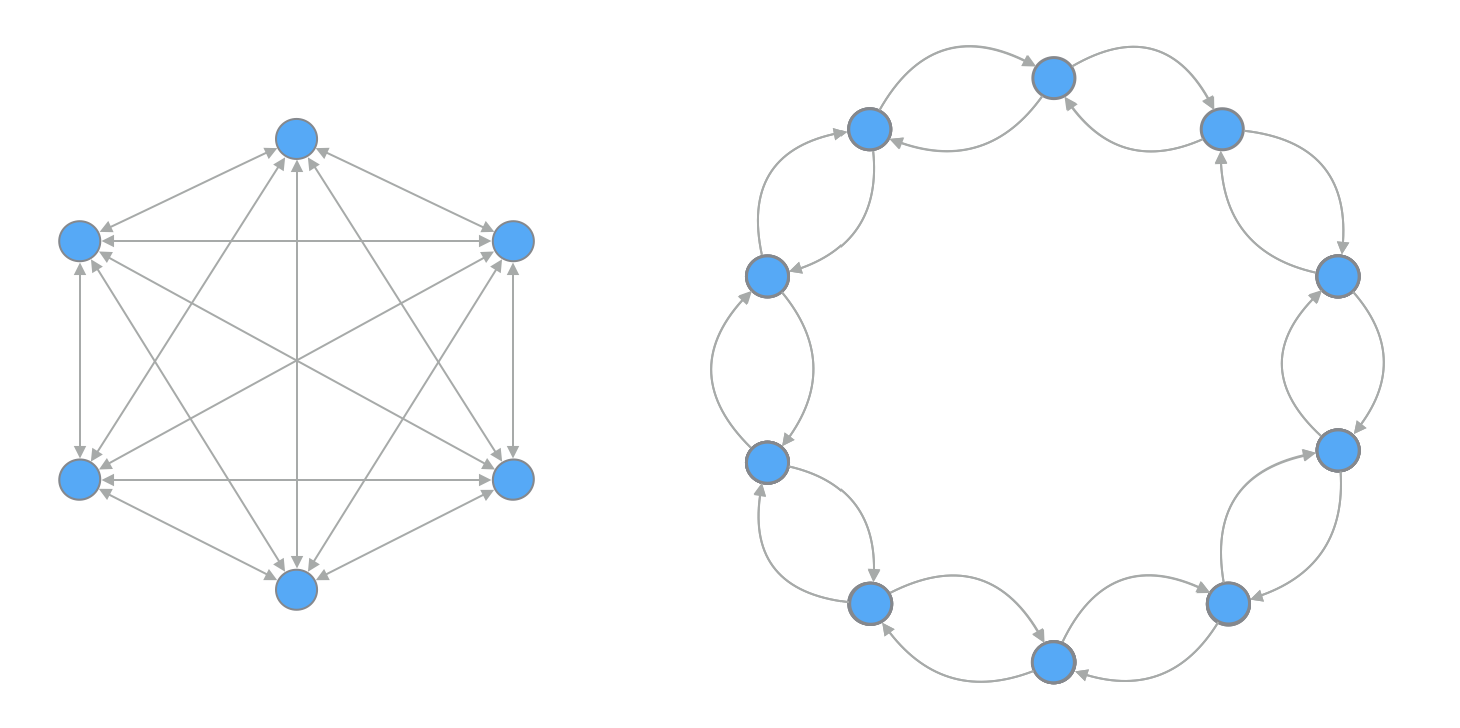}}
	{The complete graph $\completegraph_{6}$ and the bi-directional cycle $\cyclegraph_{10}$. For ease of representation in the first picture we used bidirectional arrows instead of drawing two arrows for each couple of vertices. These two graphs admit a Hamiltonian cycle.
	\label{fig:undirected-graphs}}
{}
\end{figure}

\begin{example}
	For the \acl{BPG} on the complete graph $\completegraph_{\nPlayers}$ with $\nPlayers$ vertices, the Nash equilibria (both in the deterministic and stochastic versions)
	are exactly the profiles $\actprof\in\actions$ such that $\graph_{\actprof}$ is a Hamiltonian cycle, and therefore 
\[
\PoS(\actions)=\PoA(\actions)=1=\PoS(\simplex)=\PoA(\simplex).
\]
\end{example}

\begin{example}\label{ex:cycle}
	In the bi-directional cycle $\cyclegraph_{\nPlayers}$ on $\nPlayers$ vertices the game may give rise to very unfair equilibria.
	Notice that the strategy of each player reduces to a choice between her left and right neighbors, and all cycles are either of length 2 or $\nPlayers$. 
	As noted before, a player will never pay more than 1/2. 
	However, the equilibrium described in \cref{fig:prior-free-ce} features exactly two players on a cycle of length 2 and each one pays exactly 1/2, so that in this case 
\[
\PoS(\actions)=\PoS(\simplex)=1<\PoA(\actions)=\PoA(\simplex)=n/2.
\]
\end{example}

\begin{example}
	In the previous examples we had either $\PoS(\actions)=\PoS(\simplex)=1$ or
	$\PoA(\actions)=\PoA(\simplex)=\nPlayers/2$. For the graph in \cref{fig:PoA_PoS},
	with a uniform initial measure $\initial$, these values are bounded away from these extremes.
	
	\begin{figure}[h]
		\FIGURE
		{\includegraphics[width=9.5cm]{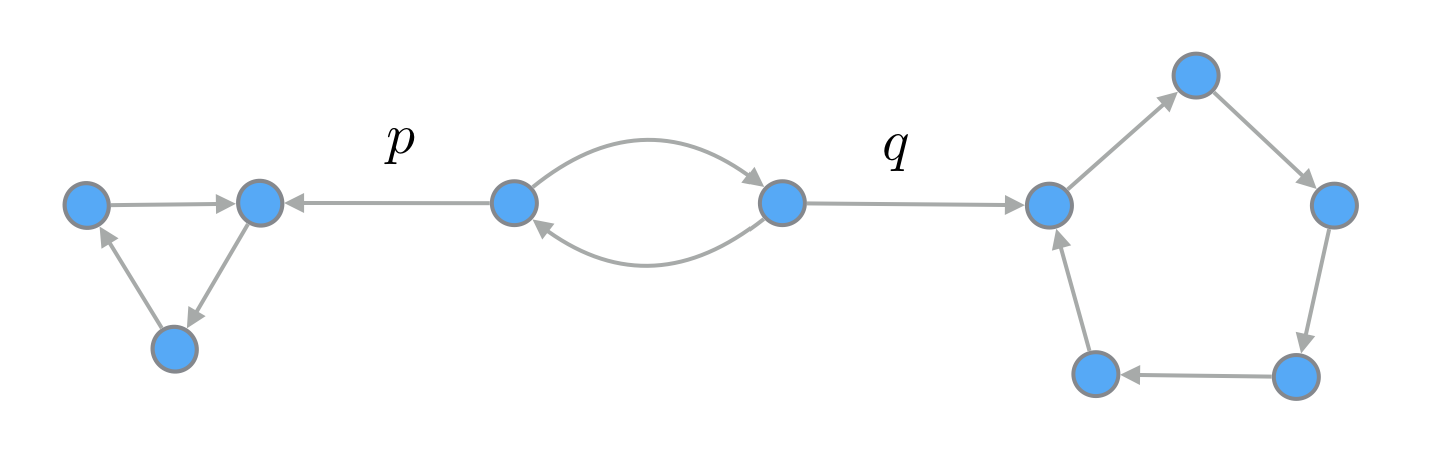}}
		{A network with $1<\PoS<\PoA<\frac{\nPlayers}{2}$. Here we consider $\initial$ as the uniform measure on the vertex set. \label{fig:PoA_PoS}}
		{}
	\end{figure}
	
	Indeed, notice that only the two players on the central cycle have the possibility to randomize.
	The optimal strategy is attained for $p=q=0$ with
\[
\min_{\transitvec\in\simplex}\SC(\transitvec)= \min_{\actprof\in\actions}\SC(\actprof)=\frac{1}{10}.
\]
	However, this is not an equilibrium and $\NE(\simplex)$ is precisely characterized by $p+q>0$.
	The best equilibrium is achieved with $p=0, q=1$, and the worse with  $p=1, q=0$. These are in fact deterministic equilibria in $\NE(\actions)$, so that
\[
1<\PoS(\actions)=\PoS(\simplex)=\frac{7}{5}<\frac{5}{3}=\PoA(\actions)=\PoA(\simplex)<\frac{\nPlayers}{2}.
\]
\end{example}

\begin{example}  \label{ex:PoAPoS_order}
	In general there is no order between $\PoS(\actions)$ and $\PoS(\simplex)$, and either one may be larger.
	Indeed, in \cref{fig:PoS<} the optimal cost  both for $\actions$ and $\simplex$ is $1/6$, attained with $p=q=0$. 
	The best equilibrium over $\actions$ is attained for $p=1,q=0$ (as well as $p=0,q=1$) with social cost $1/3$,
	whereas the best equilibrium over $\simplex$ is attained with $p=q=1/2$ for a social cost of $1/4$. 
	Hence $3/2=\PoS(\simplex)< \PoS(\actions)=2$.
	
	\begin{figure}[h]
		\FIGURE
		{
		\includegraphics[width=8cm]{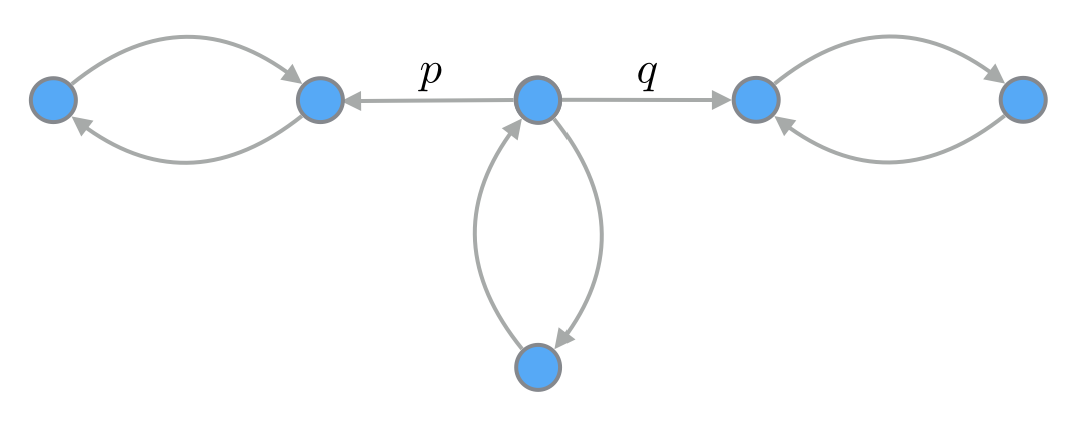}}
		{A \acl{BPG} with $\PoS(\simplex)<\PoS(\actions)$. The initial measure $\initial$ is uniform across all 6 vertices.\label{fig:PoS<}}
		{}
	\end{figure}
	
	Now, in \cref{fig:PoS>} the deterministic  social optimum is $1/3$ attained with $p=1,q=0$ and $r=1$. 
	This is also an equilibrium so that $\PoS(\actions)=1$. In the stochastic case the minimum cost
	is $1/6$ (attained with $p=q=1/2$ and $r=0$), whereas the best equilibrium is achieved for $r=1, p=3/4, q=1/4$, with social cost $1/4$ and then $\PoS(\simplex)=3/2>\PoS(\actions)$.
	
	\begin{figure}[h]
		\FIGURE
		{
		\includegraphics[width=8cm]{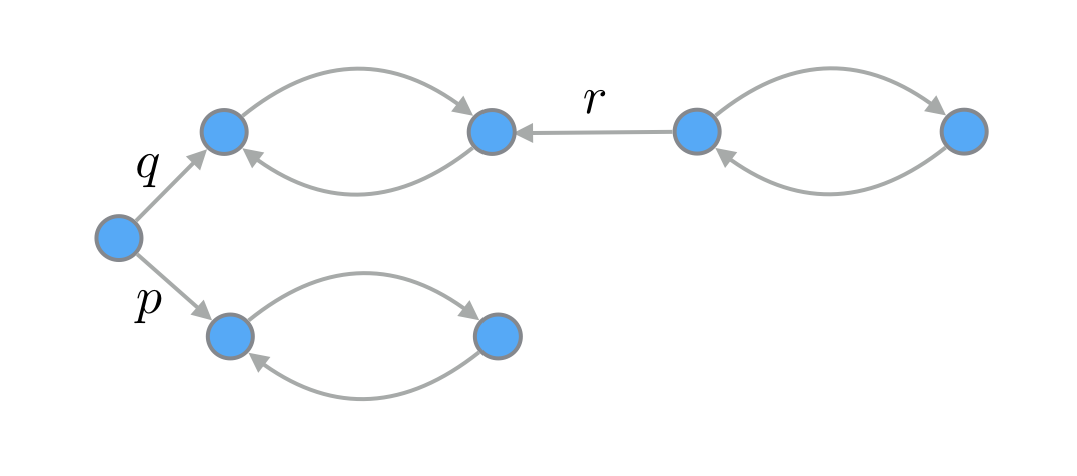}}
		{A  \acl{BPG} with $\PoS(\simplex)>\PoS(\actions)$. The initial measure $\mu$ assigns the buck to the leftmost vertex 
			with probability $\frac{2}{3}$, and to the rightmost vertex with probability $\frac{1}{3}$.\label{fig:PoS>}}
		{}
	\end{figure}
	
\end{example}

\section{Markov chains and spanning trees.}
\label{se:Markov-chains}
In \cref{se:gop} we will study the existence of equilibria for \aclp{BPG} when the strategy sets of the players are restricted to general subsets of the whole simplices of probabilities over out-neighbors. 
We will achieve this by extending the generalized potential function in \eqref{eq:potentialdef}, and this  requires some powerful tools in Markov chains.
In particular, we will exploit classical  results linking probability and graph theory.

\subsection{The Markov chain tree theorem revisited.}
\label{suse:Markov-chain-tree-theorem}

We begin by discussing the Markov chain tree formula, which characterizes the invariant measure of the chain.
To the best of our knowledge, this formula was first established by \citet{VenFre:UMN1970} for the case where 	$p_{ij}>0$ for all $j\neq i$. 
It was later extended to ergodic Markov chains by \citet{LeiRiv:IEEETIT1986}, and beyond by \citet{AnaTso:SPL1989}. 
See also \citet{AldFil:mono2002}, where the authors define such formula ``the most often rediscovered result in probability theory.''
For the reader's convenience we provide a short proof that relies on the analysis of spanning unicycles and unveils the connection between the formula and the \acl{BPG}. 
This proof will yield \cref{co:expected-cycle-length}, which provides an alternative expression and a useful bound for the potential.
We start by defining the objects needed to state the theorem, making use of the same vocabulary introduced in the description of the \acl{BPG}.

Consider a Markov chain on a finite state space $\vertices$
with transition matrix $\transitmatrix=[\transit_{\play\playalt}]_{\play,\playalt\in \vertices}$, and assume that it has a unique recurrent class $\rc\subseteq\vertices$, so that all vertices in $\vertices\setminus\rc$  are transient.
In this case there is a unique invariant measure $\stationary_{\transitvec}=(\stationary_{\transitvec}(\play))_{\play\in\vertices}$, with $\stationary_{\transitvec}(\play)>0$ iff $\play\in\rc$.

Consider the \emph{weight function} $\weight_{\transitvec}:2^{\edges}\to[0,1]$ defined as follows: for every $\event\subset\edges$
\begin{equation}
\label{eq:weight}
\weight_{\transitvec}(\event)=\prod_{(\play,\playalt)\in\event}\transit_{\play\playalt}.
\end{equation}

Note that each strategy profile $\actprof\in \actions$ in the deterministic buck-passing game
can be identified with the subset of induced edges $\edges_{\actprof}$. Thus---with a slight abuse of notation and only for this section---we identify $\actions$ with the family of all subsets $\actprof\subset\edges$ that contain exactly one outgoing edge for each $\play\in\vertices$.

We then consider the probability space $(\actions,\probalt_{\transitvec},2^{\actions})$, where
\begin{equation}
\label{eq:prob-Q}
\probalt_{\transitvec}(\actprof)=\weight_{\transitvec}(\actprof)=\prod_{(\play,\playalt)\in \actprof}\transit_{\play\playalt}.
\end{equation}
It is easy to see that $\probalt_{\transitvec}(\actions)=1$. From a probabilistic perspective, the multiplicative form of $\probalt_{\transitvec}$ implies that a random $\actprof\in\actions$ sampled according to $\probalt_{\transitvec}$, can be seen as the outcome of $\nPlayers$ independent 
draws of an outgoing edge $(\play,\playalt)\in\neighborsout_{\play}$ for each $\play\in\vertices$.

Let $\trees_{\!\!\play}(\vertices)$ denote the set of $\play$-rooted spanning trees in the complete graph with vertex set $\vertices$. 
Define 
\begin{equation}
\label{eq:treeweightdos}
\treeweight_{\play}(\transitvec):=\sum_{\tree\in\trees_{\!\!\play}(\vertices)}\weight_{\transitvec}(\tree)\quad\text{and}\quad
\treeweight_{\vertices}(\transitvec):=\sum_{\playalt\in\vertices}\treeweight_{\playalt}(\transitvec).
\end{equation}
Note that when computing $\treeweight_{\play}(\transitvec)$ it suffices to consider the spanning trees in the induced graph $\graph_{\transitvec}$, which contains only the edges with $\transit_{\play\playalt}>0$, since the remaining trees have weight zero. 
Notice also that by definition $\treeweight_{\vertices}>0$, since the graph $\graph_{\transitvec}$ has a unique strongly connected component. 
Moreover, a vertex $\play\in\vertices$ is transient iff $\treeweight_{\play}(\transitvec)=0$. 
In fact, the root of any rooted spanning tree of $\graph_{\transitvec}$ must lie in the strongly connected class. 
Hence, $\rc=\braces{\play\in\vertices:\treeweight_{\play}(\transitvec)>0}$.
However, to avoid keeping track of the dependence on the topology of $\graph_{\transitvec}$, it is convenient to consider all spanning trees in the complete graph with vertex set $\vertices$.

\begin{theorem}[Markov chain tree theorem]\label{th:MCTT}
	Consider a Markov chain with transition matrix $\transitvec$ and with a single recurrent class. 
	Then, the unique invariant measure $\stationary_{\transitvec}$ is given by 
	\begin{equation}\label{eq:stationary-Omega}
	\stationary_{\transitvec}(\play)=\frac{\treeweight_{\play}(\transitvec)}{\treeweight_{\vertices}(\transitvec)}.
	\end{equation}
\end{theorem}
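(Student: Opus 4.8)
The plan is to prove \eqref{eq:stationary-Omega} by verifying directly that the non-normalized vector $\tilde\stationary$ with entries $\tilde\stationary(\play)=\treeweight_\play(\transitvec)$ is left-invariant for $\transitvec$, that is, $\sum_{\playalt\in\vertices}\treeweight_\playalt(\transitvec)\,\transit_{\playalt\play}=\treeweight_\play(\transitvec)$ for every $\play\in\vertices$, and then to divide by $\treeweight_\vertices(\transitvec)$. Once this identity is established, $\tilde\stationary/\treeweight_\vertices(\transitvec)$ is a stationary probability vector --- the normalization being legitimate because $\treeweight_\vertices(\transitvec)\ge\treeweight_\playalt(\transitvec)>0$ for any $\playalt\in\rc$, as already observed before the statement --- and, since the chain has a single recurrent class, its stationary distribution is unique, so it must coincide with $\stationary$. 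In passing this also recovers $\stationary(\play)>0\iff\play\in\rc$ from the analogous property of $\treeweight_\play$.

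To obtain the invariance identity I would use a weight-preserving bijection between rooted spanning trees and spanning unicycles, which is where the multiplicativity of $\weight_\transitvec$ enters. Fix $\play\in\vertices$. For a vertex $\playalt$ and a $\playalt$-rooted spanning tree $\tree$, the edge $(\playalt,\play)$ is absent from $\tree$ since the root has out-degree $0$, so $\tree\cup\{(\playalt,\play)\}$ has out-degree $1$ everywhere; being a spanning tree plus one edge it contains exactly one cycle, namely the unique $\tree$-path from $\play$ up to the root $\playalt$ closed by $(\playalt,\play)$. Hence $\tree\cup\{(\playalt,\play)\}$ is a spanning unicycle whose cycle passes through $\play$, and $\playalt$ is exactly the predecessor of $\play$ on that cycle. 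Conversely, any spanning unicycle $\unicycle$ whose cycle contains $\play$ singles out the predecessor $\playalt$ of $\play$ along its cycle, and deleting the edge $(\playalt,\play)$ destroys the only cycle, keeps the graph weakly connected, and leaves $\playalt$ as the unique out-degree-$0$ vertex, so it returns a $\playalt$-rooted spanning tree. This bijection, together with $\weight_\transitvec(\tree\cup\{(\playalt,\play)\})=\weight_\transitvec(\tree)\,\transit_{\playalt\play}$, gives
\[
\sum_{\playalt\in\vertices}\treeweight_\playalt(\transitvec)\,\transit_{\playalt\play}
=\sum_{\playalt\in\vertices}\sum_{\tree\in\trees_{\!\!\playalt}(\vertices)}\weight_\transitvec(\tree\cup\{(\playalt,\play)\})
=\sum_{\unicycle}\weight_\transitvec(\unicycle),
\]
the last sum ranging over all spanning unicycles $\unicycle$ whose cycle passes through $\play$.

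Next I would compute that same quantity by slicing those unicycles through their \emph{out}-edge at $\play$ instead of their in-edge: the map $(\tree,\playaltalt)\mapsto\tree\cup\{(\play,\playaltalt)\}$, with $\tree$ a $\play$-rooted spanning tree and $(\play,\playaltalt)$ the new out-edge of $\play$, is the analogous bijection onto the spanning unicycles whose cycle passes through $\play$. Hence $\sum_{\unicycle}\weight_\transitvec(\unicycle)=\sum_{\playaltalt\in\vertices}\treeweight_\play(\transitvec)\,\transit_{\play\playaltalt}=\treeweight_\play(\transitvec)$, because the rows of $\transitvec$ sum to $1$. Comparing with the previous display yields $\sum_\playalt\treeweight_\playalt(\transitvec)\,\transit_{\playalt\play}=\treeweight_\play(\transitvec)$, which is exactly the invariance we wanted.

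The only point requiring genuine care is the verification that adding or deleting a single edge really does interchange the classes ``$\playalt$-rooted spanning tree'' and ``spanning unicycle whose cycle passes through $\play$ with predecessor $\playalt$'', i.e.\ the bookkeeping of weak connectedness, acyclicity and out-degrees under a one-edge modification; once this is settled the rest is an immediate consequence of multiplicativity of $\weight_\transitvec$ and stochasticity of $\transitvec$. I would also note that the whole computation can be phrased probabilistically through the measure $\probalt_\transitvec$ on $\actions$ introduced above, which is what makes the bridge with the buck-passing game transparent.
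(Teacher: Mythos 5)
Your proposal is correct and follows essentially the same route as the paper's proof: both compute the total weight of spanning unicycles whose cycle passes through $\play$ in two ways---once by decomposing over the in-edge $(\playalt,\play)$ of the cycle and once over the out-edge of $\play$---via the tree-plus-one-edge bijection, multiplicativity of $\weight_{\transitvec}$, and row-stochasticity, and then conclude by uniqueness of the invariant measure. The only cosmetic difference is that the paper packages the double count as $\probalt_{\transitvec}(\unicycles_{\play})$ computed via the two disjoint decompositions $\unicycles_{\play}=\cupdot_{\playalt}\unicycles_{\play\playalt}=\cupdot_{\playaltalt}\unicycles_{\playaltalt\play}$.
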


As mentioned before, each $\actprof\in \actions$ can be identified with a pure strategy
profile in the deterministic buck-passing game, so the induced graph $\graph_{\actprof}$ is a 
disjoint union of unicycles. 
Let $\unicycles$ denote the set of all $\actprof\in\actions$ inducing a single spanning unicycle.
Moreover, let $\unicycles_{\play}$ be the spanning unicycles that have $\play$ in the cycle, 
and $\unicycles_{\play\playalt}$ those in which the edge $(\play,\playalt)$ is part of  the cycle.

\proof{Proof of \cref{th:MCTT}.}
	Call $\xomega_{\play}\coloneqq\treeweight_{\play}(\transitvec)$ and $\xomega\coloneqq\treeweight_{\vertices}(\transitvec)$.
	We observe that each $\actprof\in\unicycles_{\play\playalt}$ is of the form $\actprof=\tree\cup\braces{(\play,\playalt)}$ for a unique $\play$-rooted tree $\tree\in\trees_{\!\!\play}(\vertices)$ so that $\weight_{\transitvec}(\actprof)=\weight_{\transitvec}(\tree) \,\transit_{\play\playalt}$, 	and therefore $\probalt_{\transitvec}(\unicycles_{\play\playalt})=\xomega_{\play}\, \transit_{\play\playalt}$. 
	Now, the set $\unicycles_{\play}$  can be expressed as a 
	disjoint union $\unicycles_{\play}=\cupdot_{\playalt\in\vertices}\;\unicycles_{\play\playalt}$, so that
	\begin{equation}
	\probalt_{\transitvec}(\unicycles_{\play})=\sum_{\playalt\in\vertices}\probalt_{\transitvec}(\unicycles_{\play\playalt})=\sum_{\playalt\in \vertices}\xomega_{\play}\, \transit_{\play\playalt}=\xomega_{\play}.
	\end{equation}
	Similarly, if we focus on the edge $(\playaltalt,\play)$ preceeding $\play$, we may write 
	$\unicycles_{\play}=\cupdot_{\playaltalt\in \vertices}\;\unicycles_{\playaltalt\play}$, so that
	\begin{equation}
	\label{eq:x-i}
	\xomega_{\play}=\probalt_{\transitvec}(\unicycles_{\play})=\sum_{\playaltalt\in \vertices}\probalt_{\transitvec}(\unicycles_{\playaltalt\play})=\sum_{\playaltalt\in\vertices}\xomega_{\playaltalt} \,\transit_{\playaltalt\play}.
	\end{equation}
	This shows that $(\xomega_{\play})_{\play\in \vertices}$ is a left eigenvector of $\transitmatrix$ with eigenvalue $1$, so it is collinear with the invariant measure $\stationary_{\transitvec}$. The conclusion follows dividing each $\xomega_{\play}$ by $\xomega$.
\Halmos\endproof

As a by-product of the previous proof, we observe that $\treeweight_{\play}(\transitvec)$ can be expressed as the expected length of spanning unicycles.
Indeed, consider the random variable $\mathds{1}_{\braces{\actprof\in\unicycles_{\play}}}$, whose expected value is the probability that vertex $\play$ lies on the cycle of a spanning unicycle, that is,
\begin{equation}\label{eq:Es-in-U}
\expect_{\probalt_{\transitvec}}[\mathds{1}_{\unicycles_{\play}}]=\probalt_{\transitvec}(\unicycles_{\play})=\treeweight_{\play}(\transitvec).
\end{equation}
Moreover, let 
\begin{equation}
\label{eq:length}
\length(\actprof)\coloneqq\sum_{\play\in \vertices}\mathds{1}_{\braces{\actprof\in \unicycles_{\play}}}
\end{equation} 
be the length of the cycle if $\actprof\in\actions$ is a spanning unicycle, and $0$ otherwise. 

\begin{corollary}
	\label{co:expected-cycle-length} 
	Consider a Markov chain with transition matrix $\transitvec$ and with a single recurrent class $\rc\subseteq\vertices$. 
	Then 	$\treeweight_{\vertices}(\transitvec)={\expect}_{\probalt_{\transitvec}} [\length]$ and, in particular, 	$\treeweight_{\vertices}(\transitvec)\leq|\rc|$.
\end{corollary}

\begin{remark}
	Note that $\length(\actprof)$ is the total length of the cycle in the graph $\graph_{\actprof}$, which appears in the potential \eqref{eq:potentialdef} 
	for the \acl{DBPG}. In the next section, see \cref{eq:final-oprc}, we introduce a potential 
	for the \acl{SBPG} which involves the expected value ${\expect}_{\probalt_{\transitvec}} [\length]$.
\end{remark}

\section{The constrained buck-passing game.}
\label{se:gop}

We consider next a generalized version of the \acl{SBPG}, in which player $\play$'s strategy set is a subset $\sactions_{\play}\subset \simplex_{\play}$. 
Accordingly, we define $\sactions\coloneqq \times_{\play\in\vertices}\sactions_{\play}$. 
We  call this game $\game(\graph,\initial,\sactions)$ a \acfi{CBPG}\acused{CBPG}.
The set of its \aclp{NE} is denoted by $\NE(\sactions)$.
We will show that \acp{CBPG} are generalized ordinal potential games. 

For the sake of simplicity, consider first a strategy profile $\transitmatrix$ 
inducing an irreducible  Markov chain.  
In this case, by \cref{th:MCTT}, the cost for  player $\play$ is simply
\begin{equation}\label{eq:stationary-cost}
\cost_{\play}(\transitvec)=\stationary_{\transitvec}(\play)
=\frac{\treeweight_{\play}(\transitvec)}{\treeweight_{\vertices}(\transitvec)}.
\end{equation}
Since the numerator $\treeweight_{\play}(\transitvec)$ in \eqref{eq:stationary-cost} does not depend on $\transit_{\play}$, a profitable deviation 
for player $\play\in\vertices$ can only be achieved by increasing the denominator $\treeweight_{\vertices}(\transitvec)$. 
This suggests to take the map
$\potential(\transitmatrix)=-\treeweight_{\vertices}(\transitvec)$
as a generalized ordinal potential.
Since $\potential$ does not depend on $\initial$, any of its minimizers provides a \ac{PFNE}. 
This is indeed the case if every strategy profile $\transitvec\in\sactions$ 
gives rise to an irreducible Markov chain.

\subsection{A generalized ordinal potential.}
\label{suse:general-Markov-chains}

To get a workable expression for the costs in \cref{eq:cost-general}, we use the Markov chain tree formula.
To this end, consider the \emph{transient closures} of the recurrent classes 
\begin{equation}
\label{eq:transient-closure}
\tc_{\transitvec}^{\countrc}\coloneqq\braces*{\playalt\in\vertices:\absorb_{\transitmatrix}^{\playalt\to \countrc}=1}\quad\forall\,\countrc=1,\ldots,\numrc(\transitvec)
\end{equation}
and the \emph{residual transient class} that contains the remaining vertices
\begin{equation}
\label{eq:RTC}
\residual_{\transitvec} \coloneqq \braces*{\playalt\in\vertices : \absorb_{\transitvec}^{\playalt\to\countrc} < 1,\ \forall\countrc=1,\ldots,\numrc(\transitmatrix)},
\end{equation}
where $\absorb_{\transitmatrix}^{\playalt\to \countrc}$ is defined as in \eqref{eq:absorb}.

Each set $\tc_{\transitmatrix}^{\countrc}$ is closed with respect to the Markov chain and $\rc_{\transitmatrix}^{\countrc}\subseteq \tc_{\transitmatrix}^{\countrc}$.
Therefore, the restriction of the original Markov chain to $\tc_{\transitmatrix}^{\countrc}$ is itself a Markov chain having $\rc_{\transitmatrix}^{\countrc}$ as its unique recurrent class, so that \cref{th:MCTT} gives
\begin{equation}\label{eq:stationary_general}
\stationary_{\transitvec}^{\countrc}(\play)=
\begin{cases}
\dfrac{\treeweight_{\play}^{\countrc}(\transitvec)}{\treeweight^{\countrc}(\transitvec)}&\text{if }\play\in\tc^{\countrc}_{\transitvec}\\
0&\text{if }\play\in\vertices\setminus\tc^{\countrc}_{\transitvec}
\end{cases}
\end{equation}
where
\begin{equation}\label{eq:treeweightell}
\treeweight_{\play}^{\countrc}(\transitvec)=\sum_{\tree\in\trees_{\!\!\play}(\tc_{\transitvec}^{\countrc})}\weight_{\transitvec}(\tree)\quad \text{and}\quad \treeweight^{\countrc}(\transitvec)=\sum_{\play\in \tc_{\transitmatrix}^{\countrc}}\treeweight_{\play}^{\countrc}(\transitvec),
\end{equation}
with $\trees_{\!\!\play}(\tc_{\transitvec}^{\countrc})$  the set of $\play$-rooted  spanning trees  in the complete graph
over $\tc_{\transitvec}^{\countrc}$.

With these preliminaries, we may now state our main result for \acp{CBPG}.

\begin{theorem}
\label{th:cont-strong-exist}
Every \acl{CBPG} $\game(\graph,\initial,\sactions)$ admits the generalized ordinal potential
	\begin{equation}\label{eq:final-oprc}
	\potential(\transitvec):=\sum_{\countrc=1}^{\numrc(\transitvec)}(\nPlayers-\treeweight^{\countrc}(\transitvec)).
	\end{equation}
\end{theorem}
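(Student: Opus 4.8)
The plan is to verify directly the defining implication \eqref{eq:gen_ord_potential} of a generalized ordinal potential: if the unilateral deviation of player $\play$ from $\transitvec$ to $\transitvecalt=(\transitvec'_{\play},\transitvec_{-\play})$ is profitable, i.e. $\cost_{\play}(\transitvecalt)<\cost_{\play}(\transitvec)$, then $\potential(\transitvecalt)<\potential(\transitvec)$. The first step is a reduction: by \eqref{eq:cost-general} and \eqref{eq:stationary_general}, $\cost_{\play}(\transitvec)=\sum_{\countrc}\initial_{\transitvec}^{\countrc}\stationary_{\transitvec}^{\countrc}(\play)$ can be positive only if $\play$ lies in some recurrent class $\rc_{\transitvec}^{\countrcalt}$ with $\initial_{\transitvec}^{\countrcalt}>0$, in which case $\cost_{\play}(\transitvec)=\initial_{\transitvec}^{\countrcalt}\,\treeweight_{\play}^{\countrcalt}(\transitvec)/\treeweight^{\countrcalt}(\transitvec)$ with $\treeweight_{\play}^{\countrcalt}(\transitvec)>0$. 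Hence we may assume $\play\in\rc_{\transitvec}^{\countrcalt}$ and $\initial_{\transitvec}^{\countrcalt}>0$, where $\countrcalt$ denotes the index of the recurrent class containing $\play$.

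Next I would establish a structural dichotomy for the deviation, mirroring the two moves \ref{it:deviation-D1}/\ref{it:deviation-D2} of the \acl{DBPG}. Since only row $\play$ of the transition matrix changes and $\play\notin\rc_{\transitvec}^{\countrc}$ for $\countrc\neq\countrcalt$, each such $\rc_{\transitvec}^{\countrc}$ stays closed and strongly connected, hence remains a recurrent class of $\transitvecalt$; and since the irreducible set $\rc_{\transitvec}^{\countrcalt}$ has no proper closed subset, together with the maximality of recurrent classes one sees that the recurrent classes of $\transitvecalt$ are exactly $\{\rc_{\transitvec}^{\countrc}:\countrc\neq\countrcalt\}$ plus at most one further class containing $\play$. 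This splits into: \textbf{(A)} $\play$ remains recurrent under $\transitvecalt$, in a class $\rc'$, and $\numrc(\transitvecalt)=\numrc(\transitvec)$; \textbf{(B)} $\play$ becomes transient, $\numrc(\transitvecalt)=\numrc(\transitvec)-1$, which forces $\numrc(\transitvec)\geq2$ and $\tc_{\transitvec}^{\countrcalt}\subsetneq\vertices$, whence $\treeweight^{\countrcalt}(\transitvec)\leq|\rc_{\transitvec}^{\countrcalt}|\leq|\tc_{\transitvec}^{\countrcalt}|\leq\nPlayers-1$ by \cref{th:expected_cycle_length}.

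The heart of the argument is the bookkeeping of the tree-weights. The key observation is that $\tc_{\transitvec}^{\countrc}$ stays closed under $\transitvecalt$ for every $\countrc\neq\countrcalt$ (none of its out-edges is $\play$'s), so in any $v$-rooted spanning tree of an enlarged closure $\tc_{\transitvecalt}^{\countrc}\supseteq\tc_{\transitvec}^{\countrc}$ each vertex of $\tc_{\transitvec}^{\countrc}$ still has its parent inside $\tc_{\transitvec}^{\countrc}$; restricting such a tree to $\tc_{\transitvec}^{\countrc}$ therefore gives a $v$-rooted spanning tree of $\tc_{\transitvec}^{\countrc}$ with the new vertices hanging off it, and summing weights yields $\treeweight^{\countrc}(\transitvecalt)=S^{\countrc}\,\treeweight^{\countrc}(\transitvec)$, where $S^{\countrc}\in(0,1]$ is the $\transitvecalt$-probability that the independent out-selections of the vertices of $\tc_{\transitvecalt}^{\countrc}\setminus\tc_{\transitvec}^{\countrc}$ form no cycle ($S^{\countrc}>0$ because a set of transient vertices contains no trap). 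In Case (A) I would further show $\tc_{\transitvecalt}^{\rc'}=\tc_{\transitvec}^{\countrcalt}$: from any $v\in\tc_{\transitvec}^{\countrcalt}$ the chain under $\transitvecalt$ first reaches $\rc_{\transitvec}^{\countrcalt}$ along unchanged edges, then (by irreducibility of the old sub-chain on $\rc_{\transitvec}^{\countrcalt}$) a.s. hits $\play$ there, then is absorbed in $\rc'$; while for a residual vertex a trajectory either hits $\rc_{\transitvec}^{\countrcalt}$—and is then rerouted through $\play$ into $\rc'$—or never touches it and is unchanged, so its absorption probability into $\rc'$ under $\transitvecalt$ equals its old absorption probability into $\rc_{\transitvec}^{\countrcalt}$. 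The same bookkeeping gives $\initial_{\transitvecalt}^{\rc'}=\initial_{\transitvec}^{\countrcalt}$, and, since $\play$-rooted trees never use $\play$'s out-edges, $\treeweight_{\play}^{\rc'}(\transitvecalt)=\treeweight_{\play}^{\countrcalt}(\transitvec)$. This bookkeeping—especially the exact preservation of $\tc_{\transitvecalt}^{\rc'}$ and $\initial_{\transitvecalt}^{\rc'}$ in Case (A) and the factorization $\treeweight^{\countrc}(\transitvecalt)=S^{\countrc}\treeweight^{\countrc}(\transitvec)$—is the step I expect to be the main obstacle; everything else is a routine reduction or a crude counting estimate.

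Finally, assemble the two cases. In Case (A) all terms with $\countrc\neq\countrcalt$ and $\numrc$ are unchanged, so $\potential(\transitvecalt)-\potential(\transitvec)=\treeweight^{\countrcalt}(\transitvec)-\treeweight^{\rc'}(\transitvecalt)$; since $\initial_{\transitvecalt}^{\rc'}\treeweight_{\play}^{\rc'}(\transitvecalt)=\initial_{\transitvec}^{\countrcalt}\treeweight_{\play}^{\countrcalt}(\transitvec)>0$, the profitability $\cost_{\play}(\transitvecalt)<\cost_{\play}(\transitvec)$ is equivalent to $\treeweight^{\rc'}(\transitvecalt)>\treeweight^{\countrcalt}(\transitvec)$, so $\potential$ strictly decreases. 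In Case (B) we lose the term $\nPlayers-\treeweight^{\countrcalt}(\transitvec)\geq1$, while the remaining terms change by $\sum_{\countrc\neq\countrcalt}(1-S^{\countrc})\treeweight^{\countrc}(\transitvec)$, which is strictly smaller than $\sum_{\countrc\neq\countrcalt}\treeweight^{\countrc}(\transitvec)\leq\sum_{\countrc\neq\countrcalt}|\rc_{\transitvec}^{\countrc}|\leq\nPlayers-\treeweight^{\countrcalt}(\transitvec)$—strict because each $S^{\countrc}>0$, the tree-weights are positive, and the index set $\{\countrc\neq\countrcalt\}$ is nonempty as $\numrc(\transitvec)\geq2$. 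Hence $\potential(\transitvecalt)<\potential(\transitvec)$ in both cases, so $\potential$ is a generalized ordinal potential for $\game(\graph,\initial,\sactions)$.
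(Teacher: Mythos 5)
Your proof is correct, and its overall skeleton is the same as the paper's: reduce to a deviator who is recurrent with positive initial mass, then split according to whether she remains recurrent (where only the denominator $\treeweight^{\countrcalt}$ of her stationary weight can move, so profitability forces that single term of $\potential$ to drop) or becomes transient (where a counting argument based on $\treeweight^{\countrc}(\transitvec)\le\abs{\rc^{\countrc}_{\transitvec}}$ closes the case). Where you genuinely diverge is in the transient case. The paper splits it into two sub-cases (the deviator's old closure either dissolves into the residual class or is absorbed wholesale into one other class) and asserts that all surviving closures, hence all surviving terms of $\potential$, are literally unchanged. You instead prove a factorization $\treeweight^{\countrc}(\transitvecalt)=S^{\countrc}\,\treeweight^{\countrc}(\transitvec)$ with $S^{\countrc}\in(0,1]$ for every surviving class, where $S^{\countrc}$ is the no-cycle probability for the out-selections of $\tc^{\countrc}_{\transitvecalt}\setminus\tc^{\countrc}_{\transitvec}$; this allows the surviving closures to grow (by swallowing parts of the deviator's old closure or of the residual class) and then bounds the total increase of the surviving terms by the single term that is lost. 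This buys some robustness: the paper's identity $\tc^{\countrcalt}_{\transitvecalt}=\tc^{\countrcalt}_{\transitvec}\cup\tc^{\countrc}_{\transitvec}$ in its Case 2.2 is in general only an inclusion, since residual vertices of $\transitvec$ can also be funnelled into the absorbing closure (harmless for the paper, whose inequality only uses $\treeweight^{\countrcalt}(\transitvecalt)>0$, but your bookkeeping covers all absorption patterns uniformly). The price is the extra tree-restriction lemma, whose justification in your sketch --- surviving closures stay closed so parent chains of their vertices never leave them, forcing the root inside and making the weight sum factor, with $S^{\countrc}>0$ because the added vertices are transient and hence admit an acyclic parent assignment pointing towards the core --- is sound.
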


\proof{Proof.} Consider a profitable deviation by a player $\play$ from $\transitvec$ to $\transitvecalt\coloneqq(\transit'_{\play},\transitvec_{-\play})$, with $\cost_{\play}(\transitvecalt)<\cost_{\play}(\transitvec)$. This conveys the fact that $\cost_{\play}(\transitvec)>0$, so that from \cref{eq:cost-general} it follows that player $\play$ must belong to a recurrent class $\rc_{\transitvec}^{\countrcalt}$ with $\initial_{\transitvec}^{\countrcalt}>0$. 
	We distinguish two possible scenarios, depending on whether the player remains recurrent or becomes transient after deviating. 
	
	\smallskip	\noindent
	\textsc{Case 1:}   $\transit'_{\play\playalt}=0$ for all $\playalt\not\in\tc_{\transitvec}^{\countrcalt}$.
	In this case  $\play$ remains recurrent and, although $\rc_{\transitvec}^{\countrcalt}$ may change, 
	we have $\tc_{\transitvecalt}^{\countrcalt}=\tc_{\transitvec}^{\countrcalt}$ and 
	$\initial_{\transitvecalt}^{\countrcalt}=\initial_{\transitvec}^{\countrcalt}>0$.
	It then follows from \eqref{eq:cost-general} that $\cost_{\play}(\transitvecalt)<\cost_{\play}(\transitvec)$ is equivalent to
	$\stationary_{\transitvecalt}^{\countrcalt}(\play)<\stationary_{\transitvec}^{\countrcalt}(\play)$. 
	Now, since the weight of any $\play$-rooted tree
	$\tree\in\trees_{\!\!\play}(\tc_{\transitvec}^{\countrcalt})=\trees_{\!\!\play}(\tc_{\transitvecalt}^{\countrcalt})$ does not depend on $\transit_{\play}$ nor $\transit'_{\play}$, 
	we get $\treeweight_{\play}^{\countrcalt}(\transitvec)=\treeweight_{\play}^{\countrcalt}(\transitvecalt)$. 
	Then, from \eqref{eq:stationary_general} it follows that $\stationary_{\transitvecalt}^{\countrcalt}(\play)<\stationary_{\transitvec}^{\countrcalt}(\play)$
	is equivalent to $\treeweight^{\countrcalt}(\transitvecalt)>\treeweight^{\countrcalt}(\transitvec)$, which implies 
	$\potential(\transitvecalt)<\potential(\transitvec)$.

	\smallskip	\noindent
	\textsc{Case 2:}  $\transit'_{\play\playalt}>0$ for some $\playalt\not\in\tc_{\transitvec}^{\countrcalt}$.
	In this case all vertices in  $\tc_{\transitvec}^{\countrcalt}$, including $\play$, become transient and their costs drop to zero.
	We distinguish two subcases depending whether $\play$ becomes residual transient or it is absorbed into a different class.
	\\[1ex]
	\textsc{Case 2.1:} $\play\in\residual_{\transitvecalt}$.
	In this case the class $\tc^{\countrcalt}_{\transitvec}$  becomes part of $\residual_{\transitvecalt}$ and the remaining classes $\tc^{\countrc}_{\transitvec}$, $\countrc\neq\countrcalt$, remain unchanged.
	Hence, we lose the $\countrcalt$-th term in the sum of \eqref{eq:final-oprc}, and the other terms do not change. 
	From \cref{co:expected-cycle-length} we have $\treeweight^{\countrcalt}(\transitvec)\leq|\rc_{\transitvec}^{\countrcalt}|<\nPlayers$, so that the removed term is strictly positive and $\potential(\transitvecalt)<\potential(\transitvec)$.

	\smallskip	\noindent
	\textsc{Case 2.2:}  $\absorb_{\transitvecalt}^{\play\to \countrc}=1$ for some $\countrc\neq\countrcalt$.
	Here the full class  $\tc_{\transitvec}^{\countrcalt}$ is absorbed into the $\countrc$-th class, that is, $\tc^{\countrc}_{\transitvecalt}=\tc^{\countrc}_{\transitvec}\cup\tc^{\countrcalt}_{\transitvec}$, 	so that the $\countrcalt$-th and $\countrc$-th terms in the sum $\potential(\transitvec)$ are merged into the single 	$\countrc$-th term in $\potential(\transitvecalt)$ and the other terms do not change. 
	Hence, $\potential(\transitvecalt)<\potential(\transitvec)$ is equivalent to 
	\begin{equation*}
	\nPlayers-\treeweight^{\countrc}(\transitvecalt)<[\nPlayers-\treeweight^{\countrcalt}(\transitvec)]+[\nPlayers-\treeweight^{\countrc}(\transitvec)],
	\end{equation*}
	which follows by noting that $\treeweight^{\countrc}(\transitvecalt)>0$ and 
	using \cref{co:expected-cycle-length} once again, which gives
	\begin{equation*}
	\treeweight^{\countrcalt}(\transitvec)+\treeweight^{\countrc}(\transitvec)\leq|\rc^{\countrcalt}_{\transitvec}|+|\rc^{\countrc}_{\transitvec}|\leq\nPlayers.
	\end{equation*}
	
	In all scenarios we have that $\cost_{\play}(\transitvecalt)<\cost_{\play}(\transitvec)$ implies $\potential(\transitvecalt)<\potential(\transitvec)$,
	which proves that $\potential$ is a generalized ordinal potential.
\Halmos
\endproof

We stress the analogy between the potential function $\potential$ in \eqref{eq:final-oprc} and the one for the deterministic game in \eqref{eq:potentialdef}.
Indeed, the quantity $|\rc_{\actprof}^\countrc|$ in the latter is simply the number of rooted spanning trees for the $\countrc$-th unicycle and, since in the deterministic 
case the weight of each tree is $1$, we have $|\rc_{\actprof}^\countrc|=\treeweight^{\countrc}(\actprof)$. However, in contrast
with the deterministic case, $\potential$ may fail to provide an ordinal potential even when $\initial$ is fully supported.

\begin{figure}
	\FIGURE
	{
	\includegraphics[width=5cm]{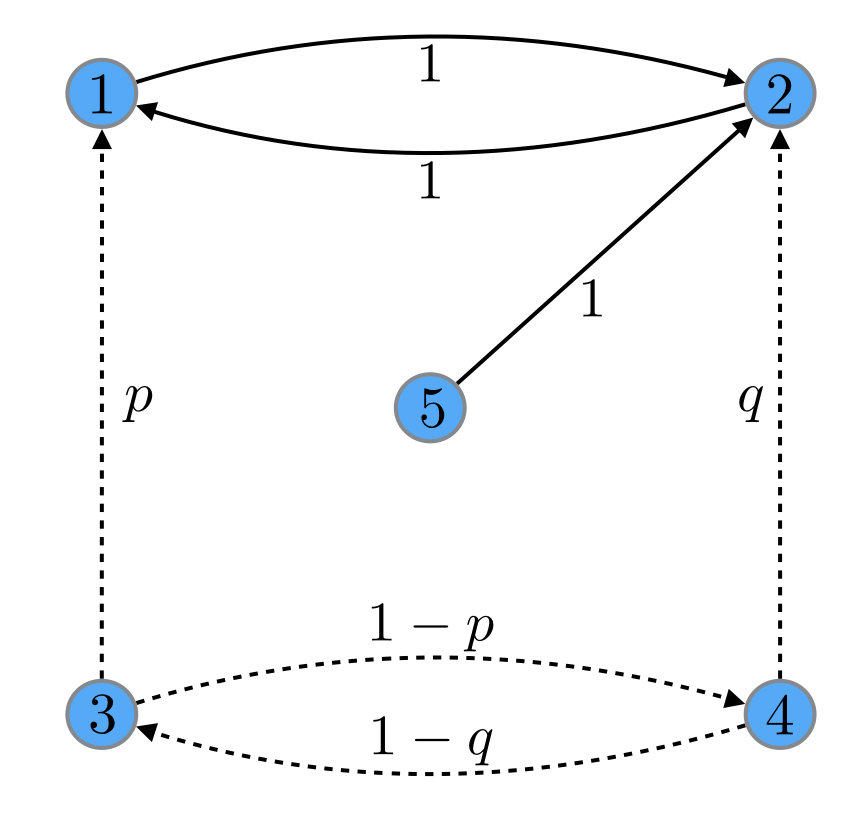}}
	{The strategy profile considered in \cref{ex:example-notLSC}.
	\label{fi:five-players}}
{}
\end{figure}
\begin{example}\label{ex:example-notLSC}
	Consider the graph in \cref{fi:five-players}. The set of players in the corresponding \ac{SBPG} is $\vertices=\braces{1,2,3,4,5}$ and the initial distribution $\initial$ is assumed to be uniform. 
	Fix  $p,q\in(0,1)$ and consider the following strategy profile 
	\begin{align*}
	&\transit_{12}=\transit_{21}=1,\\
	&\transit_{31}=\pex,\quad\transit_{34}=1-\pex,\\
	&\transit_{42}=\qex,\quad\transit_{43}=1-\qex,\\
	&\transit_{52}=1.
	\end{align*}	
	This strategy profile induces a unique recurrent class $\rc_{\transitvec}=\braces*{1,2}$ and we have
	\begin{equation*}
	\treeweight_{1}(\transitvec)=\treeweight_{2}(\transitvec)=\pex+\qex-\pex\qex, 	\end{equation*}
	so that  
	\begin{equation*}
	\potential(\transitvec)=5-2(\pex+\qex-\pex\qex).
	\end{equation*}
	Note that players $3$ and $4$ can decrease the potential by  increasing $\pex$ and $\qex$, respectively, although their cost remains $0$, since they are transient. 
	Therefore, $\potential$ is not an ordinal potential, even though $\initial$ is fully supported.
	Note also that $\potential(\transitvec) \to 5$,  as $\pex$ and $\qex$ tend to $0$, 
	whereas, for $\pex=\qex=0$, there are two recurrent classes  $\rc_{\transitvec}^{1}=\braces{1,2}$ and  $\rc_{\transitvec}^{2}=\braces{3,4}$ and the value of the potential is $6$. Therefore $\potential$ is not continuous, and 	not even lower semicontinuous.
\end{example}

%
%

\subsection{Buck-passing game and the Hamiltonian cycle problem.}
\label{suse:Hamiltonian}
	\cref{th:cont-strong-exist} can be connected to some literature that looks at the Hamiltonian cycle problem from the perspective of Markov chains. 
	That is,  considering the set of transition matrices that are compatible with a given graph $\graph$, this literature focuses on the class of functionals whose global minimum is attained on a permutation matrix which corresponds to a Hamiltonian cycle, provided it exists \citep[see, e.g.,][]{FilKra:MOR1994,Fil:FTSS2006,EjoLitNguTay:JAP2011,BorEjoFilNgu:Springer2012}.
	\cref{th:cont-strong-exist} shows that the potential function $\potential$ in \eqref{eq:final-oprc} belongs to this class.

\begin{corollary}\label{co:hamilton}
	Let the graph $\graph$ have a Hamiltonian cycle $\hamil$, and let $\transitvec_{\hamil}$ be
	the permutation matrix that represents this cycle. Then we have
	\begin{equation}
	\min_{\transitvec\in\simplex}\potential(\transitvec)=\potential(\transitvec_{\hamil})=0.
	\end{equation}
\end{corollary}

\proof{Proof.}
It follows immediately from the definition of $\potential$ in \eqref{eq:final-oprc} that $\potential(\transitvec_{\hamil})=0$. 
Indeed, in that case we have $\numrc(\transitvec_{\hamil})=1$, since the Hamiltonian cycle is strongly connected. 
Moreover, $\treeweight_{\play}(\transitvec_{\hamil}) = \nPlayers$, since for every vertex $\play$ there exists a unique tree rooted at $\play$ and this tree has unit weight. 
To conclude the proof it suffices to notice that, thanks to \cref{co:expected-cycle-length}, $\potential(\transitvec)\ge 0$ for all $\transitvec\in \simplex$. 
\Halmos
\endproof

\subsection{Existence of equilibria.}
\label{suse:equilibria}

We now address the existence of a \ac{PFNE} for general \aclp{CBPG}. 
The fact that $\game(\graph,\initial,\sactions)$ has a generalized ordinal potential guarantees the existence of $\varepsilon$-equilibria.
\begin{proposition}\label{pr:cont-weak-exist}
	For each $\varepsilon>0$ the  class of \aclp{CBPG} $\game(\graph,\cdot,\sactions)$  has  an \ac{eNE} which is also prior-free.
\end{proposition}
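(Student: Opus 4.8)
The plan is to obtain the $\varepsilon$-equilibrium as the terminal profile of a discretized, \emph{prior-robust} best-response dynamics, and to show this dynamics terminates after finitely many steps by tracking the combinatorial data behind the potential of \cref{th:cont-strong-exist} rather than the potential itself (which, by \cref{ex:example-notLSC}, may fail to be lower semicontinuous, so that \cref{pr:GOP-pure_NE} is not available). Throughout I would assume $\varepsilon<1/2$: by \eqref{eq:cost-general} and the fact that the graph has no loops, the return time to any state is at least $2$, so \eqref{eq:stdis} gives $\stationary_{\transitvec}^{\countrc}(\play)\le 1/2$ and hence $\cost_{\play}(\transitvec)\le 1/2$ for every profile; thus for $\varepsilon\ge 1/2$ every profile is already a prior-free $\varepsilon$-equilibrium.

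Call a deviation of $\play$ from $\transitvec$ to $\transitvecalt=(\transit'_{\play},\transitvec_{-\play})$ a \emph{robust $\varepsilon$-improvement} if $\cost_{\play}(\transitvecalt)<\cost_{\play}(\transitvec)-\varepsilon$ for \emph{some} initial distribution. First I would note, using \eqref{eq:cost-general} and the fact that $\sum_{\playalt\in\vertices}\initial_{\playalt}\absorb_{\transitvec}^{\playalt\to\countrc}$ equals $1$ when $\initial$ is concentrated on $\rc_{\transitvec}^{\countrc}$, that only a recurrent player $\play\in\rc_{\transitvec}^{\countrc}$ can perform one, and that it is of exactly one of the two types appearing in the proof of \cref{th:cont-strong-exist}: either Case~1, where $\transit'_{\play\playalt}=0$ for all $\playalt\notin\tc_{\transitvec}^{\countrc}$ (so $\play$ stays recurrent, $\tc_{\transitvec}^{\countrc}$ is unchanged, and the robustness condition reduces to $\stationary_{\transitvecalt}^{\countrc}(\play)<\stationary_{\transitvec}^{\countrc}(\play)-\varepsilon$), or Case~2, where $\play$ and all of $\tc_{\transitvec}^{\countrc}$ become transient and the condition reduces to $\stationary_{\transitvec}^{\countrc}(\play)>\varepsilon$. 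Starting from an arbitrary $\transitvec^{0}\in\sactions$, I would follow a path of robust $\varepsilon$-improvements and bound its length. Every Case~2 step strictly decreases the number of recurrent classes $\numrc(\transitvec)$, which is always at least $1$, so at most $\numrc(\transitvec^{0})-1$ of them occur. For a Case~1 step, the weight $\treeweight_{\play}^{\countrc}$ of a $\play$-rooted tree does not involve $\transit_{\play}$, so $\treeweight_{\play}^{\countrc}(\transitvecalt)=\treeweight_{\play}^{\countrc}(\transitvec)$; inserting \eqref{eq:stationary_general} into $\stationary_{\transitvecalt}^{\countrc}(\play)<\stationary_{\transitvec}^{\countrc}(\play)-\varepsilon$ and using $\stationary_{\transitvec}^{\countrc}(\play)\le 1/2$ yields $\treeweight^{\countrc}(\transitvecalt)>\treeweight^{\countrc}(\transitvec)/(1-2\varepsilon)$. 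Hence each Case~1 step multiplies the corresponding $\treeweight^{\countrc}$ by a factor larger than $1/(1-2\varepsilon)>1$, while $0<\treeweight^{\countrc}\le\nPlayers$ by \cref{th:expected_cycle_length}; since the partition $\{\tc_{\transitvec}^{1},\dots,\tc_{\transitvec}^{\numrc},\residual_{\transitvec}\}$ (hence the relevant families of trees) changes only at Case~2 steps, only finitely many Case~1 steps can occur between two consecutive Case~2 steps. So the path is finite and ends at a profile $\transitvec^{*}$ admitting no robust $\varepsilon$-improvement, i.e.\ with $\cost_{\play}(\transitvecalt)\ge\cost_{\play}(\transitvec^{*})-\varepsilon$ for every player, every deviation, and every initial distribution; this is precisely a prior-free $\varepsilon$-equilibrium.

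The delicate point is the quantitative step: \cref{th:cont-strong-exist} only provides a \emph{generalized} ordinal potential, so a cost improvement of size $\varepsilon$ need not produce a potential drop bounded away from $0$, and the naive idea of picking a near-minimizer of $\potential$ fails — $\potential$ can be arbitrarily close to $\nPlayers$ with some $\treeweight^{\countrc}(\transitvec)$ tiny, exactly the configuration that allows a large cost reduction accompanied by a negligible change of $\potential$. The remedy, as above, is to discard $\potential$ and exploit the two honest monotonicities of the problem: the integer $\numrc(\transitvec)$, which only decreases, and $\treeweight^{\countrc}(\transitvec)$, which an $\varepsilon$-improving Case~1 deviation forces to grow by a fixed multiplicative factor while remaining bounded by $\nPlayers$. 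Identifying robust $\varepsilon$-improvements with Cases~1--2 is the second key ingredient, since it makes the endpoint prior-free automatically, with no separate argument; the rest is bookkeeping.
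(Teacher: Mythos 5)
Your argument is correct, but it is a genuinely different (and considerably more detailed) route than the paper's. The paper disposes of this proposition in two lines: existence of an \ac{eNE} is obtained by citing Lemmata~4.1 and~4.2 of \citet{MonSha:GEB1996} (a bounded potential yields the approximate finite improvement property, whose maximal paths end at $\varepsilon$-equilibria), and prior-freeness is read off from the fact that $\potential$ in \eqref{eq:final-oprc} does not depend on $\initial$. You instead run a ``robust $\varepsilon$-improvement'' dynamics and prove termination combinatorially. The comparison is instructive in both directions. On the one hand, your worry about the citation is legitimate: the Monderer--Shapley lemma that converts boundedness into the $\varepsilon$-FIP is stated for \emph{exact} potentials, where an $\varepsilon$-improvement forces a potential drop of at least $\varepsilon$; here $\potential$ is only a generalized ordinal potential, and---as you observe---a Case~1 deviation improving the cost by $\varepsilon$ may decrease $\potential$ by as little as $2\varepsilon\,\treeweight^{\countrc}(\transitvecalt)$, which is not bounded away from zero when $\treeweight^{\countrc}$ is small. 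Your multiplicative estimate $\treeweight^{\countrc}(\transitvecalt)>\treeweight^{\countrc}(\transitvec)/(1-2\varepsilon)$ (which checks out: from $a/W-a/W'>\varepsilon$ and $a\le W/2$ one gets $W'-W>\varepsilon WW'/a\ge 2\varepsilon W'$), combined with the upper bound $\treeweight^{\countrc}\le\nPlayers$ from \cref{th:expected_cycle_length}, the constancy of $\treeweight^{\countrcalt}$ for $\countrcalt\ne\countrc$ under such steps, and the monotone decrease of $\numrc$ at Case~2 steps, supplies exactly the quantitative content that the citation glosses over. On the other hand, your definition of a robust $\varepsilon$-improvement (improvement exceeding $\varepsilon$ for \emph{some} $\initial$) makes the endpoint prior-free by construction, which is cleaner than inferring prior-freeness from the $\initial$-independence of $\potential$ after the fact. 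In short: same conclusion, but your proof is self-contained where the paper's delegates the key step to a reference whose hypotheses do not literally match.
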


\proof{Proof.}
	The existence of an \ac{eNE} is a consequence of \citet[Lemmata~4.1 and 4.2]{MonSha:GEB1996}. See also \citet[Section 2.2.2.2]{LaCheSoo2016}.
	Prior-freeness is due to the fact that the potential function $\potential$ in \eqref{eq:final-oprc} does not depend on $\initial$. 
\Halmos
\endproof

Since  $\potential$ is not lower semicontinuous, even if  $\sactions$ is compact, we cannot invoke \cref{pr:GOP-pure_NE} to establish existence of equilibria, so we develop an {\em ad hoc} argument that requires some additional notation and preliminary results. 

Let $\sactions^{0}$ denote the set of strategy profiles $\transitvec\in\sactions$ with a minimal number of recurrent classes $\numrc^{0}=\min_{\transitvec\in\sactions}\numrc(\transitvec)$, and let $\sactions(\transitvec)$ be  the set of all unilateral deviations $\transitvecalt=(\transit'_{\play},\transitvec_{-i})$ by recurrent players $\play\in\rc_{\transitvec}^{1}\cupdot\cdots\cupdot\rc_{\transitvec}^{\numrc^{0}}$.
We start with the following simple observation.

\begin{lemma}\label{le:aux}
	For each $\transitvec\in\sactions^{0}$ and $\transitvecalt\in\sactions(\transitvec)$ we have $\numrc(\transitvecalt)=\numrc^{0}$ and $\tc^{\countrc}_{\transitvecalt}=\tc^{\countrc}_{\transitvec}$ for all $\countrc=1,\ldots,\numrc^{0}$. 
	Moreover, if  $\sactions$ is compact then $\inf_{\transitvecalt\in \sactions(\transitvec)} \potential(\transitvecalt)$ is attained.
\end{lemma}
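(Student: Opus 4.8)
The plan is to analyze carefully what happens when a recurrent player $\play \in \rc_{\transitvec}^{1}\cupdot\cdots\cupdot\rc_{\transitvec}^{\numrc^{0}}$ changes her strategy from $\transit_{\play}$ to $\transit'_{\play}$, while all other players keep their strategies fixed. Say $\play \in \rc_{\transitvec}^{\countrc}$. First I would observe that the deviation only alters outgoing edges from $\play$, so the transient closures $\tc^{\countrcalt}_{\transitvec}$ with $\countrcalt \neq \countrc$ are completely unaffected: every vertex in $\tc^{\countrcalt}_{\transitvec}$ still reaches $\rc^{\countrcalt}_{\transitvec}$ with probability $1$ via paths that avoid $\play$ (since $\play \notin \tc^{\countrcalt}_{\transitvec}$ and the latter set is closed). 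Hence these recurrent classes survive intact. For the class $\countrc$, the key point is that $\play$ cannot \emph{escape} $\tc^{\countrc}_{\transitvec}$: if $\transitvecalt \in \sactions(\transitvec)$ is a legitimate deviation, then the new out-edges of $\play$ still lead to out-neighbors in $\graph$, and — here I must use that the deviation keeps $\play$ \emph{recurrent}, which is precisely the definition of $\sactions(\transitvec)$ restricting to deviations by recurrent players...

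Actually, re-reading the statement, $\sactions(\transitvec)$ is defined as \emph{all} unilateral deviations by recurrent players, not only those keeping the player recurrent, so this needs more care. The correct argument is: since $\tc^{\countrc}_{\transitvec}$ is closed under $\transitvec$ and $\play \in \rc^{\countrc}_{\transitvec} \subseteq \tc^{\countrc}_{\transitvec}$, the out-neighbors of $\play$ in $\graph$ that receive positive weight under $\transit_{\play}$ all lie in $\tc^{\countrc}_{\transitvec}$; but under $\transit'_{\play}$ the player might send mass outside. If she does send mass to some $\playalt \notin \tc^{\countrc}_{\transitvec}$, then $\play$ becomes transient and we would fall into \textsc{Case 2} of the proof of \cref{th:cont-strong-exist}, \emph{decreasing} $\numrc$ by at least one (Case 2.2 merges two classes, Case 2.1 removes one), contradicting minimality $\numrc(\transitvecalt) \geq \numrc^{0}$. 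Wait — Case 2.2 merges, so $\numrc(\transitvecalt) = \numrc^{0}-1 < \numrc^{0}$, contradiction; Case 2.1 also gives $\numrc^0 - 1$. So in fact no such escaping deviation is possible within $\sactions^{0}$-neighborhoods: every $\transitvecalt \in \sactions(\transitvec)$ must keep $\transit'_{\play\playalt} = 0$ for all $\playalt \notin \tc^{\countrc}_{\transitvec}$, i.e., we are always in \textsc{Case 1}. Therefore $\tc^{\countrc}_{\transitvecalt} = \tc^{\countrc}_{\transitvec}$ (the closed set doesn't change, only the recurrent core inside it might), and since the other closures are untouched, the partition structure is preserved and $\numrc(\transitvecalt) = \numrc^{0}$.

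For the second assertion, once we know $\tc^{\countrc}_{\transitvecalt} = \tc^{\countrc}_{\transitvec}$ for all $\countrc$ and for every $\transitvecalt \in \sactions(\transitvec)$, formula \eqref{eq:final-oprc} gives
\[
\potential(\transitvecalt) = \sum_{\countrc=1}^{\numrc^{0}} \bigl(\nPlayers - \treeweight^{\countrc}(\transitvecalt)\bigr),
\]
and by \eqref{eq:treeweightell} each $\treeweight^{\countrc}(\transitvecalt) = \sum_{\play \in \tc^{\countrc}_{\transitvec}} \sum_{\tree \in \trees_{\!\!\play}(\tc^{\countrc}_{\transitvec})} \weight_{\transitvecalt}(\tree)$ is a \emph{polynomial} in the entries of $\transitvecalt$ — in particular a continuous function on the compact set $\overline{\sactions(\transitvec)}$. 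So $\potential$ restricted to the relevant deviation set is continuous. I would then argue that $\sactions(\transitvec)$ itself, or at least its closure inside the regime where the above formula holds, is compact: $\sactions(\transitvec)$ is parametrized by $\transit'_{\play} \in \sactions_{\play}$ with the constraint $\transit'_{\play\playalt} = 0$ for $\playalt \notin \tc^{\countrc}_{\transitvec}$, a closed subset of the compact $\sactions_{\play}$, hence compact; on this set $\potential$ is continuous, so the infimum is attained by Weierstrass. The one subtlety to handle cleanly is that $\sactions(\transitvec)$ as literally defined includes deviations that make $\play$ transient; the argument above shows those form a \emph{relatively open} wild region that we should simply exclude, or equivalently, reinterpret $\sactions(\transitvec)$ as the deviations within $\sactions^{0}$, which is exactly the set on which minimization matters for the subsequent existence proof.

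The main obstacle is getting the bookkeeping right in this last point — precisely delimiting which deviations keep us inside $\sactions^{0}$ and verifying that the restricted deviation set is closed (not just the image of a closed set, but closed as a subset of the simplex with the vanishing constraints), so that compactness plus polynomial continuity of $\treeweight^{\countrc}$ yields attainment. Everything else is a direct consequence of the closedness of transient closures and the case analysis already carried out in the proof of \cref{th:cont-strong-exist}.
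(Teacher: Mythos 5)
Your proof is correct and follows the same route as the paper: minimality of $\numrc^{0}$ over all of $\sactions$ forces every unilateral deviation by a recurrent player to keep all her mass inside her transient closure (otherwise $\numrc$ would drop below the minimum), so the closures and hence the polynomial formula for $\potential$ are preserved on $\sactions(\transitvec)$, which is compact as a finite union of sections, and Weierstrass gives attainment. The hedging in your last paragraph is unnecessary: your own minimality argument already shows that $\sactions(\transitvec)$ as literally defined contains no deviation making the player transient, so there is no ``wild region'' to exclude.
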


\proof{Proof.}
	By the minimality of $\numrc(\transitvec)=\numrc^{0}$, a recurrent player $\play\in\rc_{\transitvec}^{\countrc}$ cannot become transient after a unilateral deviation $\transitvecalt=(\transit'_{\play},\transitvec_{-\play})$, so that $\transit'_{\play\playalt}=0$
	for all $\playalt\not\in\tc^{\countrc}_{\transitvec}$. 
	Hence, every such deviation preserves the number of classes $\numrc(\transitvecalt) =\numrc(\transitvec)=\numrc^{0}$, as well as 	all the transient closures $\tc^{\countrc}_{\transitvecalt}=\tc^{\countrc}_{\transitvec}$.
	It follows that, for all $\transitvecalt\in\sactions(\transitvec)$,
	\begin{equation}\label{eq:PsiOmega}
	\potential(\transitvecalt)=\sum_{\countrc=1}^{\numrc(\transitvec)}(\nPlayers-\treeweight^{\countrc}(\transitvecalt))
	\quad\text{with}\quad 
	\treeweight^{\countrc}(\transitvecalt)=\sum_{\play\in\tc^{\countrc}_{\transitvec}}\sum_{\tree\in\trees_{\!\!\play}(\tc^{\countrc}_{\transitvec})}\weight_{\transitvecalt}(\tree).
	\end{equation}
	For fixed $\transitvec$ these functions are continuous with respect to $\transitvecalt$ and the set $\sactions(\transitvec)$ is compact, since it is a section of a compact set. Therefore, the minimum of $\potential(\transitvecalt)$ over  $\sactions(\transitvec)$ is attained.
\Halmos
\endproof

Our next step is less trivial and requires the notion of \emph{skeleton} of a transient closure $\tc^{\countrc}_{\transitvec}$, defined as any rooted  tree 
\begin{equation}
\label{eq:skeleton}
\skeleton^{\countrc}_{\!\!\transitvec}\in \bigcup_{\play\in\tc^{\countrc}_{\transitvec}}\trees_{\!\!\play}(\tc^{\countrc}_{\transitvec})
\end{equation} 
having maximal weight $\weight_{\transitvec}(\tree)$.
Note that 
\begin{equation}\label{eq:skel}
\treeweight^{\countrc}(\transitvec)\leq \numroottrees_{\nPlayers}\,\weight_{\transitvec}(\skeleton^{\countrc}_{\!\!\transitvec}),
\end{equation}
where $\numroottrees_{\nPlayers}$ is the number of rooted trees on $\nPlayers$ vertices.
\begin{theorem}\label{co:existencePFNEpot}
	If $\sactions$ is compact, then there exists $\transitvec\in\sactions^{0}$ 
	such that $\potential(\transitvec)\leq\potential(\transitvecalt)$ for 
	all $\transitvecalt\in\sactions(\transitvec)$.
\end{theorem}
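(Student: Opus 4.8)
The plan is to reduce the statement, via \cref{le:aux}, to showing that $\potential$ attains its infimum over $\sactions^0$. Indeed, if $\transitvec^\star$ is a minimizer of $\potential$ on $\sactions^0$, then for every $\transitvecalt\in\sactions(\transitvec^\star)$ we have $\transitvecalt\in\sactions^0$ by \cref{le:aux}, hence $\potential(\transitvecalt)\ge\potential(\transitvec^\star)$, which is exactly the asserted inequality. So the real content is to locate this minimizer, the difficulty being that $\sactions^0$ is not closed in $\sactions$ and $\potential$ is not lower semicontinuous across its boundary (\cref{ex:example-notLSC}), so the Weierstrass theorem does not apply directly.

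Concretely, I would set $m:=\inf_{\transitvec\in\sactions^0}\potential(\transitvec)$, which is finite because $0\le\potential<\numrc^0\nPlayers$ on $\sactions^0$ (each $\treeweight^\countrc(\transitvec)$ lies in $(0,\nPlayers]$ by \cref{th:expected_cycle_length}), and take a minimizing sequence $\transitvec^{(n)}\in\sactions^0$, $\potential(\transitvec^{(n)})\to m$. By compactness of $\sactions$ pass to a convergent subsequence $\transitvec^{(n)}\to\transitvec^\star\in\sactions$, and since there are finitely many possible supports pass to a further subsequence along which $\supp(\transitvec^{(n)})$ is constant; then $\numrc(\transitvec^{(n)})\equiv\numrc^0$, the recurrent classes, the transient closures $\tc^\countrc_0$, and the combinatorial type of each skeleton $\skeleton^\countrc_{\transitvec^{(n)}}$ (a fixed rooted spanning tree $T^\countrc$ of $\tc^\countrc_0$, necessarily rooted at a recurrent vertex $\play_\countrc$) are all constant as well. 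Because the transient closures do not move along the subsequence, \eqref{eq:treeweightell} makes each $\treeweight^\countrc(\transitvec^{(n)})=\sum_{\play\in\tc^\countrc_0}\sum_{\tree\in\trees_{\!\!\play}(\tc^\countrc_0)}\weight_{\transitvec^{(n)}}(\tree)$ a polynomial in $\transitvec^{(n)}$, hence convergent, and $\potential(\transitvec^{(n)})\to\sum_\countrc\bigl(\nPlayers-\lim_n\treeweight^\countrc(\transitvec^{(n)})\bigr)=m$. Moreover, since every edge leaving some $\tc^\countrc_0$ has weight $\equiv0$ along the subsequence, each $\tc^\countrc_0$ is still closed under $\transitvec^\star$, so $\transitvec^\star\notin\sactions^0$ is possible only if inside some $\tc^\countrc_0$ the recurrent class disintegrates into two or more recurrent classes of $\transitvec^\star$.

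The key step is to exclude this disintegration, and this is where the skeleton bound \eqref{eq:skel}, namely $\weight_{\transitvec}(\skeleton^\countrc_\transitvec)\le\treeweight^\countrc(\transitvec)\le\numroottrees_\nPlayers\,\weight_\transitvec(\skeleton^\countrc_\transitvec)$, enters. If $\weight_{\transitvec^\star}(T^\countrc)>0$ for every $\countrc$, then each $T^\countrc$ is a spanning tree of $\tc^\countrc_0$ all of whose edges survive in $\graph_{\transitvec^\star}$; every vertex of $\tc^\countrc_0$ then still reaches the root $\play_\countrc$ in $\graph_{\transitvec^\star}$, and $\play_\countrc$, having an out-edge inside the closed set $\tc^\countrc_0$, lies on a cycle, so $\tc^\countrc_0$ carries a single recurrent class of $\transitvec^\star$. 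Hence no class disintegrates, $\transitvec^\star\in\sactions^0$, the transient closures of $\transitvec^\star$ again reduce to the sets $\tc^\countrc_0$, $\potential$ passes to the limit along the subsequence, and $\potential(\transitvec^\star)=m$, which finishes the proof with $\transitvec=\transitvec^\star$. Otherwise $\weight_{\transitvec^\star}(T^\countrc)=0$ for every $\countrc$ in a nonempty set $B$, so $\treeweight^\countrc(\transitvec^{(n)})\to0$ for $\countrc\in B$ by \eqref{eq:skel}, while for $\countrc\notin B$ we have $\treeweight^\countrc(\transitvec^{(n)})\ge\weight_{\transitvec^{(n)}}(T^\countrc)\to\weight_{\transitvec^\star}(T^\countrc)>0$. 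Writing $\potential(\transitvec^{(n)})=\numrc^0\nPlayers-\sum_{\countrc\in B}\treeweight^\countrc(\transitvec^{(n)})-\sum_{\countrc\notin B}\treeweight^\countrc(\transitvec^{(n)})$, the first sum is strictly positive for every $n$ (each term is positive as $\transitvec^{(n)}\in\sactions^0$) and tends to $0$; if $B=\{1,\dots,\numrc^0\}$ this already gives $\potential(\transitvec^{(n)})<\numrc^0\nPlayers=m$, contradicting $\potential(\transitvec^{(n)})\ge m$. When $B$ is a proper subset I would, after passing to a subsequence making each $\treeweight^\countrc(\transitvec^{(n)})$ ($\countrc\notin B$) monotone, argue that the ``surviving'' tree weights cannot strictly increase to their limit fast enough to compensate the positive loss $\sum_{\countrc\in B}\treeweight^\countrc(\transitvec^{(n)})$, again producing $\potential(\transitvec^{(n)})<m$ for some large $n$ — a contradiction. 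So $B=\emptyset$ and we are in the first alternative.

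I expect this very last point — ruling out that a minimizing sequence escapes $\sactions^0$ by letting a recurrent class break apart — to be the genuine obstacle, i.e.\ making precise that disintegration of a class strictly inflates $\potential$ relative to the nearby profiles in $\sactions^0$. The role of the skeleton apparatus \eqref{eq:skeleton}--\eqref{eq:skel} is precisely to supply the two-sided control of $\treeweight^\countrc$ by a single tree weight $\weight_\transitvec(T^\countrc)$, which is a polynomial along the stabilized subsequence and therefore behaves predictably in the limit, turning the qualitative ``a class cannot silently dissolve'' into the quantitative comparison above.
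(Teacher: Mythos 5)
Your reduction is sound---by \cref{le:aux}, a global minimizer of $\potential$ over $\sactions^{0}$ would satisfy the conclusion---but the attainment proof breaks down exactly at the point you flag, and that point is the whole difficulty, not a loose end. Along a \emph{global} minimizing sequence only the sum $\sum_{\countrc}\treeweight^{\countrc}(\transitvec^{(n)})$ is controlled, not the individual terms, so nothing prevents some classes from disintegrating ($\treeweight^{\countrc}(\transitvec^{(n)})\to 0$ for $\countrc\in B$) while the others improve. Your proposed resolution---that the surviving tree weights ``cannot increase to their limit fast enough to compensate''---is not an argument and cannot become one through rate considerations alone: the terms $\treeweight^{\countrc}$ for different $\countrc$ depend on the strategies of \emph{disjoint} sets of players, so a sequence may trade a vanishing $\treeweight^{\countrc}$ against a growing $\treeweight^{\countrcalt}$ at any relative speed. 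What actually kills this case is the product structure $\sactions=\times_{\play}\sactions_{\play}$, which you never invoke: along your constant-support subsequence the classes decouple, so you may freeze the players of $\tc^{\countrc}_{0}$ for $\countrc\in B$ at their stage-one strategies while letting everyone else run; the resulting profiles stay in $\sactions^{0}$ (same support graph) and their potentials tend to $m-\sum_{\countrc\in B}\treeweight^{\countrc}(\transitvec^{(1)})<m$, contradicting the definition of $m$. A second, unacknowledged hole: your dichotomy ``$\transitvec^{\star}\notin\sactions^{0}$ only if some class inside a $\tc^{\countrc}_{0}$ disintegrates'' ignores the residual transient vertices of $\residual_{\transitvec^{(n)}}$, which exist whenever $\numrc^{0}\ge 2$. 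Their strategies do not enter $\potential$ at all, so a minimizing sequence leaves them completely uncontrolled, and in the limit they can coalesce into a brand-new recurrent class, again expelling $\transitvec^{\star}$ from $\sactions^{0}$; they too must be frozen.

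It is instructive to see how the paper's proof avoids both issues. Instead of a global minimizing sequence it iterates $\transitvec^{\run+1}\in\argmin_{\transitvecalt\in\sactions(\transitvec^{\run})}\potential(\transitvecalt)$, so that each step is a single unilateral deviation by a \emph{recurrent} player. Hence only one term of \eqref{eq:final-oprc} changes per step and, since the potential decreases, \emph{each} $\treeweight^{\countrc}(\transitvec^{\run})$ is individually nondecreasing, hence bounded below by its (positive) initial value; the skeleton bound \eqref{eq:skel} then keeps every $\tc^{\countrc}$ connected in the limit, and the residual players never move. This per-class monotonicity is precisely the control your sequence lacks. Your route can be repaired---and would then yield the slightly stronger statement that $\inf_{\sactions^{0}}\potential$ is attained---but only after adding the decoupling argument for both the disintegrating classes and the residual vertices; as written, the central case is asserted rather than proved.
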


\proof{Proof.}
	Fix $\transitvec^{0}\in\sactions^{0}$ and, for $\countrc=1,\ldots,\numrc^{0}$, let $\tc^{\countrc}=\tc_{\transitvec^{0}}^{\countrc}$ be the corresponding transient closures.
	Consider a sequence defined inductively by
	\begin{equation}\label{eq:pi-k}
	\transitvec^{\run+1}\in\argmin_{\transitvecalt\in\sactions(\transitvec^\run)}\potential(\transitvecalt),
	\end{equation}
	so that $\numrc(\transitvec^\run)=\numrc^{0}$ and $\tc^{\countrc}_{\transitvec^\run}\equiv \tc^{\countrc}$ for $\countrc=1,\ldots,\numrc^{0}$
	and all $\run\in\naturals$.
	
	Since $\transitvec^\run\in\sactions(\transitvec^\run)$ we have $\potential(\transitvec^{\run+1})\leq \potential(\transitvec^{\run})$.
	If equality holds for some $\run$, then the conclusion follows by taking $\transitvec=\transitvec^\run$. 
	Consider then the case where $\potential(\transitvec^{\run+1})< \potential(\transitvec^{\run})$ for all $\run\in\naturals$. 
	Note that along the iterations we have
	\begin{equation}\label{eq:potalt}
	\potential(\transitvec^{\run}) 
	=\sum_{\countrc=1}^{\numrc^{0}}(\nPlayers-\treeweight^{\countrc}(\transitvec^\run))\quad\text{with}\quad 
	\treeweight^{\countrc}(\transitvec^\run)
	=\sum_{\play\in\tc^\countrc}\sum_{\tree\in\trees_{\!\!\play}(\tc^\countrc)}\weight_{\transitvec^\run}(\tree),
	\end{equation}
	so that  $\sum_{\countrc=1}^{\numrc^{0}}\treeweight^{\countrc}(\transitvec^\run)$ increases with $\run$.
	Moreover, $\transitvec^{\run+1}$ is obtained from $\transitvec^{\run}$ 
	by a deviation of a player $\play_{\run}$ in some recurrent class $\rc^{\countrc^\run}_{\transitvec^\run}$,
	so that only the $\countrc^\run$-th term in the sum changes and therefore $\treeweight^{\countrc}(\transitvec^\run)$ is nondecreasing in $\run$
	for each $\countrc=1,\ldots,\numrc^{0}$. 
	In particular $\treeweight^{\countrc}(\transitvec^\run)$ remains bounded away from $0$, and then, using \eqref{eq:skel}, we may find $\varepsilon>0$ such that,
	\begin{equation}\label{eq:skeletonbound}
	\text{for all }\run\in\naturals,\quad\weight_{\transitvec^\run}(\skeleton^{\countrc}_{\!\!\transitvec^\run})\geq\varepsilon.
	\end{equation}
	Take a convergent subsequence $\transitvec^{\run_{\subrun}}\to\transitvec\in\sactions$, and extract a further subsequence along which the skeletons are constant $\skeleton^{\countrc}_{\!\!\transitvec^{\run_{\subrun}}}\equiv \skeleton^{\countrc}$ for $\countrc=1,\ldots,\numrc^{0}$.
	Passing to the limit  in \eqref{eq:skeletonbound} along this subsequence we get $\weight_{\transitvec}(\skeleton^{\countrc})\geq\varepsilon$, which implies that $\tc^{\countrc}$ is still connected in the limit and therefore $\numrc(\transitvec)=\numrc^{0}$ and $\tc^{\countrc}_{\transitvec}=\tc^{\countrc}$. 
	From these facts, using \eqref{eq:potalt} and the continuity of the polynomials $\transitvec\mapsto\weight_{\transitvec}(\tree)$, we obtain $\potential(\transitvec^{\run_{\subrun}})\to \potential(\transitvec)$.
	Since $\potential(\transitvec^\run)$ is decreasing, we conclude in fact that the full sequence of potential values converges 	$\potential(\transitvec^\run)\to \potential(\transitvec)$.
	
	We now show that  the limit point $\transitvec$ satisfies the claim of the theorem.
	Indeed, we already proved that  $\numrc(\transitvec)=\numrc^{0}$, so that $\transitvec\in\sactions^{0}$. 
	Now, consider a player $\play\in\rc^{\countrc}_{\transitvec}$ and a deviation $\transitvecalt=(\transit'_{\play},\transitvec_{-\play})$. 
	Since $\play\in\rc^{\countrc}_{\transitvec}$, it follows that, for each $\playalt\in\tc^\countrc$, there is a path from $\playalt$ to $\play$ whose edges have positive probability under $\transitvec$.
	Since $\transitvec^{\run_{\subrun}}\to\transitvec$ this is also the case for
	$\transitvec^{\run_{\subrun}}$ for $\subrun$ large enough.
	Hence, $\play\in\rc_{\transitvec^{\run_{\subrun}}}^{\countrc}$ 	and then the definition of the sequence $\transitvec^\run$ implies that
	\begin{equation}\label{eq:potential-k-k_+1}
	\potential(\transitvec^{\run_{\subrun}+1})\leq \potential(\transit'_{\play},\transitvec_{-\play}^{\run_{\subrun}}).
	\end{equation}
	From \cref{le:aux} we have that $(\transit'_{\play},\transitvec^{\run_{\subrun}}_{-\play})$ has the
	same transient closures $\tc^{\countrc}$ as $\transitvec^{\run_{\subrun}}$, so we may write explicitly
	\begin{equation*}
	\potential(\transitvec^{\run_{\subrun}+1})\leq \potential(\transit'_{\play},\transitvec^{\run_{\subrun}}_{-\play})= \sum_{\countrc=1}^{\numrc^{0}}\parens*{\nPlayers-\sum_{\play\in\tc^\countrc}\sum_{\tree\in\trees_{\!\!\play}(\tc^{\countrc})}\weight_{(\transit'_{\play},\transitvec_{-\play}^{\run_{\subrun}})}(\tree)}.
	\end{equation*}
	Letting $\subrun\to\infty$, we conclude
	\begin{equation}\label{eq:potential-ineq}
	\potential(\transitvec)\leq \sum_{\countrc=1}^{\numrc^{0}}\parens*{\nPlayers-\sum_{\play\in\tc^\countrc}\sum_{\tree\in\trees_{\!\!\play}(\tc^{\countrc})}\weight_{(\transit'_{\play},\transitvec_{-\play})}(\tree)}=\potential(\transit'_{\play},\transitvec_{-\play}),
	\end{equation}
	where in the last equality  we used once again  \cref{le:aux},  according to which 
	$\numrc(\transit'_{\play},\transitvec_{-\play})=\numrc^{0}$ and 
	$\tc^{\countrc}_{(\transit'_{\play},\transitvec_{-\play})}=\tc^{\countrc}$.   This shows
	that $\potential(\transitvec)\leq\potential(\transitvecalt)$ for all $\transitvecalt\in\sactions(\transitvec)$, completing the proof.
\Halmos
\endproof

With these preliminaries, we may now prove the existence of prior-free equilibria.
\begin{theorem}\label{co:existencePFNE}
	Every class of \aclp{CBPG} $\game(\graph,\cdot,\sactions)$  in which $\sactions$ is compact admits a \ac{PFNE}.
\end{theorem}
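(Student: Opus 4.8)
The plan is to take the profile $\transitvec\in\sactions^{0}$ furnished by \cref{co:existencePFNEpot} and show directly that it is a \ac{PFNE}, using \cref{th:cont-strong-exist} to exclude any profitable deviation. The two facts I would rely on are: (i) the function $\potential$ from \eqref{eq:final-oprc} is a generalized ordinal potential that does not depend on $\initial$ (\cref{th:cont-strong-exist}); and (ii) the chosen $\transitvec$ satisfies $\potential(\transitvec)\le\potential(\transitvecalt)$ for every deviation $\transitvecalt\in\sactions(\transitvec)$ by a recurrent player (\cref{co:existencePFNEpot}). The whole argument should then reduce to a short contradiction, carried out for an arbitrary initial distribution.

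Concretely, fix an arbitrary $\initial$ and suppose, for contradiction, that some player $\play$ has a profitable deviation $\transitvecalt=(\transit'_{\play},\transitvec_{-\play})$, i.e.\ $\cost_{\play}(\transitvecalt)<\cost_{\play}(\transitvec)$. The first step is the observation that a cost equal to $0$ cannot be strictly decreased, so necessarily $\cost_{\play}(\transitvec)>0$. By the cost formula \eqref{eq:cost-general}, this is possible only if $\play$ belongs to some recurrent class $\rc_{\transitvec}^{\countrc}$ with $\initial_{\transitvec}^{\countrc}>0$; in particular $\play$ is recurrent. Since $\transitvec\in\sactions^{0}$, its recurrent vertices are exactly those in $\rc_{\transitvec}^{1}\cupdot\cdots\cupdot\rc_{\transitvec}^{\numrc^{0}}$, and hence the deviation $\transitvecalt$ belongs to $\sactions(\transitvec)$.

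Now the two ingredients collide. By \cref{co:existencePFNEpot}, $\potential(\transitvec)\le\potential(\transitvecalt)$ because $\transitvecalt\in\sactions(\transitvec)$; on the other hand, since $\potential$ is a generalized ordinal potential (\cref{th:cont-strong-exist}) and $\cost_{\play}(\transitvecalt)<\cost_{\play}(\transitvec)$, we must have $\potential(\transitvecalt)<\potential(\transitvec)$. This contradiction shows that no player has a profitable deviation, so $\transitvec\in\NE(\sactions)$. Since $\initial$ was arbitrary and $\transitvec$ was selected without any reference to $\initial$, it is a Nash equilibrium for every initial distribution, which by \cref{de:prior-free} means that $\transitvec$ is a \ac{PFNE}.

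I do not expect a real obstacle here: all the technical work has been placed in \cref{th:cont-strong-exist,co:existencePFNEpot}, and compactness of $\sactions$ enters only through the latter. The one point worth spelling out explicitly is why a profitable deviation must come from a currently recurrent player — this is precisely what lands the deviation inside $\sactions(\transitvec)$ and lets the potential-minimality of $\transitvec$ be invoked.
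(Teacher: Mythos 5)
Your proof is correct and follows essentially the same route as the paper's: both take the profile $\transitvec\in\sactions^{0}$ produced by \cref{co:existencePFNEpot}, observe that a profitable deviation can only come from a recurrent player in a class with positive initial mass (hence the deviation lies in $\sactions(\transitvec)$), and derive a contradiction with the potential-minimality of $\transitvec$. The only cosmetic difference is that you invoke the generalized ordinal potential property of \cref{th:cont-strong-exist} wholesale, whereas the paper re-runs the ``Case 1'' computation from that proof via \cref{le:aux}; both are valid.
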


\proof{Proof.}
	Consider $\transitvec\in\sactions^{0}$ as in \cref{co:existencePFNEpot}. 
	We will show that this $\transitvec$ is a \ac{NE} for every initial $\initial$. 
	Suppose by contradiction that there exists a player $\play$ and a deviation $\transitvecalt =(\transit'_{\play},\transitvec_{-\play})$ such that $\cost_{\play}(\transitvecalt)<\cost_{\play}(\transitvec)$.
	As  noted in the proof of \cref{th:cont-strong-exist}, player $\play$ must belong to some recurrent class $\rc^{\countrcalt}_{\transitvec}$ 	with $\initial^{\countrcalt}_{\transitvec}>0$.
	Moreover, $\transitvec\in\sactions^{0}$ so that  \cref{le:aux} implies that 	$\tc^{\countrcalt}_{\transitvecalt}=\tc^{\countrcalt}_{\transitvec}$ and, \emph{a fortiori}, $\initial^{\countrcalt}_{\transitvecalt}=\initial^{\countrcalt}_{\transitvec}$.
	Arguing as in {\sc Case 1}  in the proof of  \cref{th:cont-strong-exist}, the cost reduction must come from a decrease in the stationary probability $\stationary_{\transitvecalt}^{\countrcalt}(\play) <\stationary_{\transitvec}^{\countrcalt}(\play)$.
	This is in turn equivalent to $\treeweight^{\countrcalt}(\transitvecalt)>\treeweight^{\countrcalt}(\transitvec)$ and implies $\potential(\transitvecalt)<\potential(\transitvec)$, which contradicts the choice of $\transitvec$.
\Halmos\endproof

%
%

\section{The buck-holding game.}
\label{se:BHG}
In this section we consider a different game, called \acfi{BHG}\acused{BHG}, which is denoted by $\bhgame(\graph,\initial,\actions)$. 
This game is similar to the \acl{BPG} described in the previous sections, but now the goal of each player is to maximize the fraction of time in which she has the buck.
Hence, the cost in \eqref{eq:SBPG-cost} becomes a payoff.
The definitions of improvement and Nash equilibrium change accordingly.

\begin{definition} \label{de:nash-enash-payoff} 
	Consider a  game with payoffs $\parens{\cost_{\play}}_{\play\in\vertices}$.
	\begin{enumerate}[(a)]
		\item
		Given a strategy profile $\actprof\in\actions$, a \emph{unilateral deviation} for player $\play$ is a strategy $\actprofalt\in\actions$ which 
		differs from $\actprof$ only in its $\play$-th coordinate. It is  a \emph{profitable deviation} if in addition
		$\cost_{\play}(\actprofalt)>\cost_{\play}(\actprof)$, in which case the difference $\cost_{\play}(\actprofalt)-\cost_{\play}(\actprof)$ 
		is called the \emph{improvement} of player $\play$.
		
		\item
		A  strategy profile $\actprof\in\actions$ is a \acfi{NE}\acused{NE} if no player has a profitable deviation.
		Similarly, it is an \acfi{eNE}\acused{eNE} if no player has a profitable deviation with an improvement larger than $\varepsilon$.
	\end{enumerate}
\end{definition}

\cref{de:ordinal-potential} still holds, but now the goal is to maximize the potential.

In a \acfi{DBHG}\acused{DBHG} $\bhgame(\graph,\initial,\actions)$ each player chooses a single out-neighbor.


\begin{proposition}\label{pr:BHG-gop}
Let $\bhgame(\graph,\cdot,\actions)$ any class of \aclp{DBHG}, then the following hold:

	\begin{enumerate}[\upshape(a\upshape)]
		\item\label{it:pr:BHG-gop-a} 
		For every initial distribution $\initial$, the game
	$\bhgame(\graph,\initial,\actions)$ is a generalized ordinal potential game, with the generalized ordinal potential $\potential$ as in \eqref{eq:potentialdef}. Moreover, if $\initial$ is fully supported on $\vertices$, then $\potential$ is in fact an ordinal potential.
	
		\item\label{it:pr:BHG-gop-b} 
		The class of games $\bhgame(\graph,\cdot,\actions)$ admits a \ac{PFNE}.
		
		\item\label{it:pr:BHG-gop-c} 
		Regardless of the choice of $\initial$, every improvement path has length $\bigoh(\nPlayers^{2})$.
		There exist instances with improvement paths of length  $\Theta(\nPlayers^2)$.
	\end{enumerate}
\end{proposition}

\proof{Proof.}
	Given a profile $\actprof\in\actions$, consider a player $\play\in\vertices$ who has a profitable deviation $\actalt_{\play}$
	and let $\actprofalt \coloneqq \parens{\actalt_{\play},\actprof_{-\play}}$.
	Since $\cost_{\play}(\actprofalt)>\cost_{\play}(\actprof)$, one of the following two scenarios  occurs:
	\begin{enumerate}[({\textsc{C}}$_1$)]
		\item\label{it:deviation-D1-BHG} 
		Player $\play\in\rc^{\countrcalt}_{\actprof}$ for some $\countrcalt\le\numrc(\actprof)$ and  $\actalt_{\play}\in \tc_{\actprof}^{\countrcalt}$ in such a way that the cycle where $\play$ lies becomes shorter. 
		Notice that in this case $\tc_{\actprof}^{\countrcalt}=\tc_{\actprofalt}^{\countrcalt}$, hence $\initial_{\actprof}^{\countrcalt}=\initial_{\actprofalt}^{\countrcalt}$ and the new payoff of player $\play$ is given by
		\begin{equation}\label{eq:new-payoff}
		\cost_{\play}(\actprofalt)=\frac{\abs{\rc_{\actprof}^{\countrcalt}}}{\abs{\rc_{\actprofalt}^{\countrcalt}}}\cost_{\play}(\actprof).
		\end{equation}
		Notice that $\actalt_{\play}$ can be a vertex in $\rc_{\actprof}^{\countrcalt}$ as well as a vertex in $\tc_{\actprof}^{\countrcalt}\setminus\rc_{\actprof}^{\countrcalt}$. Notice also that the deviation will be improving for player $\play$ if and only if $\initial^{\countrcalt}_{\actprof}>0$ and $|\rc_{\actprofalt}^{\countrcalt}|<|\rc_{\actprof}^{\countrcalt}|$.
		
		\item\label{it:deviation-D2-BHG}
		Player $\play$ is transient in $\actprof$ and becomes recurrent in $\actprofalt$, creating a new class. 
		In particular, assume that $\play\in\tc_{\actprof}^{\countrcalt}$ for some $\countrcalt\le\numrc(\actprof)$. 
Then
		\begin{equation}\label{eq:new-cycle}
		\rc_{\actprof}^{\countrc}=\rc_{\actprofalt}^{\countrc},\quad \text{for all }\countrc\le\numrc(\actprof)
		\end{equation}
		and $\actprofalt$ has a new cycle
		$\rc_{\actprofalt}^{\numrc(\actprofalt)}\ni\play$, with $\numrc(\actprofalt)=\numrc(\actprof)+1$.
		Moreover, 
		\begin{equation}\label{eq:eq-unicycle}
		\tc_{\actprof}^{\countrc}=\tc_{\actprofalt}^{\countrc},\quad \text{for all }\countrc\le\numrc(\actprof),\:\countrc\neq\countrcalt,
		\end{equation}
		and
		\begin{equation}\label{eq:new-unicycle}
		\tc_{\actprofalt}^{\numrc(\actprofalt)}\cup\tc_{\actprofalt}^{\countrcalt}=\tc_{\actprof}^{\countrcalt}.
		\end{equation}
		Notice that also in this case the deviation is profitable for $\play$ if and only if $\initial^{\numrc(\actprofalt)}_{\actprofalt}>0$. 
	\end{enumerate}
	
	The claims of the theorem now follow straightforwardly.
	
	\noindent
	\ref{it:pr:BHG-gop-a}
	We argue as in \cref{th:GOP}. 
	Let $\potential$ be defined as in \eqref{eq:potentialdef}.
	If $\cost_{\play}(\actprofalt)>\cost_{\play}(\actprof)$, then $\potential(\actprofalt)>\potential(\actprof)$, both under \ref{it:deviation-D1-BHG} and \ref{it:deviation-D2-BHG}.
	Indeed, if a deviation of type \ref{it:deviation-D1-BHG} takes place, then
	\begin{equation}\label{eq:potential-C1}
	\potential(\actprofalt)-\potential(\actprof)
	=\abs{\rc_{\actprof}^{\countrc}}-\abs{\rc_{\actprofalt}^{\countrc}}>0.
	\end{equation}
	On the other hand, if a deviation of type \ref{it:deviation-D2-BHG} takes place, then
	\begin{equation}\label{eq:potential-C2}
	\potential(\actprofalt)-\potential(\actprof)
	=\nPlayers-\abs{\rc_{\actprofalt}^{\countrcalt}} + \nPlayers-\abs{\rc_{\actprofalt}^{\numrc(\actprofalt)}} - (\nPlayers-\abs{\rc_{\actprof}^{\countrcalt}})>0.
	\end{equation}
	This proves that the game is generalized ordinal potential. 
	To prove that under the assumption that $\initial$ is fully supported the game is, indeed, ordinal potential, we have to check that the assumptions $\potential(\actprofalt)>\potential(\actprof)$ and $\initial(\play)>0$ for all $\play\in\vertices$ imply a profitable deviation of type \ref{it:deviation-D1-BHG} or \ref{it:deviation-D2-BHG}. 
	This, again, follows the line of \cref{th:GOP}. 
	Indeed, if $\actprofalt=(\actalt_{\play},\actprof_{-\play})$ and $\potential(\actprofalt)>\potential(\actprof)$, then, either $\actprofalt$ has one extra cycle, which implies that the deviation $\actprof$ to $\actprofalt$ is of type \ref{it:deviation-D2-BHG} and, by the fact that $\initial(\play)>0$, is improving for the deviating player, or the cycle in which $\play$ lies in $\actprof$ has been shortened in $\actprofalt$, which implies that the deviation is of type \ref{it:deviation-D1-BHG}, therefore $\cost_{\play}(\actprofalt)>\cost_{\play}(\actprof)$.

	\noindent
	\ref{it:pr:BHG-gop-b}
	To prove the existence of \acp{PFNE},  notice that the strategies in which $\potential$ is maximized are \acp{NE} for every initial distribution $\initial$.

	\noindent
	\ref{it:pr:BHG-gop-c}
	To show that the uniform upper bound of \cref{th:DBPG-FIP} applies also to \acp{DBHG}, consider the following: 
	Deviations of type \ref{it:deviation-D2-BHG} can occur at most $\floor{\nPlayers/2}-1$ times, since  the number of cycles is between $1$ and $\floor{\nPlayers/2}$. On the other hand, each cycle can be shrunk at most $\nPlayers-2$ times. 
	Hence, we have a quadratic upper bound. 
	On the other hand, a lower bound of the same order of magnitude can be obtained using a complete graph with an even number of vertices, as in the proof of \cref{th:DBPG-FIP}.
	The starting configuration is now a Hamiltonian cycle and the improvement steps are the same of the \ac{DBPG}, but in reverse order.
\Halmos
\endproof

\begin{remark}
	We say that $\graph$ admits a perfect matching if there exists a partition $\vertices^{1}\cupdot\vertices^{2}=\vertices$ and a subset $\widetilde{\edges}\subset\edges$ of cardinality $\nPlayers$ such that, for each $\play\in\vertices^{1}$ and $\playalt\in\vertices^{2}$, both $\parens{\play,\playalt}\in\widetilde{\edges}$ and $\parens{\playalt,\play}\in\widetilde{\edges}$.
	As mentioned in \cref{re:Hamilton}, in a \ac{DBPG}, if $\graph$ admits a Hamiltonian cycle, then the strategy profile $\actprof$ in which players play along such cycle is a \ac{NE}.
	In a \ac{DBHG}, if  $\graph$ admits a perfect matching, then the strategy profile that realizes this matching is a \ac{NE}.
	Indeed, in this case, for any possible deviation, the payoff of the deviating player would drop to zero. 
	More generally, we saw in \cref{se:deterministic} that for a connected graph $\graph$, every unicyclic strategy in which the cycle cannot be extended by a unilateral deviation is a \ac{PFNE}. 
	Similarly, in a \ac{DBHG}, a subgraph $\actprof\in\actions$ in which every player is in a cycle that she cannot unilaterally shorten is a \ac{PFNE}. 
\end{remark}

We now show the \emph{holding analogue} of \cref{th:cont-strong-exist}. Given a directed graph $\graph(\vertices,\edges)$, an initial distribution $\initial$, and an arbitrary set of strategy profiles $\sactions$ as in \cref{se:gop}, we consider the \acl{CBHG} $\bhgame(\graph,\initial,\sactions)$. 

\begin{proposition}\label{pr:CBHG}
	Every \acl{CBHG} $\bhgame(\graph,\initial,\sactions)$ is generalized ordinal potential with generalized ordinal potential function $\potential$ as in  \eqref{eq:potentialdef}.
\end{proposition}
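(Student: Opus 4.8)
The plan is to mirror the proof of \cref{th:cont-strong-exist} almost verbatim, with two twists: in the \acl{CBHG} players \emph{maximize} their share of the buck, so every strict inequality must be reversed, and the full support of $\initial$ is used precisely where the passing game did without it. I take the potential to be
$\potential(\transitvec):=\sum_{\countrc=1}^{\numrc(\transitvec)}\parens{\nPlayers-\treeweight^{\countrc}(\transitvec)}$, i.e.\ the same function as in \eqref{eq:final-oprc}, which specializes to \eqref{eq:potentialdef} on deterministic profiles because there $\treeweight^{\countrc}(\actprof)=\abs{\rc^{\countrc}_{\actprof}}$. The goal is to show that any profitable deviation \emph{strictly increases} $\potential$.

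Fix a profile $\transitvec$ and a profitable deviation $\transitvecalt=(\transit'_{\play},\transitvec_{-\play})$ for player $\play$, so that $\cost_{\play}(\transitvecalt)>\cost_{\play}(\transitvec)\ge0$ and hence $\cost_{\play}(\transitvecalt)>0$; by \eqref{eq:cost-general} player $\play$ is recurrent under $\transitvecalt$, and since $\initial$ is fully supported the class $\rc^{\countrcalt}_{\transitvecalt}$ containing $\play$ satisfies $\initial^{\countrcalt}_{\transitvecalt}>0$. I split according to the status of $\play$ under $\transitvec$. In \textsc{Case 1}, $\play$ is already recurrent, say $\play\in\rc^{\countrc}_{\transitvec}$; then $\play$ cannot have sent probability outside $\tc^{\countrc}_{\transitvec}$ (otherwise she would turn transient and get payoff $0<\cost_{\play}(\transitvec)$), so exactly as in \textsc{Case 1} of \cref{th:cont-strong-exist} we get $\tc^{\countrc}_{\transitvecalt}=\tc^{\countrc}_{\transitvec}$, $\initial^{\countrc}_{\transitvecalt}=\initial^{\countrc}_{\transitvec}$, $\numrc(\transitvecalt)=\numrc(\transitvec)$, and $\cost_{\play}(\transitvecalt)>\cost_{\play}(\transitvec)$ is equivalent to $\stationary^{\countrc}_{\transitvecalt}(\play)>\stationary^{\countrc}_{\transitvec}(\play)$; since $\treeweight^{\countrc}_{\play}$ is independent of $\transit_{\play}$, \eqref{eq:stationary_general} turns this into $\treeweight^{\countrc}(\transitvecalt)<\treeweight^{\countrc}(\transitvec)$, i.e.\ $\potential(\transitvecalt)>\potential(\transitvec)$.

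In \textsc{Case 2}, $\play$ is transient under $\transitvec$ and becomes recurrent under $\transitvecalt$. Because $\play$ alters only her own row, this creates exactly one new recurrent class, indexed $\numrc(\transitvecalt)=\numrc(\transitvec)+1$, while every old recurrent class survives unchanged (it is closed and its internal transitions are untouched). If $\play\in\residual_{\transitvec}$, the new transient closure is carved out of $\residual_{\transitvec}$ and all old transient closures are unchanged, so $\potential$ merely gains the term $\nPlayers-\treeweight^{\numrc(\transitvecalt)}(\transitvecalt)$, which is strictly positive by \cref{th:expected_cycle_length} together with the fact that a finite chain has at least one recurrent class, whence $\abs{\rc^{\numrc(\transitvecalt)}_{\transitvecalt}}<\nPlayers$. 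If instead $\play\in\tc^{\countrcalt}_{\transitvec}\setminus\rc^{\countrcalt}_{\transitvec}$, then, as in \eqref{eq:new-cycle}--\eqref{eq:new-unicycle} for the deterministic case, $\tc^{\countrcalt}_{\transitvec}$ splits as $\tc^{\countrcalt}_{\transitvecalt}\cupdot\tc^{\numrc(\transitvecalt)}_{\transitvecalt}$ with $\rc^{\countrcalt}_{\transitvecalt}=\rc^{\countrcalt}_{\transitvec}$ and all other classes unchanged, so
\begin{equation*}
\potential(\transitvecalt)-\potential(\transitvec)=\nPlayers-\treeweight^{\countrcalt}(\transitvecalt)-\treeweight^{\numrc(\transitvecalt)}(\transitvecalt)+\treeweight^{\countrcalt}(\transitvec).
\end{equation*}
By \cref{th:expected_cycle_length}, $\treeweight^{\countrcalt}(\transitvecalt)+\treeweight^{\numrc(\transitvecalt)}(\transitvecalt)\le\abs{\rc^{\countrcalt}_{\transitvec}}+\abs{\rc^{\numrc(\transitvecalt)}_{\transitvecalt}}\le\abs{\tc^{\countrcalt}_{\transitvec}}\le\nPlayers$, while $\treeweight^{\countrcalt}(\transitvec)>0$, so the difference is strictly positive. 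In all cases $\cost_{\play}(\transitvecalt)>\cost_{\play}(\transitvec)$ implies $\potential(\transitvecalt)>\potential(\transitvec)$, which is the defining property of a generalized ordinal potential to be maximized.

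The main obstacle is not any single inequality — those all follow from \cref{th:expected_cycle_length} — but the \emph{structural} bookkeeping in \textsc{Case 2}: verifying that a transient vertex becoming recurrent produces exactly one new class and that the ambient transient closure splits as claimed, which is the stochastic counterpart of \eqref{eq:new-cycle}--\eqref{eq:new-unicycle} and needs a short argument about closedness and absorption probabilities. The other point to flag explicitly is where full support of $\initial$ is used: it is exactly what makes every \textsc{Case 2} deviation genuinely improving (the freshly created class has $\initial^{\numrc(\transitvecalt)}_{\transitvecalt}>0$), so — unlike in \cref{th:cont-strong-exist} — the hypothesis cannot be dropped, consistently with \cref{pr:BHG-gop}.
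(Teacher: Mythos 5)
Your proposal is correct and follows essentially the same route as the paper's proof: the same two-case split (recurrent player staying recurrent, handled via the invariance of $\treeweight^{\countrc}_{\play}$ under changes of $\play$'s own row; transient player becoming recurrent, handled via the bound $\treeweight^{\countrc}\le\abs{\rc^{\countrc}}$ from \cref{th:expected_cycle_length}), with full support of $\initial$ entering exactly where you say it does. The only substantive difference is that you explicitly treat the sub-case $\play\in\residual_{\transitvec}$, which the paper's displayed computation in \textsc{Case 2} silently subsumes; that is a small gain in completeness, not a different argument.
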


In what follows, we are going to use the same notation as in the proof of \cref{pr:BHG}.

\proof{Proof.}
	Consider a profitable deviation for player $\play$ from $\transitvec$ to $\transitvecalt\coloneqq(\transit'_{\play},\transitvec_{-\play})$, with $\cost_{\play}(\transitvecalt)>\cost_{\play}(\transitvec)$. As for the deterministic case, we distinguish two possible scenarios, depending on whether player $\play$ is recurrent or transient before the deviation.
	Notice that player $\play$ cannot improve her payoff by moving some probability mass 
	out of her own transient closure class.
	
	\smallskip\noindent
	\textsc{Case 1:} $\play$ is recurrent both in $\transitvec$ and   $\transitvecalt$. Call $\rc^{\countrcalt}_{\transitvec}$ its recurrent class under the strategy $\transitvec$.
	This is the stochastic version of \ref{it:deviation-D1-BHG}.
	Although $\rc_{\transitvec}^{\countrcalt}$ may not coincide with $\rc_{\transitvecalt}^{\countrcalt}$ , 
	we have $\tc_{\transitvecalt}^{\countrcalt}=\tc_{\transitvec}^{\countrcalt}$ and 
	$\initial_{\transitvecalt}^{\countrcalt}=\initial_{\transitvec}^{\countrcalt}$.
	It then follows from \eqref{eq:cost-general} that $\cost_{\play}(\transitvecalt)>\cost_{\play}(\transitvec)$ is equivalent to having $\initial_{\transitvecalt}^{\countrcalt}=\initial_{\transitvec}^{\countrcalt}>0$ and
	$\stationary_{\transitvecalt}^{\countrcalt}(\play)>\stationary_{\transitvec}^{\countrcalt}(\play)$. 
	Now, since the weight of any $\play$-rooted tree
	$\tree\in\trees_{\!\!\play}(\tc_{\transitvec}^{\countrcalt})=\trees_{\!\!\play}(\tc_{\transitvecalt}^{\countrcalt})$  depends neither on $\transit_{\play}$ nor on $\transit'_{\play}$, 
	we get $\treeweight_{\play}^{\countrcalt}(\transitvec)=\treeweight_{\play}^{\countrcalt}(\transitvecalt)$. 
	Then, from \eqref{eq:stationary_general} it follows that $\stationary_{\transitvecalt}^{\countrcalt}(\play)>\stationary_{\transitvec}^{\countrcalt}(\play)$
	is equivalent to $\treeweight^{\countrcalt}(\transitvecalt)<\treeweight^{\countrcalt}(\transitvec)$, which implies 
	$\potential(\transitvecalt) > \potential(\transitvec)$.

	\smallskip\noindent
	\textsc{Case 2:} $\play$ transient in $\transitvec$  but recurrent in $\transitvecalt$. In this case $\play$ creates a new recurrent class.
	This is the stochastic version of \ref{it:deviation-D2-BHG}. We call $\tc^{\countrcalt}_{\transitvec}$ the transient closed class of $\play$ under $\transitvec$. Moreover, we use the same notation as in \cref{eq:new-cycle,eq:eq-unicycle,eq:new-unicycle}.
	We have, 
	\begin{align*}
	\potential(\transitvecalt)-\potential(\transitvec)
	&=\nPlayers-\treeweight^{\countrcalt}(\transitvecalt) + \nPlayers-\treeweight^{\numrc(\transitvecalt)}(\transitvecalt) -(\nPlayers-\treeweight^{\countrcalt}(\transitvec))\\
	&=\nPlayers+\treeweight^{\countrcalt}(\transitvec)-\treeweight^{\numrc(\transitvecalt)}(\transitvecalt)-\treeweight^{\countrcalt}(\transitvecalt)\\
	&\ge\nPlayers+\treeweight^{\countrcalt}(\transitvec)
	-\abs{\tc^{\numrc(\transitvecalt)}_{\transitvecalt}} - \abs{\tc^{\countrcalt}_{\transitvecalt}}\\
	&=\nPlayers+\treeweight^{\countrcalt}(\transitvec)
	-\abs{\tc^{\countrcalt}_{\transitvec}} > 0,
	\end{align*}
	where the  inequality stems from \cref{co:expected-cycle-length}, $\treeweight^{\countrcalt}(\transitvec)>0$ and $\abs{\tc^{\countrcalt}_{\transitvec}}\le\nPlayers$.
\Halmos
\endproof

\begin{corollary}\label{co:BHG-epsilon}
	For every  $\varepsilon>0$, every class of \aclp{CBHG} $\bhgame(\graph,\cdot,\sactions)$ has the \ac{eFIP} and admits a prior-free \ac{eNE}.
\end{corollary}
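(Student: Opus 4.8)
The plan is to follow the blueprint of \cref{pr:cont-weak-exist}, strengthened by a finiteness estimate in the spirit of \cref{pr:BHG-gop}\ref{it:pr:BHG-gop-c} and \cref{th:DBPG-FIP}. Throughout fix a fully supported $\initial$, so that \cref{pr:CBHG} applies: the \acl{CBHG} $\bhgame(\graph,\initial,\sactions)$ is a generalized ordinal potential game — with the potential now to be \emph{maximized} — and, by the computation in the proof of \cref{pr:CBHG}, the potential is
\begin{equation*}
\potential(\transitvec)=\sum_{\countrc=1}^{\numrc(\transitvec)}\bigl(\nPlayers-\treeweight^{\countrc}(\transitvec)\bigr),
\end{equation*}
as in \eqref{eq:final-oprc}. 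The one quantitative fact to record first is that $\potential$ is \emph{bounded}: by \cref{th:expected_cycle_length} one has $0<\treeweight^{\countrc}(\transitvec)\le|\rc^{\countrc}_{\transitvec}|\le\nPlayers$, and since $\graph$ is loopless every recurrent class contains at least two vertices, so $\numrc(\transitvec)\le\floor{\nPlayers/2}$ and hence $0\le\potential(\transitvec)\le\tfrac12\nPlayers^{2}$.

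Granting this, the \ac{eNE} and its weak prior-freeness come out \emph{exactly} as in \cref{pr:cont-weak-exist}: the approximate-potential lemmata of \citet[Lemmata~4.1 and 4.2]{MonSha:GEB1996} (see also \citet[Section~2.2.2.2]{LaCheSoo2016}) yield an \ac{eNE}, and since neither $\potential$ nor the construction refers to $\initial$, the resulting profile is an \ac{eNE} for \emph{every} fully supported $\initial$, i.e.\ a weakly prior-free \ac{eNE}. (Full support of $\initial$ enters only through \cref{pr:CBHG}, to guarantee that ``transient-becomes-recurrent'' deviations are genuinely profitable; this is why one can claim only \emph{weak} prior-freeness.)

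The substantive new claim is the \ac{eFIP}, which I would prove by a direct argument combining two observations. First, classify each step of an arbitrary $\varepsilon$-improvement path $\transitvec^{0},\transitvec^{1},\dots$ by the dichotomy in the proof of \cref{pr:CBHG}: the deviating player is either recurrent before and after the move — in which case all transient closures $\tc^{1}_{\transitvec},\dots,\tc^{\numrc}_{\transitvec}$, and \emph{a fortiori} $\numrc$, are unchanged — or transient before and recurrent after, in which case a fresh recurrent class appears and $\numrc$ grows by one. Hence $\numrc(\transitvec^{\run})$ is nondecreasing and bounded by $\floor{\nPlayers/2}$, so there are at most $\floor{\nPlayers/2}-1$ steps of the second kind, and the path is a concatenation of at most $\floor{\nPlayers/2}$ maximal runs of first-kind steps; along each such run the transient closures are frozen and the chains on distinct $\tc^{\countrc}$ evolve independently, so it suffices to bound the number of first-kind steps inside a single frozen closure $\tc=\tc^{\countrc}$. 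Second, inside $\tc$ a first-kind $\varepsilon$-profitable move by $\play\in\rc^{\countrc}_{\transitvec}$ leaves $\treeweight^{\countrc}_{\play}$ unchanged (rooted trees at $\play$ carry no out-edge of $\play$); writing $\stationary^{\countrc}_{\transitvec}(\play)=\treeweight^{\countrc}_{\play}(\transitvec)/\treeweight^{\countrc}(\transitvec)$ and using $\cost_{\play}(\transitvecalt)-\cost_{\play}(\transitvec)\le\stationary^{\countrc}_{\transitvecalt}(\play)-\stationary^{\countrc}_{\transitvec}(\play)$ together with $\stationary^{\countrc}_{\transitvecalt}(\play)\le1$ one finds
\begin{equation*}
\varepsilon\le\stationary^{\countrc}_{\transitvecalt}(\play)-\stationary^{\countrc}_{\transitvec}(\play)=\treeweight^{\countrc}_{\play}(\transitvec)\,\frac{\treeweight^{\countrc}(\transitvec)-\treeweight^{\countrc}(\transitvecalt)}{\treeweight^{\countrc}(\transitvec)\,\treeweight^{\countrc}(\transitvecalt)}\le\frac{\treeweight^{\countrc}(\transitvec)-\treeweight^{\countrc}(\transitvecalt)}{\treeweight^{\countrc}(\transitvec)},
\end{equation*}
that is, $\treeweight^{\countrc}(\transitvecalt)\le(1-\varepsilon)\,\treeweight^{\countrc}(\transitvec)$. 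Since $\treeweight^{\countrc}\le\nPlayers$, this forces geometric decay of $\treeweight^{\countrc}$ along the run.

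The main obstacle is the last mile: upgrading ``$\treeweight^{\countrc}$ decays geometrically along a first-kind run'' to ``the run is finite''. This needs a uniform lower bound $\treeweight^{\countrc}\ge\delta(\varepsilon,\nPlayers)>0$ valid throughout the run — equivalently, one must rule out that a first-kind run asymptotically ``dissolves'' $\rc^{\countrc}$ without ever splitting it. I would try to extract such a $\delta$ from the fact that $\treeweight^{\countrc}$ is the $\probalt_{\transitvec}$-expected length of a spanning unicycle of $\tc^{\countrc}$ (\cref{th:expected_cycle_length}), arguing that once $\treeweight^{\countrc}$ is too small the induced chain on $\tc^{\countrc}$ must in fact break into two recurrent classes — a second-kind event that terminates the run and reduces to a case already counted. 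Everything else — the potential and its boundedness, the monotonicity of $\numrc$, the reduction to a single frozen closure, and the passage from \ac{eFIP} to a weakly prior-free \ac{eNE} — is routine; and if the direct route to the lower bound proves awkward, one simply reverts to the Monderer–Shapley argument of \cref{pr:cont-weak-exist}, which already delivers the \ac{eNE} and its weak prior-freeness.
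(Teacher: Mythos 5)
Your handling of the \ac{eNE} and of weak prior-freeness is exactly the paper's (implicit) argument: \cref{pr:CBHG} supplies a bounded generalized ordinal potential that does not depend on $\initial$ (though it does require $\initial$ fully supported, whence only \emph{weak} prior-freeness), and the conclusion is read off from \citet[Lemmata~4.1 and 4.2]{MonSha:GEB1996} exactly as in \cref{pr:cont-weak-exist}. Be aware, however, that those same lemmata are what the paper relies on for the \ac{eFIP} as well---the corollary is meant to follow from \cref{pr:CBHG} in one stroke, with no separate combinatorial estimate. Your fallback therefore covers the \emph{entire} statement, but you present it as delivering only the \ac{eNE}; as written, the \ac{eFIP} claim rests solely on your direct argument, and that argument does not close.

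The structural part of your direct argument is sound: $\numrc(\transitvec^{\run})$ is nondecreasing, there are at most $\floor{\nPlayers/2}-1$ second-kind steps, the closures are frozen along a first-kind run, and the inequality $\treeweight^{\countrc}(\transitvecalt)\le(1-\varepsilon)\,\treeweight^{\countrc}(\transitvec)$ is correctly derived. But geometric decay of a positive quantity terminates nothing by itself, and the mechanism you propose for the missing lower bound---``once $\treeweight^{\countrc}$ is too small the chain on $\tc^{\countrc}$ must break into two recurrent classes''---is false. On a closure with four states, take $\transit_{12}=1-\delta$, $\transit_{13}=\delta$, $\transit_{21}=1$, $\transit_{34}=1-\delta$, $\transit_{31}=\delta$, $\transit_{43}=1$: the chain is irreducible for every $\delta>0$, yet a direct count of rooted spanning trees gives $\treeweight_{\vertices}(\transitvec)=2\delta(2-\delta)\to0$. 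So no bound $\delta(\varepsilon,\nPlayers)>0$ can be extracted from the class structure alone; it would have to come from the dynamics of the $\varepsilon$-improvement path itself, which is exactly the step you have not supplied. (Relatedly, the boundedness $0\le\potential\le\nPlayers^{2}/2$ that you record at the outset does not rescue the argument either: for a \emph{generalized ordinal} potential an $\varepsilon$-improvement step only forces $\potential$ to move strictly, not by a quantified amount, and absorbing this subtlety is precisely the role of the Monderer--Shapley citation.) The repair is simply to invoke \cref{pr:CBHG} together with \citet[Lemmata~4.1 and 4.2]{MonSha:GEB1996} for \emph{both} conclusions, as the paper does, and to set aside the unicycle computation as an interesting but currently incomplete quantitative refinement.
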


Unfortunately, mimicking the argument of \cref{co:existencePFNEpot,co:existencePFNE,le:aux} is not enough to prove the existence of \ac{PFNE} for general compact strategy space $\sactions$. Indeed, the notion of \emph{skeleton} introduced in \cref{suse:equilibria} does not guarantee that the transient closures are retained in the limit.
Nonetheless, the following proposition easily follows by the previous analysis.

\begin{proposition}\label{pr:BHG}
	Every class of \acl{CBHG} $\bhgame(\graph,\cdot,\sactions)$ admits a \ac{PFNE} if one of the following holds:
	\begin{enumerate}[\upshape(i\upshape)]
		\item 
		\label{it:pr:BHG-1}
		The set $\sactions$ is finite.
		\item 
		\label{it:pr:BHG-2}
For every $\transitvec\in\sactions$ the associated Markov chain has a unique recurrent class.
	\end{enumerate}
\end{proposition}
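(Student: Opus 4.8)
The plan is to treat the two cases separately, in each producing a global maximizer of the potential $\potential$ of \cref{pr:CBHG} and checking that this maximizer is a Nash equilibrium for every fully supported initial distribution. Two facts will be used repeatedly. First, in a holding game $\potential$ is to be \emph{maximized}, and a component-wise maximizer of a generalized ordinal potential admits no profitable deviation: by the payoff version of \eqref{eq:gen_ord_potential} a profitable deviation would push $\potential$ strictly above its maximum, so such a maximizer is a \ac{NE}. Second, the potential of \cref{pr:CBHG} is the stochastic potential $\potential(\transitvec)=\sum_{\countrc=1}^{\numrc(\transitvec)}(\nPlayers-\treeweight^{\countrc}(\transitvec))$ from \eqref{eq:final-oprc}, which does not depend on $\initial$.

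For \ref{it:pr:BHG-1}: since $\initial$ is fully supported, \cref{pr:CBHG} makes $\bhgame(\graph,\initial,\sactions)$ a generalized ordinal potential game; as $\sactions$ is finite, $\potential$ attains its maximum at some $\transitvec^{\ast}\in\sactions$, and by the remark above $\transitvec^{\ast}\in\NE(\sactions)$. Since $\potential$ is independent of $\initial$, this same $\transitvec^{\ast}$ maximizes $\potential$ — hence is a \ac{NE} — for every fully supported $\initial$, so $\transitvec^{\ast}$ is a \ac{WPFNE}.

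For \ref{it:pr:BHG-2}: the key point is that irreducibility makes the payoffs $\initial$-independent. When every $\transitvec\in\sactions$ induces an irreducible chain we have $\numrc(\transitvec)\equiv1$, $\vertices_{\transitvec}^{0}=\emptyset$ and $\tc_{\transitvec}^{1}=\vertices$, so \eqref{eq:cost-general} and \cref{th:MCTT} give
\begin{equation*}
\cost_{\play}(\transitvec)=\stationary_{\transitvec}(\play)=\frac{\treeweight_{\play}(\transitvec)}{\treeweight_{\vertices}(\transitvec)},
\end{equation*}
which does not depend on $\initial$; hence the payoff vector, and with it the whole set $\NE(\sactions)$, is the same for every $\initial$, and it suffices to exhibit one equilibrium. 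As the payoffs are $\initial$-independent we may as well take $\initial$ fully supported and invoke \cref{pr:CBHG}; here \eqref{eq:final-oprc} collapses to $\potential(\transitvec)=\nPlayers-\treeweight_{\vertices}(\transitvec)$, because $\treeweight^{1}(\transitvec)=\treeweight_{\vertices}(\transitvec)$ when $\tc_{\transitvec}^{1}=\vertices$. Unlike the potential in \cref{ex:example-notLSC}, this $\potential$ is a polynomial in the entries of $\transitmatrix$, hence continuous; assuming $\sactions$ compact (as in \cref{co:existencePFNE}), $\potential$ attains its maximum at some $\transitvec^{\ast}\in\sactions$, and $\transitvec^{\ast}\in\NE(\sactions)$ by the opening remark. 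Since the payoffs do not depend on $\initial$ at all, $\transitvec^{\ast}$ is in fact a \ac{PFNE}, in particular a \ac{WPFNE}.

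The genuinely delicate point — and the reason the hypotheses are imposed — is the obstruction noted just before the statement: along a $\potential$-improving sequence the transient closures $\tc^{\countrc}$ need not survive in the limit, so $\potential$ is in general not lower semicontinuous and \cref{pr:GOP-pure_NE} cannot be invoked. The proposition isolates the two regimes where this cannot happen — under \ref{it:pr:BHG-1} because $\sactions$ is finite, and under \ref{it:pr:BHG-2} because there are no transient vertices, so $\potential$ is the globally continuous polynomial $\nPlayers-\treeweight_{\vertices}$. Everything else is routine; the only extra care is to record that in Case \ref{it:pr:BHG-2} the $\initial$-independence of the payoffs upgrades the equilibrium from \ac{WPFNE} to \ac{PFNE}.
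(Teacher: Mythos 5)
Your proof is correct and follows exactly the route the paper intends (the paper gives no explicit proof, stating only that the proposition ``easily follows by the previous analysis''): in both cases you maximize the $\initial$-independent generalized ordinal potential of \cref{pr:CBHG} and observe that a global maximizer admits no profitable deviation, with finiteness supplying the maximizer in case \ref{it:pr:BHG-1} and continuity of $\nPlayers-\treeweight_{\vertices}$ supplying it in case \ref{it:pr:BHG-2}. Your added compactness hypothesis in case \ref{it:pr:BHG-2} is indeed needed (the statement omits it, though it is harmless in the paper's only application, the PageRank game, where $\sactions$ is finite), and your upgrade from \ac{WPFNE} to \ac{PFNE} under irreducibility is precisely what the paper itself invokes in \cref{suse:BHG-pagerank}.
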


We remark that the second condition in  \cref{pr:BHG} is immediately satisfied when 
\begin{equation}
\forall\transitvec\in\sactions\qquad\transit_{\play,\playalt}>0,\quad\forall \play\neq \playalt.
\end{equation}
In what follows we will see that the latter condition is satisfied by the PageRank game mentioned in the introduction.

\subsection{The PageRank game.}
\label{suse:BHG-pagerank}
The PageRank dynamics was introduced by \citet{BrinPage:1998} as a tool to rank webpages.
From a mathematical perspective, PageRank is a Markov chain on the state space $\vertices$ of web-pages, where two webpages $\play,\playalt\in\vertices$ are connected by a directed edge $(\play,\playalt)\in\edges$ if there exists a weblink on page $\play$ leading to page $\playalt$. 
A websurfer visiting a given page $\play$ at time $\per$ clicks at random on a link $(\play,\playalt)\in\edges$  and  moves to page $\playalt$ at time $\per+1$. 
Alternatively, with small  probability, she chooses one of the billion webpages in $\vertices$ according to some  distribution.
This describes a Markov chain having a unique stationary measure  $\stationary$, according to which webpages are then ranked.

We now consider a game-theoretic version of this problem, where players are webmasters whose strategies are the out-links of their webpages and the payoff is the ranking of their pages.  
This model lies in the realm of  \acp{BHG}. 
In particular, the game can be framed as follows: we identify the web pages with the set $\vertices=\braces{1,\dots,\nPlayers}$ and we consider $\graph=\completegraph_{\nPlayers}$, the complete graph. 
We fix a dumping factor $\dumping\in(0,1)$ and a probability measure  on $\vertices$, which we identify with a nonnegative column vector $\PRvec$ of size $\nPlayers$. 
For each player $\play$, the strategy set $\PRactions_{\play}\subset 2^{\vertices\setminus\braces{\play}}$, i.e., a strategy $\PRact_{\play}$ of player $\play$ is a subset of $\vertices\setminus\braces{\play}$ that satisfies some constraints. 
For instance, a page cannot connect to more than a fixed number of other pages; alternatively, if the page is about some topic it must link to at least another page with a related content, etc.

Given a strategy profile $\PRactprof=\parens{\PRact_{1},\dots,\PRact_{\nPlayers}}\in\PRactions$, define the transition matrix $\PRtrans$ with entries
\begin{equation}\label{eq:transPR}
\PRtrans_{\play\playalt} \coloneqq \frac{\mathds{1}_{\playalt\in\PRact_{\play}}}{\abs{\PRact_{\play}}}.
\end{equation}
According to the transition  matrix $\PRtrans$ player $\play$  chooses uniformly at random one of the players in $\PRact_{\play}$. We then consider the perturbation given by
\begin{equation}\label{eq:trans-perturbed}
\transitmatrix \coloneqq (1-\dumping)\PRtrans+\dumping\mathbf{1}\PRvec^{\top},
\end{equation}
where $\mathbf{1}$ is a column vector whose components are all $1$.
Notice that, since $\dumping\in(0,1)$, regardless of the particular choice of $\PRvec$, the transition matrix $\transitvec$ has a unique recurrent class. 
Hence, it has a unique stationary measure, which is called PageRank. 
In particular, if $\PRvec$ is strictly positive, then $\transitvec$ is irreducible, and all the entries of the PageRank are positive.

The above game can be framed as a \acl{CBHG} as follows:
The  set $\PRactions_{\play}$ can be mapped to the strategy set $\sactions_{\play}$ of vectors $\transitvec_{\play}$ such that 
\begin{equation}\label{eq:PRtransit-i}
\transit_{\play\playalt} = (1-\dumping)\frac{\mathds{1}_{\playalt\in\PRact_{i}}}{\abs{\PRact_{\play}}} + \dumping\PRstr_{\playalt}.
\end{equation}
Since every $\transitmatrix\in\sactions$ has a unique stationary distribution, the payoff vector does not depend on the initial measure $\initial$.
Therefore, \cref{pr:BHG}\ref{it:pr:BHG-2} applies and all the equilibria of this game are prior-free.

As mentioned in the Introduction, this game has been introduced for the first time in 
\citet{HopShe:mimeo2008}.
In our notation, their model coincides with the particular case in which
\begin{equation}
\label{eq:hs1}
\PRactions_{\play}=2^{\vertices\setminus\{\play\}},\qquad\forall\play\in\vertices,
\end{equation} 
that is, each webpage can link to any other webpage, without any constraints.
\citet{HopShe:mimeo2008} show that in their game equilibria exist and  establish some features that all of them share. 
In a subsequent work \citet{CheTenWanZho:FA2009} prove that, under the same set of assumption, there exists equilibria which are sensitive to the choice of the parameter $\dumping$. 
Our work adds to this literature by weakening the assumptions for the existence of equilibria and reveiling the potential nature of the game.


\section*{Acknowledgments.}
The authors thank the two anonymous referee and the area editor for their insightful comments and for pointing out several relevant references. 
Matteo Quattropani gratefully thanks Pietro Caputo for several interesting discussions.
Matteo Quattropani and Marco Scarsini are a members of INdAM-GNAMPA.
Roberto Cominetti gratefully acknowledges the support of LUISS during a visit in which this research was initiated, as well as the support of the Complex Engineering Systems Institute, ISCI (ICM-FIC: P05-004-F, CONICYT: FB0816)
and FONDECYT 1171501.
Marco Scarsini gratefully acknowledges the support and hospitality of FONDECYT 1130564 and N\'ucleo Milenio ``Informaci\'on y Coordinaci\'on en Redes.''
This research project received partial support from the COST action GAMENET, the INdAM-GNAMPA Project 2020 ``Random walks on random games,'' and  the Italian MIUR PRIN 2017 Project ALGADIMAR ``Algorithms, Games, and Digital Markets.'' 


\bibliographystyle{informs2014} 
\bibliography{bibmarkov} 


\newpage

\begin{APPENDIX}{List of symbols}

\label{se:symbols}

The following table contains the symbols that we have used throughout the paper.

\begin{longtable}{p{.10\textwidth} p{.85\textwidth}}

$\tc_{\actprof}^{\countrc}$ & unicycle induced by $\actprof$, defined in  \cref{de:InducedGraph,eq:transient-closure} \\


$\PRact_{\play}$ &  strategy of player $\play$  in the PageRank game  \\

$\PRactprof$ &  strategy profile  in the PageRank game  \\

$\PRactions_{\play}$ &  strategy set of player $\play$ in the PageRank game  \\

$\PRactions$ &  set of strategy profiles  in the PageRank game  \\

$\cost_{\play}$ &  cost function of player $\play$, defined in \cref{eq:SBPG-cost,eq:cost-general} \\

$\cyclegraph_{\nPlayers}$ &  bi-directional cycle  \\

$\rc_{\actprof}^{\countrc}$ &  cycle induced by $\actprof$, defined in \cref{de:InducedGraph} \\

$\edges$ &  set of edges \\

$\edges_{\actprof}$ &  subset of edges induced by $\actprof$, defined in \cref{de:InducedGraph} \\

$\edges_{\transitvec}$ &  set of weighted edges with weights determined by $\transitvec$ \\

$\graph$ &  directed graph \\

$\graph_{\actprof}$ &  subgraph induced by $\actprof$, defined in \cref{de:InducedGraph} \\

$\graph_{\transitvec}$ &  weighted graph induced by $\transitvec$ \\

$\hamil$ &  Hamiltonian cycle \\

$\completegraph_{\nPlayers}$ &  complete graph  \\

$\countrc(\play)$ &  label of the unicycle that contains player $\play$ \\


$\halfplay$ & $\nPlayers/2-1$  \\

$\numrc(\actprof)$ &  number of unicycles under $\actprof$, defined in \cref{de:InducedGraph} \\

$\numrc^{0}$ & $\min_{\transitvec\in\sactions}\numrc(\transitvec)$  \\

$\nPlayers$ &  number of players \\

$\neighborsout_{\play}$ &  out-neighbors of player $\play$ \\

$\NE(\actions)$ &  Nash equilibria in \ac{DBPG} \\

$\NE(\simplex)$ &  Nash equilibria in \ac{SBPG} \\

$\NE(\sactions)$ &  Nash equilibria in \ac{CBPG} \\

$\prob_{\transitvec}$ &  probability measure induced by $\transitvec$, defined in \eqref{eq:Markov-chain} \\

$\absorb_{\transitvec}^{\countrc}$ &  probability of absorption in  $\rc_{\transitmatrix}^{\countrc}$  \\

$\absorb_{\transitmatrix}^{\playalt\to \countrc}$ &  probability of absorption in  $\rc_{\transitmatrix}^{\countrc}$ starting from $\playalt$, defined in \eqref{eq:absorb} \\

$\PoA$ &  price of anarchy, defined in \cref{eq:PoA-D,eq:PoA-S} \\

$\PoS$ &  price of stability, defined in \cref{eq:PoS-D,eq:PoS-S} \\

$\probalt_{\transitvec}$ &  restriction of $\weight_{\transitvec}$ to $\actions$, defined in \eqref{eq:prob-Q} \\

$\PRtrans$ &  transition matrix in the PageRank game, defined in \eqref{eq:transPR} \\

$\numroottrees_{\nPlayers}$ &  number of rooted trees on $\nPlayers$ vertices \\

$\residual_{\transitvec}$ &  residual transient class, defined in \eqref{eq:RTC} \\

$\act_{\play}$ &  strategy of player $\play$  in \ac{DBPG} \\

$\actprof$ &  strategy profile in \ac{DBPG} \\

$\actions_{\play}$ &  strategy set of player $\play$ in \ac{DBPG} \\

$\actions$ &  set of strategy profiles in \ac{DBPG} \\

$\SC$ &  social cost function, defined in \eqref{eq:SC} \\

$\per$ &  time \\

$\horizon_{\play}$ &  hitting time of $\play$ \\

$\trees(\play)$ &  set of $\play$-rooted spanning trees \\

$\unicycles$ &  set of all $\actprof\in\actions$ inducing a single spanning unicycle \\

$\unicycles_{\play}$ &  spanning unicycles that have $\play$ in the cycle \\

$\unicycles_{\play\playalt}$ &  spanning unicycles in which the edge $(\play,\playalt)$ is part of  the cycle \\

$\vertices$ &  set of vertices \\

$\vertices_{\transitmatrix}^{0}$ &  set of transient vertices \\

$\ringstargraph_{\nPlayers}$ &  wheel graph  \\

$\xomega_{\play}$ & $\probalt_{\transitvec}(\unicycles_{\play})$, defined in \eqref{eq:x-i}  \\

$\markov_{\per}$ &  Markov chain \\

$\dumping$ &  dumping factor  in the PageRank game \\

$\game(\graph,\initial,\actions)$ &   \acl{DBPG}  \\

$\game(\graph,\initial,\simplex)$ &  \acl{SBPG} \\

$\game(\graph,\initial,\sactions)$ & \acl{CBPG} \\

$\bhgame(\graph,\initial,\actions)$ & \acl{DBHG} \\

$\indunic_{\play}(\actprof)$ &  indicator of $\play\in\rc_{\actprof}^{\countrc(\play)}$, defined in \eqref{eq:ind-unicycle} \\

$\buck_{\play,\per}(\actprof)$ &  indicator of the event that player $\play$ has the buck at time $\per$ under profile $\actprof$, defined in \eqref{eq:buck} \\


$\length(\actprof)$ &  length of the cycle if $\actprof\in\actions$ is a spanning unicycle, and $0$ otherwise, defined in \eqref{eq:length} \\

$\initial$ &  initial measure \\

$\initial_{\actprof}^{\countrc}$ &  probability that the buck is  assigned initially to a vertex in $\tc_{\actprof}^{\countrc}$, defined in \eqref{eq:initial-ell} \\

$\initial_{\actprof}^{\countrc}$ &  probability that the buck is absorbed in $\rc_{\actprof}^{\countrc}$, defined in \eqref{eq:total-mass} \\

$\PRvec$ &  fully supported probability measure  in the PageRank game  \\

$\sactions_{\play}$ &  strategy set of player $\play$ in \ac{CBPG} \\

$\sactions$ &  set of strategy profiles in \ac{CBPG} \\

$\sactions^{0}$ &  set of strategy profiles $\transitvec\in\sactions$ with a minimal number of recurrent classes \\

$\transitvec_{\play}$ &  strategy of player $\play$  in \ac{SBPG} \\

$\transitvec$ &  strategy profile in \ac{SBPG} \\

$\transitmatrix$ &  transition matrix \\

$\stationary_{\transitmatrix}$ &  stationary measure induced by $\transitmatrix$ \\

$\stationary_{\transitmatrix}^{\countrc}$ &  stationary measure induced by $\transitmatrix$ on the class $\rc_{\transitmatrix}^{\countrc}$ \\

$\stationary_{\transitvec}^{\countrc}$ &  stationary measure on $\rc_{\transitmatrix}^{\countrc}$  \\

$\simplex$ &  set of strategy profiles in \ac{SBPG} \\

$\simplex_{\play}$ &  strategy set of player $\play$ in \ac{SBPG} \\

$\tree$ &  tree \\

$\skeleton^{\countrc}_{\!\!\transitvec}$ &  skeleton of $\tc^{\countrc}_{\transitvec}$, defined in \eqref{eq:skeleton} \\

$\potential$ &  potential function, defined in \cref{de:ordinal-potential,eq:potentialdef} \\

$\weight_{\transitvec}$ &  weight function, defined in \eqref{eq:weight} \\

$\treeweight_{\play}(\transitvec)$ &  tree-volume of vertex $\play$, defined in \eqref{eq:treeweightdos} \\

$\treeweight_{\vertices}(\transitvec)$ &  tree-volume of the Markov chain, defined in \eqref{eq:treeweightdos} \\

$\abs{\event}$ &  cardinality of set $\event$ \\

$\mathds{1}_{\event}$ &  indicator of set $\event$ \\

\end{longtable}
 \end{APPENDIX}

\end{document}